\newcommand{\menge}[1]{\left\{#1\right\}}
\newcommand{\mengest}[2]{\left\{#1|#2\right\}}
\newcommand{\nor}[1]{\left\|#1\right\|}
\newtheorem{thm}{Theorem}
\newtheorem{lem}[thm]{Lemma}
\newtheorem{cor}[thm]{Corollary}
\newtheorem{algo}{Algorithm}
\title{A Robust AFPTAS for Online Bin Packing with Polynomial Migration}
\author{Klaus Jansen\\
Department of Computer Science\\
Christian-Albrechts-University to Kiel\\\url{ kj@informatik.uni-kiel.de}
 \and 
Kim-Manuel Klein\\
Department of Computer Science\\
Christian-Albrechts-University to Kiel\\
\url{ kmk@informatik.uni-kiel.de}}
\date{}
\begin{document}

\maketitle

\begin{abstract}
In this paper we develop general LP and ILP techniques to find an approximate solution with improved objective
value close to an existing solution.
The task of improving an approximate solution is closely related to a classical theorem of 
Cook et al. \cite{cook1986sensitivity} in the
sensitivity analysis for LPs and ILPs. This result is often applied in designing robust algorithms for online
problems.
We apply our new techniques to the online bin packing problem, where it is allowed to reassign a certain
number of items, measured by the migration factor. The migration factor is defined by
the total size of reassigned items divided by the size of the arriving item.
We obtain a robust asymptotic fully polynomial time approximation scheme (AFPTAS) for the online bin packing
problem with migration 
factor bounded by a polynomial in $\frac{1}{\epsilon}$.
This answers an open question stated by Epstein and Levin \cite{epstein2006robust} in the affirmative.
As a byproduct we prove an approximate variant of the sensitivity
theorem by Cook at el. \cite{cook1986sensitivity} for linear programs.
\end{abstract}

\section{Introduction}
The idea behind robust algorithms is to find solutions of an optimization problem 
that are not only good for a single instance, but also if the instance changes in 
certain ways. Instances change for example due to uncertainty or when new data
arrive. With changing parameters and data, we have the effort to keep as much 
parts of the existing solution as possible, since modifying a solution is often
connected with costs or may even be impossible in practice.
Achieving robustness especially for linear programming (LP) and integer linear programming (ILP)
is thus a big concern and a very interesting research area. Looking at worst case scenarios, how
much do we have to modify a solution if the LP/ILP is changing? There is a result of Cook et al. \cite{cook1986sensitivity}
giving an upper bound for ILPs when changing the right hand side of the ILP.
Many algorithms in the theory of robustness are based on this theorem.

As a concrete application we consider the classical online bin packing problem, 
where items arrive over time and our
objective is to assign these items into as few bins as
possible. The notion of robustness allows to repack a certain number of
already packed items when a new item arrives. 
On the one hand we want to guarantee that we use
only a certain number of additional bins away from the minimum solution
and on the other hand, when a new item arrives, we want to repack as few items as possible.
In the case of offline bin packing 
it is known that unless $\mathcal{P} = \mathcal{NP}$ there is no polynomial time approximation 
algorithm for offline bin packing that produces
a solution better than $\frac{3}{2} \mathit{OPT}$, where $\mathit{OPT}$ is the minimum number
of bins needed. For this reason, the most common way to deal with the inapproximability problem is the 
introduction of the asymptotic approximation ratio. 
The \emph{asymptotic approximation ratio} for an algorithm $A$ is defined to be $\lim_{x \rightarrow \infty} \sup
\{ \frac{A(I)}{\mathit{OPT(I)}} \mid \mathit{OPT(I)} = x \}$.
This leads to the notion of \emph{asymptotic polynomial time approximation schemes (APTAS)}. 
Given an instance of size $n$ and a 
parameter $\epsilon \in (0,1]$, an APTAS has a running time of $\mathit{poly}(n)^{f(\frac{1}{\epsilon})}$ and 
asymptotic approximation
ratio $1+ \epsilon$, where $f$ is an arbitrary function.
An APTAS is called an \emph{asymptotic fully polynomial time approximation scheme (AFPTAS)} if its
running time is polynomial in $n$ and $\frac{1}{\epsilon}$.
The first APTAS for offline bin packing was developed by Fernandez de la Vega \& Lueker \cite{de1981bin},
and Karmakar \& Karp improved this result by giving an AFPTAS \cite{karmarkar1982} (see survey on 
bin packing \cite{coffman1996}).

Since the introduction by Ullman of the classical online bin packing problem \cite{ullman1971}, there has been 
plenty of research 
(see survey \cite {csirik1998}). The best known algorithm has an asymptotic competitive
ratio of $1.58889$ \cite{seiden2002} compared to the optimum in the offline case, 
while the best known lower bound is $1.54037$ \cite{balogh2010}.
Due to the relatively high lower bound of the classical online bin packing problem, there has been effort
to extend the model with the purpose to obtain an improved competitive ratio.
%Galambos and Woeginger for example
%considered a model where there are some active bins that may be repacked during the algorithm \cite{galambos1993}.
%Unfortunately the same lower bound as in the classical online bin packing problem holds for this model.
Gambosi et al. \cite{gambosi2000} presented a model where they allow repacking of items. 
They presented an algorithm 
which achieves ratio $1.33$ by moving at most 7 items to a different bin each time
a new item arrives.
The idea of a dynamic packing was developed pretty early by Coffman, Garey and Johnson \cite{coffman1983}.
They developed and analyzed an algorithm for the dynamic bin packing model when arrival and departure of items
are known in advance.
Ivkovic and Lloyd \cite{ivkovic1998} presented an algorithm for dynamic bin packing having ratio $1.25$.
In their model items and their arrival and departure are not known in advance.
The algorithm requires $\mathcal{O}(\log n)$ shifting moves, where a shifting move consists of moving a 
large item or a bundle of small items from one bin to another. 
In another work Ivkovic and Lloyd \cite{ivkovic1997} gave an algorithm which achieves approximation
ratio $1+ \epsilon$ by using amortized $\mathcal{O}(\log n)$ shifting moves.
Concerning reassignment of jobs in scheduling, Albers and Hellwig \cite{albers2012} 
presented an algorithm for online scheduling with minimizing the makespan on $m$ machines. 
The algorithm has a competitive ratio
of $\alpha_m$, where $\alpha_2 = 4/3$ and for $m \rightarrow \infty$ the competitive ratio $\alpha_m$ converges 
to $\approx 1.4659$. The reassignment of jobs is bounded by $\mathcal{O}(m)$. 
They also proved that it is not possible to obtain an algorithm with better competitive ratio than $\alpha_m$
without reassigning $\Theta (n)$ jobs.

The model we follow is the notion of robustness. Introduced by Sanders et al. \cite{sanders2009} it allows 
repacking of arbitrary items while the number of items that are being repacked is limited. To give a measure 
on how many items are allowed to
be repacked Sanders et al. \cite{sanders2009} defined the \emph{migration factor}.
It is defined by the complete size of all moved items divided by the size of the arriving one.
An (A)PTAS is called \emph{robust} if its migration factor is of the size $f(\frac{1}{\epsilon})$, where $f$ is
an arbitrary function that only depends on $\frac{1}{\epsilon}$.
Since the promising introduction of robustness, several robust algorithms have been developed.
Sanders et al. \cite{sanders2009} found a robust PTAS for the online scheduling problem  
on identical machines, where the goal is to minimize the makespan. The robust PTAS has
constant but exponential migration factor $2^{\mathcal{O}(\frac{1}{\epsilon} \log^2 \frac{1}{\epsilon})}$.
In case of bin packing
Epstein and Levin \cite{epstein2006robust} developed a robust APTAS for the classical bin packing problem
with migration factor $2^{\mathcal{O}(\frac{1}{\epsilon^2} \log \frac{1}{\epsilon})}$ and running time
double exponential in $\frac{1}{\epsilon}$.
In addition they proved that there is no optimal online algorithm with a constant migration factor.
Furthermore, Epstein and Levin \cite{epsteinu} showed that the robust APTAS for bin packing can 
be generalized to packing d-dimensional cubes into a minimum number of unit cubes.
Recently Epstein and Levin \cite{epstein2011} also designed a robust 
algorithm for preemptive online
scheduling of jobs on identical machines, where the corresponding offline problem is polynomial solvable.
They presented an algorithm with migration factor
$1-\frac{1}{m}$ that computes an optimal solution whenever a new item arrives.
Skutella and Verschae \cite{skutella2010} studied the problem of maximizing the minimum load given 
$n$ jobs and $m$ machines. They proved that there is no robust PTAS for this machine covering problem.
On the positive side, they gave a robust PTAS for the machine covering problem in the case that
migrations can be reserved for a later timestep. The algorithm has an amortized migration factor of
$2^{\mathcal{O}(\frac{1}{\epsilon} \log^2 \frac{1}{\epsilon})}$.
%studied the online machine covering problem and the online 
%scheduling problem with minimizing the makespan in a robust setting, where it is allowed to reserve migration
%for a later timestep.
%Furthermore they proved that there is no robust PTAS for the machine covering problem. But for the case that
%migration can be reserved for a later timestep there is a robust PTAS with potential migration of 
%$2^{\frac{1}{\epsilon} \log^2 \frac{1}{\epsilon}}$.
%Since modifying is often very expensive - one might think of an actual manufacturing process that changes for 
%example a packing, exponential migration factor often prohibits practical use of the existing algorithms.
\subsection{Our Results:}
An online algorithm is called \emph{fully robust} if its migration factor is bounded by
$p(\frac{1}{\epsilon})$, where $p$ is a polynomial in $\frac{1}{\epsilon}$.
The purpose of this paper is to give methods to develop fully robust
algorithms. In Section 2 we develop a theorem for a given linear program (LP)
$\min \mengest{\nor{x}_1}{Ax \geq b, x \geq 0 }$.
Given an approximate solution $x'$ 
with value $(1+ \delta)\mathit{LIN}$ (where $\mathit{LIN}$ is the minimum objective value of the LP)
and a parameter $\alpha \in (0,\delta \mathit{LIN} ]$, we prove the existence of an improved solution $x''$
with value $(1+ \delta)\mathit{LIN} - \alpha$ and distance $\nor{x'' - x'}_1 \leq \alpha(2/ \delta + 2)$.
In addition, for a given fractional solution $x'$ and corresponding integral solution $y'$, the existence
of an improved integral solution $y''$ with $\nor{y''-y'}_1 = \mathcal{O}(\frac{\alpha+m}{\delta})$ is shown
(where $m$ is the number of rows of $A$). Since both results are constructive, we propose also algorithms
to compute such improved solutions.
Previous robust online algorithms require an optimum solution of the corresponding ILP and use a sensitivity theorem
by Cook et al. \cite{cook1986sensitivity}.
This results in an exponential migration factor in $\frac{1}{\epsilon}$
(\cite{epstein2006robust, epsteinu, skutella2010, sanders2009}).
In contrast to this we consider approximate solutions of the corresponding LP relaxations and are able
to use the techniques above to improve the fractional and integral solutions.
Furthermore we also prove an approximate version of a sensitivity theorem for LPs with modified right
hand side $b$ and $b'$.
During the online algorithm the number of non-zero variables increases from step to step and would result
in a large additive term. To avoid this we present algorithms in Section 3 to control the number
of non-zero variables of the LP and ILP solutions. We can bound the number of non-zero variables and the
additive term by $\mathcal{O}(\epsilon \mathit{LIN}) + \mathcal{O}(\frac{1}{\epsilon^2})$.
In Section 4 we present the fully robust AFPTAS for the robust bin packing problem. We use a modified
version of the clever rounding techniques of Epstein and Levin \cite{epstein2006robust}.
This rounding technique is used to round the incoming items dynamically and control the number of item sizes.
One difficulty is that we use approximate solutions of the LP.
During the online algorithm items are rounded to different values and are shifted across different 
rounding groups. We show how to embed the rounded instance into another rounded instance that fulfills
several invariants.
%But based on all these developed techniques (bounding the distance between integral approximate solution, 
%reduction of the number of non-zero components and the analysis of the dynamic rounding) 
By combining the dynanic rounding and the algorithm to get improved solutions of the LP and ILP, we are able
to obtain a fully robust AFPTAS for the online bin packing problem.
The algorithm has a migration factor of 
$\mathcal{O}(1/\epsilon^4)$ (or $\mathcal{O}(1/\epsilon^3)$ if the size of the arriving item is $\Omega(1)$) and 
running time polynomial in $\frac{1}{\epsilon}$ and $t$, where $t$ is the number of arrived items. 
This resolves an open question of Epstein and Levin \cite{epstein2006robust}. We believe that our techniques
can be used for other online problems like 2D strip packing, scheduling moldable tasks, resource constrained
scheduling and multi-commodity flow problems to obtain online algorithms with low migration factors.
\section{Robustness of approximate LPs}
We consider a matrix $A \in \mathbb{R}_{\geq 0}^{m\times n}$, a vector $b \in \mathbb{R}_{\geq0}^{m}$ and a cost vector
$c\in \mathbb{R}_{\geq0}^{n}$.
The goal in a \emph{linear program} (LP) is to find a $x\geq0$ with $Ax \geq b$ such that the
\emph{objective value} $c^T x$ is minimal.
We say $x^\mathit{OPT}$ is an optimal solution if $c^T x^\mathit{OPT} = \min \menge{c^T x | Ax \geq b, x \geq 0 }$ 
and we define $\mathit{LIN} = c^T x^\mathit{OPT}$. In general we suppose that the objective function of a solution
is positive and hence $\mathit{LIN} >0$.
We say $x'$ is an approximate solution with approximation ratio $(1+\delta)$ for some 
$\delta \in (0,1]$ if $\nor{x'}_1 \leq (1+\delta)\mathit{LIN}$. For the most part of the paper we will assume that
$c^T = (1,1, \ldots ,1)$ and therefore $c^T x^{\mathit{OPT}} = \nor{x^{\mathit{OPT}}}_1 = \mathit{LIN}$.
The following theorem is central. Given an approximate solution $x'$, we want to improve its
approximation by some constant. But to achieve robustness we have to maintain most
parts of $x'$. We show that by changing $x'$ by size of $\mathcal{O}(\frac{\alpha}{\delta})$, 
we can improve the approximation by a constant $\alpha$.
\begin{thm}
  \label{thm1}
  Consider the LP $\min \menge{c^T x | Ax \geq b, x
  \geq 0 }$ and an approximate solution $x'$ with $c^T x'
  = (1 + \delta) \mathit{LIN}$ for some $\delta >0$.  For every positive $\alpha \leq \delta \mathit{LIN}$ there 
  exists a solution $x''$ with objective value of at most $c^T x'' \leq (1 +
  \delta) \mathit{LIN}- \alpha$ and distance $\nor{x' - x''}_1 \leq \alpha (1/ \delta +
  1) \frac{\nor{x'}_1+ \nor{x^{\mathit{OPT}}}_1}{c^T x'}$.
  If $c^T = (1,1, \ldots ,1)$ then $\nor{x' - x''}_1 \leq 2 \alpha (1/ \delta + 1)$.
\end{thm}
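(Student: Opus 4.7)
The plan is to take a convex combination of the given approximate solution $x'$ and the optimal solution $x^{\mathit{OPT}}$. Since the feasible region $\{x \geq 0 : Ax \geq b\}$ is convex, the point
\begin{equation*}
x'' := (1-\lambda)\,x' + \lambda\, x^{\mathit{OPT}}
\end{equation*}
is feasible for every $\lambda \in [0,1]$, so we only need to choose $\lambda$ to balance the objective improvement against the movement from $x'$.

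A direct computation gives $c^T x'' = (1+\delta)\mathit{LIN} - \lambda\,\delta\,\mathit{LIN}$, so in order to gain an additive amount $\alpha$ it suffices to take $\lambda = \alpha/(\delta\,\mathit{LIN})$. The assumption $\alpha \leq \delta\,\mathit{LIN}$ ensures $\lambda \in (0,1]$, so $x''$ is well defined and satisfies $c^T x'' \leq (1+\delta)\mathit{LIN} - \alpha$ as required.

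For the distance bound I would use the triangle inequality:
\begin{equation*}
\nor{x' - x''}_1 = \lambda\,\nor{x^{\mathit{OPT}} - x'}_1 \leq \lambda\bigl(\nor{x'}_1 + \nor{x^{\mathit{OPT}}}_1\bigr) = \frac{\alpha}{\delta\,\mathit{LIN}}\bigl(\nor{x'}_1 + \nor{x^{\mathit{OPT}}}_1\bigr).
\end{equation*}
The prefactor can be rewritten using $c^T x' = (1+\delta)\mathit{LIN}$ as $\frac{1}{\delta\,\mathit{LIN}} = \frac{1+\delta}{\delta\,c^T x'} = \frac{1/\delta + 1}{c^T x'}$, which is exactly the constant in the claimed bound.

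For the special case $c^T = (1,\ldots,1)$, one has $\nor{x'}_1 = (1+\delta)\mathit{LIN}$ and $\nor{x^{\mathit{OPT}}}_1 = \mathit{LIN}$, so the ratio $(\nor{x'}_1 + \nor{x^{\mathit{OPT}}}_1)/c^T x' = (2+\delta)/(1+\delta) \leq 2$, which yields $\nor{x' - x''}_1 \leq 2\alpha(1/\delta + 1)$. There is no real obstacle here: the conceptual content is simply that moving a $\lambda$-fraction toward $x^{\mathit{OPT}}$ buys a $\lambda\,\delta\,\mathit{LIN}$ improvement in objective at a cost bounded by $\lambda$ times the combined $\ell_1$-sizes; the only care required is the algebraic rewriting that expresses the resulting constant purely in terms of $\alpha$, $\delta$, $\nor{x'}_1$, $\nor{x^{\mathit{OPT}}}_1$ and $c^T x'$.
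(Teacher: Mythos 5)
Your proof is correct and is essentially the paper's argument: the paper also takes the convex combination $x'' = (1-\lambda)x' + \lambda x^{\mathit{OPT}}$ with the same coefficient $\lambda = \frac{\alpha(1/\delta+1)}{c^T x'} = \frac{\alpha}{\delta\,\mathit{LIN}}$, merely packaging the feasibility, distance, and objective claims as constraints of an auxiliary LP before verifying them for this point. Your direct computation of the objective gain and the triangle-inequality bound on $\nor{x'-x''}_1$ matches the paper's reasoning step for step.
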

\begin{proof}
   We prove feasibility of the following \ref{form:lp1}. 
  \begin{align*}
    \label{form:lp1}
    \tag{LP 1}
      Ax &\geq b\\
      x &\geq 0\\
      x &\geq x' -\frac{\alpha(1/\delta +1)}{c^T x'}x'\\
      x &\leq x' +\frac{\alpha(1/\delta +1)}{c^T x'}x^{\mathit{OPT}}\\
      c^T x & \leq (1+ \delta) \mathit{LIN} - \alpha
  \end{align*}
  The assumption $\alpha \leq \delta \mathit{LIN}$ implies $c^T x'= (1+\delta) \mathit{LIN} \geq \alpha(1/ \delta +1)$
   and hence $\frac{\alpha(1/\delta +1)}{c^T x'} \leq 1$.
  Suppose that \ref{form:lp1} is feasible and has a solution $x''$. Due to constraints 3 and 4 the distance between 
 $x''$ and $x'$ can be bounded. Components $x''_i$ with $x'_i > 0$ may be smaller compared to $x'$ by
  $\alpha(1/ \delta +1) \frac{x'_i}{c^T x'}$, while components $x''_i$ with
  $x_i^{\mathit{OPT}} >0$ may be larger than $x'_i$ by $\alpha(1/ \delta + 1)
  \frac{x^{\mathit{OPT}}_i} {{c^T x'}}$.
  In the worst case, $x'$ and $x^{\mathit{OPT}}$ have no common non-zero entries 
  and hence, $\nor{x' - x''}_1 \leq \alpha(1/ \delta +1) (\sum_i \frac{x'_i} {{c^T x'}} + 
  \sum_i \frac{x^{\mathit{OPT}}_i} {{c^T x'}}) = \alpha (1/ \delta +
  1) \frac{\nor{x'}_1+ \nor{x^{\mathit{OPT}}}_1}{c^T x'}$.
	If $c^T = (1,1, \ldots ,1)$ then $\frac{\nor{x'}_1}{c^T x'} = 1$ and $\frac{\nor{x^{\mathit{OPT}}}_1}{c^T x'}=
	\frac{1}{(1+\delta)} < 1$. Therefore  $\alpha (1/ \delta + 1) \frac{\nor{x'}_1+ \nor{x^{\mathit{OPT}}}_1}{c^T x'}
	< 2 \alpha (1/ \delta + 1)$.
  It remains to prove feasibility of \ref{form:lp1}.
  We construct a solution $x''$ by 
  $x'' = (1- \frac{\alpha(1/\delta +1)}{c^T x'})x' + \frac{\alpha(1/\delta +1)}{c^T x'}x^{\mathit{OPT}}$.
	We prove that each constraint of \ref{form:lp1}	is satisfied for $x''$. Note that $x'' \geq 0$ since
	$\frac{\alpha(1/\delta +1)}{c^T x'} \leq 1$.
	Constraint 3 is fulfilled since $x'' =  x' -\frac{\alpha(1/\delta +1)}{c^T x'}x' + 
	\frac{\alpha(1/\delta +1)}{c^T x'}x^{\mathit{OPT}} \geq x' -\frac{\alpha(1/\delta +1)}{c^T x'}x'$ and
	constraint 4 is fulfilled since $x'' = x' -\frac{\alpha(1/\delta +1)}{c^T x'}x' + 
	\frac{\alpha(1/\delta +1)}{c^T x'}x^{\mathit{OPT}} \leq x' + \frac{\alpha(1/\delta +1)}{c^T x'}x^{\mathit{OPT}}$.
  Feasibility for $x''$ follows from 
  $A x'' = Ax'(1 - \frac{\alpha(1/\delta +1)}{c^T x'}) + Ax^{\mathit{OPT}}\frac{\alpha(1/\delta +1)}{c^T x'}
  \stackrel{c^T x' \geq \alpha(1/\delta +1)}{\geq} b(1- \frac{\alpha(1/\delta +1)}{c^T x'}) + b \frac{\alpha(1/\delta +1)}{c^T x'} =b$.
  The objective value of $x''$ is bounded by
  $c^T x'' = c^T x' - \frac{\alpha(1/\delta +1)}{c^T x'}(c^T x' - c^T x^{\mathit{OPT}}) = 
  c^T x' - \alpha(1/\delta +1) + \frac{\alpha(1/\delta +1)}{(1+\delta) \mathit{LIN}} \mathit{LIN}
  = c^T x' - \alpha(1/\delta +1) + \alpha (1/\delta)
  = c^T x' - \alpha$.
\end{proof}
From here on and for the rest of the paper we suppose that $c^T = (1,1, \ldots ,1)$.\\
{\bf Remark 1:\\}
Suppose $x'$ has approximation ratio $\nor{x'}_1 = (1+\delta') \mathit{LIN}$ for some $\delta' > 0$.
By Theorem \ref{thm1} the following LP is feasible if $\alpha' \leq \delta' \mathit{LIN}$.
\begin{align*}
      Ax &\geq b\\
      x &\geq 0\\
      x &\geq x' - \alpha'(1/\delta' +1) \frac{x'}{\nor{x'}_1}\\
      x &\leq x' + \alpha'(1/\delta' +1) \frac{x^{\mathit{OPT}}}{\nor{x'}_1}\\
      \sum x_i &\leq (1+ \delta') \mathit{LIN} - \alpha'
\end{align*}
Setting $\alpha' = \alpha \frac{(1/\delta+1)}{(1/\delta'+1)}$ for some $\delta >0$ yields feasibility for the following
LP assuming $\nor{x'}_1 = (1+ \delta')\mathit{LIN}$.
\begin{align*}
		\label{form:lpr}
    \tag{LP *}
      Ax &\geq b\\
      x &\geq 0\\
      x &\geq x' - \alpha(1/\delta+1) \frac{x'}{\nor{x'}_1}\\
      x &\leq x' + \alpha(1/\delta +1) \frac{x^{\mathit{OPT}}}{\nor{x'}_1}\\
      \sum x_i &\leq (1+ \delta') \mathit{LIN} - \alpha'
\end{align*}
Here, we use $\alpha' (1/\delta'+1)= \alpha \frac{1/\delta+1}{1/\delta'+1} (1/\delta'+1)
= \alpha(1/\delta +1)$. The condition that $\alpha' \leq \delta' \mathit{LIN}$ is equivalent to the condition that
$\nor{x'}_1 \geq \alpha (1/\delta +1)$ since $\nor{x'}_1 = (1+ \delta')\mathit{LIN}$ and $\alpha (1/\delta +1) =
\alpha' (1/ \delta' +1)$. \qed

{\bf Remark 2:\\}
In many cases, we do not know the exact approximation ratio $\nor{x'}_1 = (1+ \delta') \mathit{LIN}$ but the 
approximation
guarantee $\nor{x'}_1 \leq (1+ \delta) \mathit{LIN}$ for some $\delta \geq \delta'$.
Assuming $\nor{x'}_1 \geq \alpha (1/\delta +1)$ we can use feasibility of \ref{form:lpr} to prove the existence of
a solution $x''$ with $\nor{x''}_1 \leq (1+ \delta) \mathit{LIN} - \alpha$ and $\nor{x''-x'}_1 \leq 2\alpha(1/\delta +1)$.
The distance $\nor{x''-x'}_1 \leq 2\alpha(1/\delta +1)$ follows again easily from constraints 3 and 4 of \ref{form:lpr}.
We derive the aimed objective value $\nor{x''}_1$ from the last constraint of \ref{form:lpr}:
$\nor{x''}_1 \leq (1+ \delta')\mathit{LIN} - 
\alpha \frac{1/\delta+1}{1/\delta'+1} = (1+ \delta)\mathit{LIN} - (\delta - \delta')\mathit{LIN} - 
\alpha(1/\delta+1) \frac{1}{1/\delta'+1} \stackrel{(1+ \delta')\mathit{LIN} \geq  \alpha(1/ \delta +1) }{\leq} 
	(1+ \delta) \mathit{LIN} -
       \alpha(1/\delta +1)\frac{\delta - \delta'}{1+ \delta'} - \alpha(1/\delta+1) \frac{\delta'}{1+\delta'} 
       =(1+ \delta) \mathit{LIN} -  \alpha(1/\delta +1)\frac{\delta}{1+ \delta'}
       = (1+ \delta) \mathit{LIN} -  \alpha \frac{1+ \delta}{1+ \delta'} \stackrel{\delta'\leq \delta}{\leq}
       (1+ \delta) \mathit{LIN} -  \alpha$.
This proves, that it suffices to know an upper bound for the approximation to obtain an improved solution $x''$.\qed

Of course, one major application of Theorem \ref{thm1} is to improve the approximation. 
But we can also apply Theorem \ref{thm1} to obtain a variant of the theorem of
Cook et al. \cite{cook1986sensitivity} for the sensitivity analysis of
an LP. Consider the following problem: Let $x'$ be a solution of $\min
\mengest{ \nor{x}_1}{Ax \geq b', x \geq 0 }$.  Find a solution $x''$ for $\mathit{LIN}_2 = 
\min \mengest{ \nor{x}_1}{Ax \geq b'', x \geq 0 }$ with changed right hand side such that
$\nor{x''-x'}_1$ is small. A theorem of Cook
et al. \cite{cook1986sensitivity} states that there exists a $x''$
satisfying the LP and $\nor{x''-x'}_\infty \leq n \Delta
\nor{b''-b'}_\infty$, where $\Delta$ is the largest subdeterminant of
$A$. This result is not satisfying if $\Delta$ and $n$ are too big,
especially if they are exponential in $m$. By letting loose of optimal solutions we obtain
a corollary that is much more appropriate to derive fully robust algorithms.
In contrary to the theorem of Cook et al. \cite{cook1986sensitivity} the amount of change in the solution does not depend on the determinant 
nor on the dimensions
of $A$ but on the approximation ratio of the solution.

\begin{cor}
 Consider the linear program $LP^1$ defined by $\min \menge{\nor{x}_1 | Ax \geq b', x
  \geq 0 }$ and an approximate solution $x'$ with $\nor{x'}_1
  \leq (1 + \delta) \mathit{LIN}_1$ ($\delta > 0$) and $\nor{x'}_1 \geq \alpha (1/ \delta +1)$. There exists a solution
  $x''$ of $LP^2$ defined by $\min \menge{\nor{x}_1 | Ax \geq b'', x  \geq 0 }$ with $\nor{x''}_1
  \leq (1 + \delta) \mathit{LIN}_2$ such that the distance $\nor{x''-x'}_1 \leq (\frac{2}{\delta} +7)\nor{\frac{b''-b'}{c}}_1$
  where $c_i =\max_j A_{ij}$ and $\frac{b''-b'}{c}$ is a vector having components $\frac{b''_i-b'_i}{c_i}$.
\end{cor}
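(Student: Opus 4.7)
The plan is to (i) patch $x'$ by a small vector $y$ to obtain $\tilde{x}=x'+y$ feasible for $LP^2$, (ii) apply Theorem~\ref{thm1} inside $LP^2$ to shrink $\tilde{x}$ back to a $(1+\delta)$-approximation $x''$, and (iii) combine the two resulting distance estimates via the triangle inequality. A symmetric patching argument will in addition provide $|\mathit{LIN}_1-\mathit{LIN}_2| \leq \nor{(b''-b')/c}_1$, which is what will control Theorem~\ref{thm1}'s shrinkage budget.

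For step (i), for each row $i$ with $b''_i>b'_i$ I would pick an index $j_i^*$ achieving $A_{i,j_i^*}=c_i$ and define $y := \sum_{i:\,b''_i>b'_i}\frac{b''_i-b'_i}{c_i}\,e_{j_i^*}$. Then $Ay \geq (b''-b')^+$ componentwise and $\nor{y}_1 \leq \nor{(b''-b')/c}_1$, so $\tilde{x} := x'+y\geq 0$ is feasible for $LP^2$ with $\nor{\tilde{x}}_1 \leq (1+\delta)\mathit{LIN}_1 + \nor{y}_1$. Running the analogous construction in reverse produces a patch $y^-$ (covering rows with $b'_i>b''_i$) satisfying $\nor{y}_1 + \nor{y^-}_1 = \nor{(b''-b')/c}_1$; adding $y^-$ to the optimum of $LP^2$ certifies $\mathit{LIN}_1 \leq \mathit{LIN}_2 + \nor{y^-}_1$.

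For step (ii), if $\nor{\tilde{x}}_1 \leq (1+\delta)\mathit{LIN}_2$ I set $x'':=\tilde{x}$ and the triangle inequality already gives the claim. Otherwise I write $\nor{\tilde{x}}_1=(1+\delta')\mathit{LIN}_2$ with $\delta'>\delta$ and apply Theorem~\ref{thm1} to $\tilde{x}$ in $LP^2$ with $\alpha_0 := \nor{\tilde{x}}_1-(1+\delta)\mathit{LIN}_2 \leq \delta'\mathit{LIN}_2$, obtaining $x''$ with $\nor{x''}_1 \leq (1+\delta)\mathit{LIN}_2$ and $\nor{x''-\tilde{x}}_1 \leq 2\alpha_0(1/\delta'+1)$. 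From step (i), $\alpha_0 \leq (1+\delta)(\mathit{LIN}_1-\mathit{LIN}_2)_+ + \nor{y}_1 \leq (1+\delta)\nor{(b''-b')/c}_1$, while $\delta'>\delta$ gives $1/\delta'+1<1/\delta+1$, and hence $\nor{x''-\tilde{x}}_1 \leq \frac{2(1+\delta)^2}{\delta}\nor{(b''-b')/c}_1 = (2/\delta+4+2\delta)\nor{(b''-b')/c}_1$. Adding $\nor{\tilde{x}-x'}_1 = \nor{y}_1 \leq \nor{(b''-b')/c}_1$ through the triangle inequality then yields the announced $(2/\delta+7)\nor{(b''-b')/c}_1$, in the relevant regime $\delta \leq 1$.

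The main obstacle is that the actual approximation ratio $\delta'$ of $\tilde{x}$ for $LP^2$ is not known; the saving observation is that the mere case assumption $\delta' > \delta$ already allows one to replace the unruly factor $1/\delta' + 1$ by the target $1/\delta + 1$, and that the symmetric patching of step~(i) is what absorbs the otherwise dangerous term $\mathit{LIN}_1-\mathit{LIN}_2$ into $\nor{(b''-b')/c}_1$ instead of letting it leak into the final bound.
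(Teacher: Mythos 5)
Your proof is correct and takes essentially the same route as the paper: the identical per-row patch of size $\frac{b''_i-b'_i}{c_i}$ placed on a column attaining $c_i$ is used both to repair feasibility and (symmetrically) to bound the drop $\mathit{LIN}_1-\mathit{LIN}_2$, after which the Theorem~\ref{thm1} improvement machinery and the triangle inequality give the $(\frac{2}{\delta}+7)$ constant; that you patch first and invoke Theorem~\ref{thm1} with the exact ratio $\delta'$, whereas the paper improves via Remark~2 and patches afterwards, is only a cosmetic reordering. Your closing observation that the constant requires $\delta\leq 1$ applies equally to the paper's own computation ($(2/\delta+2)(1+\delta)\leq 2/\delta+6$), consistent with the paper's standing assumption $\delta\in(0,1]$.
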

\begin{proof}
	Suppose there is only one index $i$ where $b'_i \neq b''_i$. Consider the 2 cases:\\
	{\bf Case 1:} $b'_i < b''_i$ We increase $x'_j$ by $\frac{b''_i -b'_i}{c_i}$, where $j$ is the index with the maximum entry in row $i$.
	This way we make sure that the so modified $x'$ covers the larger $b''_i$ since now $(Ax')_i = b''_i$.
	Since we simply increase $x'$ to cover the larger $b''$ we may worsen the approximation by an additive term of
	at most $\frac{b''_i -b'_i}{c_i}$.\\
	{\bf Case 2:} $b'_i \geq b''_i$. In this case we do not modify component $i$ of $x'$, but since a smaller $b''_i$ has to
	be covered the optimal value of a solution may decrease. 
	Let $\mathit{LIN}_1$ be the optimal value of $LP^1$ and let $\mathit{LIN}_2$ be the optimal value of $LP^2$.
	The inequality $\mathit{LIN}_2 < \mathit{LIN}_1 - \frac{b''_i-b'_i}{c_i}$ leads to a contradiction since
	we can increase an optimal solution $x^{\mathit{LIN}_1}$ of $LP^1$ to cover the lager $b'$ like 
	we did in case 1. Modifying 
	$x^{\mathit{LIN}_1}$ this way would lead to a smaller optimal solution.
	Therefore the optimal solution of $LP^1$ can not decline by more than $\frac{b''_i -b'_i}{c_i}$.
	Using $x'$ as an approximate solution for $LP^2$ yields therefore $\nor{x'}_1 \leq (1 + \delta) (\mathit{LIN}_2 
	+ \frac{b''_i -b'_i}{c_i})
	= (1 + \delta) \mathit{LIN}_2 + (1 + \delta)\frac{b''_i -b'_i}{c_i}$.
	
	Iterating over all components $1 \leq i \leq m$ and changing the solution according to the cases would result 
	in an approximate solution of at most
	$(1 + \delta) \mathit{LIN} + (1 + \delta) \nor{\frac{b''-b'}{c}}_1$.
	Using Remark 2 with $\alpha = (1 + \delta)\nor{\frac{b''-b'}{c}}_1$
	guarantees the existence of a solution $\hat{x}$ for $LP^2$ having value $(1 + \delta)\mathit{LIN} - \alpha$ and 
	$\nor{\hat{x}-x'}_1 \leq (2/ \delta +  2) \alpha = (2/ \delta +  2) (1 + \delta)\nor{\frac{b''-b'}{c}}_1
	\leq (2/ \delta +  6)\nor{\frac{b''-b'}{c}}_1$.
	Modifying $\hat{x}$ according to the cases yields therefore a solution $x''$ with $\nor{x''}_1 \leq 
	(1 + \delta) \mathit{LIN}_2$. For the distance between $x''$ and $x'$ we get $\nor{x''-x'}_1 \leq \nor{\hat{x}-x'}_1 + 
	\nor{\frac{b''-b'}{c}}_1 \leq (\frac{2}{\delta} +7)\nor{\frac{b''-b'}{c}}_1$.
\end{proof}
Note that if $A$ is an integral matrix without zero rows, each component $c_i$ is at least $1$.

\section{Algorithmic Use}
Let $x'$ be an approximate solution of the LP with $\nor{x'}_1 \leq (1+ \delta) \mathit{LIN}$. 
In Theorem \ref{thm1}, we have proven the existence of a solution $x''$ near $x'$ with 
$\nor{x''}_1 \leq (1+ \delta) \mathit{LIN} - \alpha$.
We are looking now for algorithmic ways to calculate this improved solution $x''$.
We present two algorithms that basically rely on solving an LP.
According to \ref{form:lpr} 
we split $x'$ into a fixed part $x^{fix}$ and a variable part $x^{var}$.
The variable part is defined according to \ref{form:lpr} by $x^{var} =
\frac{\alpha(1 / \delta +1)}{\nor{x'}}x'$ and the fixed part by $x^{fix} = x' -
x^{var}$. By assigning the variable part in a better way, Theorem
\ref{thm1} and Remark 2 state under the assumption that $\nor{x'}_1 \geq \alpha(1/ \delta +1)$ that 
we can improve the objective value by $\alpha$. 
We denote with $b^{var} = b - A(x^{fix})$ the part which has to be reassigned. The
algorithm works as follows:
\begin{algo} \label{alg1}
\ 
  \begin{enumerate}
\item Set $x^{var} := \frac{\alpha(1 / \delta +1)}{\nor{x'}}x'$, $x^{fix} := x' - x^{var}$
  and $b^{var} := b - A(x^{fix})$
\item Solve the LP $\hat{x} = \min \mengest{\nor{x}_1}{Ax \geq b^{var}, x
    \geq 0 }$
\item Generate a new solution $x'' = x^{fix} + \hat{x}$
  \end{enumerate}
\end{algo}
If $\hat{x}$ is a basic feasible solution, compared to $x'$, our new solution $x''$ has up to $m$ additional
non-zero components.
\begin{thm}
	Given solution $x'$ with $\nor{x'}_1 \leq (1+\delta)\mathit{LIN}$ and $\nor{x'}_1 \geq \alpha(1/ \delta +1)$.
	Algorithm \ref{alg1} returns a feasible solution $x''$ with $\nor{x''}_1 \leq (1+\delta)\mathit{LIN} - \alpha$ and the 
	distance between $x'$ and $x''$ is $\nor{x''-x'}_1 \leq 2 \alpha(1/ \delta +1)$.
\end{thm}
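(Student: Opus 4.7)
The plan is to verify each of the three claims---feasibility of $x''$, the objective bound $\|x''\|_1 \leq (1+\delta)\mathit{LIN} - \alpha$, and the distance bound $\|x'' - x'\|_1 \leq 2\alpha(1/\delta + 1)$---by comparing the algorithm's output $x'' = x^{fix} + \hat{x}$ against the explicit convex-combination witness from the proof of Theorem~\ref{thm1}. Set $\lambda := \alpha(1/\delta + 1)/\|x'\|_1$, so that $x^{var} = \lambda x'$ and $x^{fix} = (1-\lambda)x'$. The hypothesis $\|x'\|_1 \geq \alpha(1/\delta + 1)$ gives $\lambda \leq 1$, which ensures $x^{fix} \geq 0$; feasibility of $x''$ then follows because $x^{fix}, \hat{x} \geq 0$ and $Ax'' = Ax^{fix} + A\hat{x} \geq Ax^{fix} + b^{var} = b$ by construction of $b^{var}$.

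For the objective bound I would use as a comparison vector the witness $x^\star := (1-\lambda)x' + \lambda x^{\mathit{OPT}}$ from the proof of Theorem~\ref{thm1}. The crucial observation is that $x^\star - x^{fix} = \lambda x^{\mathit{OPT}}$ is nonnegative and satisfies $A(x^\star - x^{fix}) = \lambda A x^{\mathit{OPT}} \geq \lambda b \geq b^{var}$ (the last inequality uses $Ax^{fix} \geq (1-\lambda)b$, which holds since $x'$ is feasible), so it is a feasible point for the LP solved in step~2. Since $\hat{x}$ is an optimal solution of that LP, this yields $\|\hat{x}\|_1 \leq \lambda \|x^{\mathit{OPT}}\|_1 = \lambda\, \mathit{LIN}$. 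Combining with $\|x^{fix}\|_1 = (1-\lambda)\|x'\|_1$ and invoking Remark~2 to cover the case where the true approximation ratio is strictly better than the known upper bound $\delta$ then delivers $\|x''\|_1 \leq (1+\delta)\mathit{LIN} - \alpha$; this is essentially the same convex-combination computation already carried out at the end of the proof of Theorem~\ref{thm1}.

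For the distance bound, observe that $x'' - x' = (x^{fix} + \hat{x}) - (x^{fix} + x^{var}) = \hat{x} - x^{var}$. By construction $\|x^{var}\|_1 = \lambda \|x'\|_1 = \alpha(1/\delta + 1)$, and from the previous step $\|\hat{x}\|_1 \leq \lambda\, \mathit{LIN} \leq \alpha(1/\delta + 1)$, where the last inequality uses $\mathit{LIN} \leq \|x'\|_1$. The triangle inequality then closes the argument. The one delicate point I anticipate is the choice of comparison vector: only the specific convex combination $x^\star$ (rather than an arbitrary witness of \ref{form:lpr}) produces the tight identity $\|x^\star - x^{fix}\|_1 = \lambda\, \mathit{LIN}$, which is what makes $\|\hat{x}\|_1$ small enough for the factor $2$---and not a larger constant---to come out in the final distance bound.
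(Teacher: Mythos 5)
Your proposal is correct and follows essentially the same route as the paper: feasibility via $Ax''\geq Ax^{fix}+b^{var}=b$, the objective bound via the convex-combination witness behind Theorem~\ref{thm1}/Remark~2 together with optimality of $\hat{x}$ for the reduced LP, and the distance bound via $\nor{x''-x'}_1\leq\nor{\hat{x}}_1+\nor{x^{var}}_1$. The only cosmetic difference is that you instantiate the witness explicitly as $\lambda x^{\mathit{OPT}}$ to get $\nor{\hat{x}}_1\leq\lambda\,\mathit{LIN}$, whereas the paper invokes Remark~2 as a black box and (implicitly) bounds $\nor{\hat{x}}_1$ by $\nor{x^{var}}_1$, which also suffices since $x^{var}$ itself is feasible for the step-2 LP.
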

\begin{proof}
Solution $x''$ is feasible because $A(x'') = A(x^{fix} + \hat{x}) = A(x^{fix}) + A(\hat{x}) \geq A(x^{fix}) + b^{var} = b$.
For the approximation we use Remark 2, which guarantees the existence of a solution 
with objective value 
$\leq (1+\delta)\mathit{LIN} - \alpha$ by leaving the part $x^{fix}$ of the solution $x'$ unchanged.
The unchanged part $x^{fix}$ is defined by using the lower bounds of \ref{form:lpr}, 
$x'' \geq x'- \alpha (1/ \delta +1) \frac{x'}{\nor{x'}_1}$. Placing
$\hat{x}$ optimally leads therefore to the aimed approximation. Since the $\nor{x^{var}}_1$ and $\nor{\hat{x}}_1$ are bounded by
$\alpha(1/ \delta +1)$ the worst possible distance between $x'$ and $x''$ is $\nor{x''-x'}_1 = 
\nor{(x^{fix} + \hat{x})- (x^{fix} + x^{var})}_1 = \nor{\hat{x}-x^{var}}_1 \leq \nor{\hat{x}}_1 + \nor{x^{var}}_1
\leq 2 \alpha(1/ \delta +1)$.
\end{proof}
In Algorithm \ref{alg1} we use an optimal LP solver as a subroutine. In many cases, like for example bin packing,
the corresponding LP relaxation is hard to solve and
the running time
for computing an optimal solution is very high. For the following algorithm it is sufficient to compute the LP
approximately, which in general can be performed more efficiently. We assume that $\nor{x'}_1 \geq 2 \alpha 
(1/ \delta +1)$
because the double amount has to be reassigned to achieve the same improvement in the approximation as in Algorithm \ref{alg1}.
\begin{algo}\label{alg2}
\ 
\begin{enumerate}
\item Set $x^{var} := \frac{2 \alpha(1 / \delta +1)}{\nor{x'}}x'$, $x^{fix} := x' - x^{var}$
  and $b^{var} := b - A(x^{fix})$
\item Solve $\hat{x} = \min \mengest{\nor{x}_1}{Ax \geq b^{var}, x \geq 0 }$
  approximately with ratio $(1+\delta / 2)$
\item If $\nor{x^{fix} + \hat{x}}_1 < \nor{x'}_1$ set $x'' := x^{fix} + \hat{x}$
else $x'' = x'$.
\end{enumerate}
\end{algo}
\begin{thm} \label{thm2}
	Let $x'$ be a solution with $\nor{x'}_1 \leq (1+\delta) \mathit{LIN}$ and $\nor{x'}_1 \geq 2 \alpha(1/ \delta +1)$.
  Then Algorithm \ref{alg2} returns a feasible solution $x''$ with approximation guarantee $(1+ \delta) \mathit{LIN} - \alpha$ 
  and $\nor{x''-x'}_1 \leq 4 \alpha (1/ \delta +1)$.
\end{thm}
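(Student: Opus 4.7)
The proof splits naturally into feasibility, the distance bound, and the approximation bound; the last is the real work. Feasibility is immediate: in the else-branch $x'' = x'$ inherits feasibility from $x'$, while in the if-branch $Ax'' = Ax^{fix} + A\hat{x} \geq Ax^{fix} + b^{var} = b$. The distance bound follows directly from the if-condition: since $\nor{x^{fix}}_1 = \nor{x'}_1 - \nor{x^{var}}_1$, the condition $\nor{x^{fix}+\hat{x}}_1 < \nor{x'}_1$ is precisely $\nor{\hat{x}}_1 < \nor{x^{var}}_1 = 2\alpha(1/\delta+1)$, so the triangle inequality gives $\nor{x''-x'}_1 = \nor{\hat{x}-x^{var}}_1 \leq \nor{\hat{x}}_1 + \nor{x^{var}}_1 < 4\alpha(1/\delta+1)$; the else-branch trivially has distance zero.

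For the approximation I would first obtain a sharp bound on $\mathit{LIN}^* := \min\mengest{\nor{x}_1}{Ax \geq b^{var},\, x\geq 0}$ by reusing the convex-combination construction from the proof of Theorem~\ref{thm1} with $2\alpha$ in place of $\alpha$; the hypothesis $\nor{x'}_1 \geq 2\alpha(1/\delta+1)$ is exactly the precondition Remark 2 needs for this substitution. Setting $\beta = \nor{x^{var}}_1/\nor{x'}_1$, the vector $\tilde{x} = (1-\beta)x' + \beta x^{\mathit{OPT}}$ is feasible for the original LP, dominates $x^{fix} = (1-\beta)x'$ componentwise, and satisfies $\nor{\tilde{x}-x^{fix}}_1 = \beta\mathit{LIN} = \nor{x^{var}}_1 \mathit{LIN}/\nor{x'}_1$. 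Thus $\tilde{x}-x^{fix}$ is feasible for the residual LP, yielding $\mathit{LIN}^* \leq \nor{x^{var}}_1 \mathit{LIN}/\nor{x'}_1$ and therefore $\nor{\hat{x}}_1 \leq (1+\delta/2)\nor{x^{var}}_1 \mathit{LIN}/\nor{x'}_1$.

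Write $s=\nor{x'}_1$, $B=\nor{x^{var}}_1 = 2\alpha(1/\delta+1)$, and $f(s)=(s-B)+(1+\delta/2)B\mathit{LIN}/s$, so $\nor{x^{fix}+\hat{x}}_1 \leq f(s)$. Combining $\nor{x'}_1 \geq B$ with $\nor{x'}_1 \leq (1+\delta)\mathit{LIN}$ yields the key inequality $\alpha \leq \delta\mathit{LIN}/2$. If $s \leq (1+\delta)\mathit{LIN}-\alpha$ then the output is already bounded by $s$ and the claim holds. Otherwise $s > (1+\delta)\mathit{LIN}-\alpha \geq (1+\delta/2)\mathit{LIN}$, so $(1+\delta/2)\mathit{LIN}/s < 1$, which makes the above bound on $\nor{\hat{x}}_1$ strictly smaller than $B$ and therefore forces the if-branch to fire. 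Since $f''(s) = 2(1+\delta/2)B\mathit{LIN}/s^3 > 0$, the function $f$ is convex, so on $[(1+\delta)\mathit{LIN}-\alpha,\,(1+\delta)\mathit{LIN}]$ its maximum is attained at an endpoint. A direct computation gives $f((1+\delta)\mathit{LIN}) = (1+\delta)\mathit{LIN}-\alpha$, and $f((1+\delta)\mathit{LIN}-\alpha) \leq (1+\delta)\mathit{LIN}-\alpha$ reduces exactly to $\alpha \leq \delta\mathit{LIN}/2$.

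The main obstacle is controlling the multiplicative $(1+\delta/2)$ slack from the approximate LP solver on top of the additive $2\alpha$ gap produced by doubling the variable part $B$. The resolution is to use the sharper bound $\mathit{LIN}^* \leq B\mathit{LIN}/\nor{x'}_1$ coming from the convex-combination witness $\tilde{x}$—rather than the loose Remark-2 bound $\mathit{LIN}^* \leq (1+\delta)\mathit{LIN}-2\alpha-\nor{x^{fix}}_1$—together with the convexity of $f$, so that the worst case pins down at $s=(1+\delta)\mathit{LIN}$ where the constants balance precisely.
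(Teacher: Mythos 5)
Your proof is correct, and its skeleton matches the paper's: feasibility and the distance bound are handled the same way, and the key bound on the residual LP optimum is the same object in disguise --- your witness $\tilde{x}-x^{fix}=\beta x^{\mathit{OPT}}$ is exactly the convex combination underlying the paper's appeal to feasibility of the LP of Remark 1, and your bound $\mathit{LIN}^{*}\leq \nor{x^{var}}_1\,\mathit{LIN}/\nor{x'}_1$ coincides with the paper's bound $\nor{x^{var}}_1-2\alpha\frac{1/\delta+1}{1/\delta'+1}$ once one writes $\nor{x'}_1=(1+\delta')\mathit{LIN}$, so your claim of using a ``sharper'' bound than the paper is a mischaracterization rather than an actual improvement. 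Where you genuinely diverge is the closing verification. The paper splits on $2\delta'\leq\delta$ versus $\delta\leq 2\delta'$ and then works through a long chain of identities in $\delta'$ to show the leftover term is nonpositive; you instead split on whether $\nor{x'}_1\leq(1+\delta)\mathit{LIN}-\alpha$, note that otherwise the if-branch necessarily fires (a point the paper only handles implicitly via $\nor{x''}_1\leq\nor{x^{fix}+\hat{x}}_1$), and then bound $f(s)=(s-B)+(1+\delta/2)B\,\mathit{LIN}/s$ using convexity and evaluation at the two endpoints, where $f((1+\delta)\mathit{LIN})=(1+\delta)\mathit{LIN}-\alpha$ holds with equality and the other endpoint reduces exactly to $2\alpha\leq\delta\mathit{LIN}$. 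This buys a shorter and more transparent calculation that also shows the constants are tight at $s=(1+\delta)\mathit{LIN}$; what the paper's parameterization by $\delta'$ buys in return is that the same intermediate bound is reused verbatim to prove Corollary \ref{cor5} (the regime $\delta'\geq\delta$), which your interval argument does not immediately cover since it relies on $\nor{x'}_1\leq(1+\delta)\mathit{LIN}$.
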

\begin{proof}
	The property that $\nor{x''-x'}_1 \leq 4 \alpha (1/ \delta +1)$ follows by Theorem 3 and the fact,
	that $x^{var}$ has the double size $2 \alpha(1/ \delta +1)$ compared to $x^{var}$ defined in Algorithm \ref{alg1}. 
	Furthermore we have to show that at the end of the algorithm $\nor{x''}_1 \leq (1+ \delta) \mathit{LIN} - \alpha$.
	Suppose $\nor{x'} = (1+ \delta') \mathit{LIN}$ for some $\delta' \leq \delta$. Using the assumption 
	$2 \alpha(1/ \delta +1) \leq \nor{x'}_1 \leq (1+ \delta)\mathit{LIN}$ implies that
	$2 \alpha \leq \frac{(1+ \delta)\mathit{LIN}}{(1/\delta +1)} = \delta \mathit{LIN}$.
	Consider the case that $2 \delta' \leq \delta$. In this case $x''$ has the aimed approximation since 
	$\nor{x''}_1 \leq \nor{x'}_1 = (1+ \delta') 
	\mathit{LIN} \leq (1+ \delta) \mathit{LIN} - \delta/2 \mathit{LIN} \leq (1+ \delta) \mathit{LIN} - \alpha$ using
	$2\alpha \leq \delta \mathit{LIN}$.
	Thus in the following we assume $\delta \leq 2 \delta'$.
	Suppose we solve the LP in step 2 optimally. In this case, Algorithm \ref{alg2} is identical to Algorithm \ref{alg1} using improvement
	of $2 \alpha$. By feasibility of \ref{form:lpr} we know there exists a solution $\bar{x}''$
	with 
	$\nor{\bar{x}''} \leq (1+ \delta') \mathit{LIN} - 2\alpha(\frac{1/ \delta +1}{1/ \delta'+1})$.
	This implies, that an optimal solution $\hat{x}^{\mathit{OPT}}$ of the LP 
	$\min \mengest{\nor{x}_1}{Ax \geq b^{var}, x \geq 0 }$ is of size $\nor{\hat{x}^{\mathit{OPT}}}_1
	\leq \nor{x^{var}}_1 - 2\alpha \frac{1/\delta+1}{1/\delta'+1} = 2 \alpha(1/ \delta +1) -
	2\alpha \frac{1/\delta+1}{1/\delta'+1}$. Solving the LP approximately with ratio $(1+\delta / 2)$, 
	solution $\hat{x}$ has an additional term $\delta / 2 \nor{\hat{x}^{\mathit{OPT}}}_1$. The value of 
	$\nor{\hat{x}}_1$ is therefore bounded by
	$\nor{\hat{x}}_1 \leq \nor{\hat{x}^{\mathit{OPT}}}_1 + \delta / 2 \nor{\hat{x}^{\mathit{OPT}}}_1 
	\leq \nor{x^{var}}_1 - 2\alpha \frac{1/\delta+1}{1/\delta'+1} + 
	\alpha(1+ \delta) - \alpha (\frac{1+ \delta}{1/ \delta'+1})$.
	Finally this results in the approximation for $x^{fix} + \hat{x}$ as follows.
\begin{align*}
		\nor{x^{fix} + \hat{x}}_1 & = \nor{x'}_1 - \nor{x^{var}}_1 + \nor{\hat{x}}_1\\
       = &(1+ \delta') \mathit{LIN} -2 \alpha (\frac{1/ \delta +1}{1/ \delta'+1}) + \alpha(1+ \delta) 
       - \alpha (\frac{1+ \delta}{1/ \delta'+1}) \\
       = &(1+ \delta) \mathit{LIN} - (\delta - \delta') \mathit{LIN} -2 \alpha (\frac{1/ \delta +1}{1/ \delta'+1})
        + \alpha(1+ \delta) 
       - \alpha (\frac{1+ \delta}{1/ \delta'+1}) \\
       \stackrel{\mathit{LIN} \geq 2 \alpha / \delta}{\leq} &(1+ \delta) \mathit{LIN} -
       2 \alpha(\frac{\delta - \delta'}{\delta}) + \alpha(1+ \delta)  -2 \alpha (\frac{1/ \delta +1}{1/ \delta'+1})  
       - \alpha (\frac{1+ \delta}{1/ \delta'+1})\\
       = & (1+ \delta) \mathit{LIN} + 
       \alpha(\frac{-2 \delta + 2 \delta' + \delta + \delta^2}{\delta}) -2 \alpha (\frac{1/ \delta +1}{1/ \delta'+1})  
       - \alpha (\frac{1+ \delta}{1/ \delta'+1})\\
       = & (1+ \delta) \mathit{LIN} - \alpha + \alpha(\frac{2 \delta' + \delta^2}{\delta}) 
       -2 \alpha (\frac{1/ \delta +1}{1/ \delta'+1}) - \alpha (\frac{1+ \delta}{1/ \delta'+1})\\
       = & (1+ \delta) \mathit{LIN} - \alpha 
        +\alpha(\frac{2+ 2\delta' +\delta^2/\delta' + \delta^2 -2 -2\delta - \delta - \delta^2}{\delta(1/ \delta'+1)})\\
       = & (1+ \delta) \mathit{LIN} - \alpha + 
       \alpha(\frac{2\delta' +\delta^2/\delta' -3 \delta}{\delta(1/ \delta'+1)})\\
       = & (1+ \delta) \mathit{LIN} - \alpha + \alpha(\frac{(\delta - \delta') 
       (-2+ \frac{\delta}{\delta'})}{\delta(1/ \delta'+1)})\\
       \leq & (1+ \delta) \mathit{LIN} - \alpha
\end{align*}
The last inequality holds because $\alpha(\frac{(\delta - \delta') (-2+ \frac{\delta}{\delta'})}{\delta(1/ \delta'+1)}) 
\leq 0$ since $\delta - \delta' \geq 0$ and $-2+ \frac{\delta}{\delta'} \leq 0 \Leftrightarrow  \delta \leq 2\delta'$.
	By the last step of the algorithm we know that $\nor{x''}_1 \leq  \nor{x^{fix} + \hat{x}}_1$ and thus
	$\nor{x''}_1 \leq (1+ \delta)\mathit{LIN} - \alpha$.
\end{proof}
In some cases we may not want to get a guaranteed approximation, but a guarantee that our solution $x'$ is getting
smaller by some $\alpha$. This works if the approximation ratio of $x'$ is worse than $(1+\delta)$. 
The following corollary states, that if we use Algorithm \ref{alg2} on a solution $x'$ with 
$\nor{x'}_1 = (1+ \delta')\mathit{LIN}$ for some $\delta' \geq \delta$ the objective function of our new solution 
$x''$ decreases by at least $\alpha$.
\begin{cor}\label{cor5}
	Let $\nor{x'}_1 = (1+ \delta')\mathit{LIN}$ for some $\delta' \geq \delta$ and 
	$\nor{x'}_1 \geq 2 \alpha(1/ \delta +1)$. Then Algorithm \ref{alg2} returns a solution
	$x''$ with $\nor{x''}_1 \leq \nor{x'}_1 - \alpha = (1+ \delta')\mathit{LIN} - \alpha$ and 
	$\nor{x''-x'}_1 \leq 4 \alpha (1/ \delta +1)$.
\end{cor}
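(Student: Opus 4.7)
The plan is to rerun the algebraic analysis in the proof of Theorem~\ref{thm2} but to start from the sharper hypothesis $\nor{x'}_1 = (1+\delta')\mathit{LIN}$ instead of the weaker $\nor{x'}_1 \leq (1+\delta)\mathit{LIN}$, and to extract the stronger conclusion from the extra information $\delta' \geq \delta$.

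First I would observe that the distance bound $\nor{x''-x'}_1 \leq 4\alpha(1/\delta+1)$ is unaffected by the change in hypothesis: it follows, exactly as in the proof of Theorem~\ref{thm2}, from the definition $\nor{x^{var}}_1 = 2\alpha(1/\delta+1)$ and the fact that $\nor{\hat{x}}_1 \leq (1+\delta/2)\nor{x^{var}}_1$ because $\hat{x}$ is computed with relative error $\delta/2$ on an LP whose optimum is at most $\nor{x^{var}}_1$. So everything reduces to the objective bound.

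For the objective I would apply Remark 2 (feasibility of~\ref{form:lpr}) with the current approximation ratio $\delta'$ and with improvement target $2\alpha$ (matching the doubled variable mass of Algorithm~\ref{alg2}), which gives $\nor{\hat{x}^{\mathit{OPT}}}_1 \leq 2\alpha(1/\delta+1) - 2\alpha(1/\delta+1)/(1/\delta'+1)$. The $(1+\delta/2)$-approximate $\hat{x}$ then satisfies $\nor{\hat{x}}_1 \leq \nor{\hat{x}^{\mathit{OPT}}}_1 + (\delta/2)\nor{\hat{x}^{\mathit{OPT}}}_1$, and adding $\nor{x^{fix}}_1 = (1+\delta')\mathit{LIN} - 2\alpha(1/\delta+1)$ yields
\[ \nor{x^{fix}+\hat{x}}_1 \leq (1+\delta')\mathit{LIN} - 2\alpha\frac{1/\delta+1}{1/\delta'+1} + \alpha(1+\delta) - \alpha\frac{1+\delta}{1/\delta'+1}. \]
This is exactly the bound derived in the proof of Theorem~\ref{thm2}, up to the leading $(1+\delta')\mathit{LIN}$ instead of $(1+\delta)\mathit{LIN}$ (I am deliberately not invoking the step $\mathit{LIN}\geq 2\alpha/\delta$ that was used there to downgrade $(1+\delta')\mathit{LIN}$ to $(1+\delta)\mathit{LIN}$).

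The main work is then a direct simplification of the three $\alpha$-terms. Factoring $\alpha(1+\delta)$ and collecting over the common denominator $\delta(1+\delta')$ produces $\alpha(1+\delta)(\delta-2\delta')/[\delta(1+\delta')]$. Requiring this quantity to be at most $-\alpha$ is equivalent, after clearing positive denominators, to $(1+\delta)(2\delta'-\delta) \geq \delta(1+\delta')$, which simplifies to $(\delta'-\delta)(2+\delta) \geq 0$ and is precisely the hypothesis $\delta' \geq \delta$. The final step of Algorithm~\ref{alg2} only replaces $x'$ by $x^{fix}+\hat{x}$ if the objective decreases, so $\nor{x''}_1 \leq \nor{x^{fix}+\hat{x}}_1 \leq (1+\delta')\mathit{LIN} - \alpha = \nor{x'}_1 - \alpha$, as claimed.

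I do not anticipate a real obstacle; the only thing to be a little careful about is to \emph{not} use the inequality $\mathit{LIN} \geq 2\alpha/\delta$ from the proof of Theorem~\ref{thm2}, since here we only have the weaker consequence $(1+\delta')\mathit{LIN} \geq 2\alpha(1/\delta+1)$ of the hypothesis $\nor{x'}_1 \geq 2\alpha(1/\delta+1)$. Keeping the leading term as $(1+\delta')\mathit{LIN}$ throughout makes the algebra cleaner and yields the sharper conclusion of Corollary~\ref{cor5}.
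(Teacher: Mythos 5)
Your proposal is correct and takes essentially the paper's own route: both arguments invoke feasibility of \ref{form:lpr} with improvement $2\alpha$ for the actual ratio $\delta'$, add the $(1+\delta/2)$-approximation error to $\nor{\hat{x}^{\mathit{OPT}}}_1$, and arrive at the same bound $\nor{x^{fix}+\hat{x}}_1 \leq (1+\delta')\mathit{LIN} - 2\alpha\frac{1/\delta+1}{1/\delta'+1} + \alpha(1+\delta) - \alpha\frac{1+\delta}{1/\delta'+1}$ while deliberately keeping the leading term $(1+\delta')\mathit{LIN}$. The only (cosmetic) difference is the last step: you simplify the three $\alpha$-terms exactly to $\alpha(1+\delta)\frac{\delta-2\delta'}{\delta(1+\delta')}$ and observe that $\delta'\geq\delta$ makes this at most $-\alpha$, whereas the paper bounds the terms separately via $-2\alpha\frac{1/\delta+1}{1/\delta'+1}\leq -2\alpha$ and $\alpha(1+\delta)-\alpha\frac{1+\delta}{1/\delta'+1}\leq \alpha$ using the same hypothesis.
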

\begin{proof}
	Suppose like in the proof of Theorem 4 that we solve the LP in step 2 optimally. In this case, Algorithm \ref{alg2} is 
	identical to Algorithm \ref{alg1} using improvement
	of $2 \alpha$ and therefore by feasibility of \ref{form:lpr} we know that it returns a solution $\bar{x}''$
	with 
	$\nor{\bar{x}''} \leq (1+ \delta') \mathit{LIN} - 2\alpha(\frac{1/ \delta +1}{1/ \delta'+1})$.
	An optimal solution $\hat{x}^{\mathit{OPT}}$ of the LP
	$\min \mengest{\nor{x}_1}{Ax \geq b^{var}, x \geq 0 }$ is therefore of size $\nor{\hat{x}^{\mathit{OPT}}}_1
	\leq \nor{x^{var}}_1 - 2\alpha \frac{1/\delta+1}{1/\delta'+1} = 2 \alpha(1/ \delta +1) -
	2\alpha \frac{1/\delta+1}{1/\delta'+1}$. Since we actually solve the LP approximately with ratio $(1+\delta / 2)$, 
	solution $\hat{x}$ has an additional term of $\delta / 2 \nor{\hat{x}^{\mathit{OPT}}}_1$ and the value is therefore 
	bounded by
	$\nor{\hat{x}}_1 \leq \nor{x^{var}}_1 - 2\alpha \frac{1/\delta+1}{1/\delta'+1} + 
	\alpha(1+ \delta) - \alpha (\frac{1+ \delta}{1/ \delta'+1})$ according to the proof of Theorem 5.
	By construction of $x''$ we get $\nor{x''}_1 \leq \nor{x^{fix} + \hat{x}}_1 = 
(1+ \delta') \mathit{LIN} -2 \alpha (\frac{1/ \delta +1}{1/ \delta'+1}) + \alpha(1+ \delta) 
       - \alpha (\frac{1+ \delta}{1/ \delta'+1})$. Since $\delta' \geq \delta$ we know that
       $-2 \alpha (\frac{1/ \delta +1}{1/ \delta'+1}) \leq -2 \alpha$ and that $\alpha(1+ \delta) 
       - \alpha (\frac{1+ \delta}{1/ \delta'+1}) \leq \alpha$. Hence $\nor{x''}_1 \leq  
       (1+ \delta')\mathit{LIN} - \alpha$.
\end{proof}
\subsection{Integer Programming}
\label{sec:integer}
In this section we discuss how we can apply results from the previous sections to integer programming.
Consider a fractional solution $x'$ of the LP and a corresponding integral solution $y'$. 
By rounding each component $x'_i$ up to the next integer value, it is easy to get a feasible
integer solution $y'$ with an additional additive term $\nor{y'}_1 \leq \nor{x'}_1 +C$, where $C$ is the number of 
non-zero components.
We can apply any of the previous algorithms to $x'$ to get an improved solution $x''$.
But our actual goal is to find a corresponding integer solution $y''$
with improved objective value $\nor{y''}_1 \leq (1+\delta) \mathit{LIN} +C - \alpha$ such that the 
distance between $y''$ and $y'$ is small.
In the following we present two algorithms that compute a suitable $y''$ with improved objective value and small
distance between $y''$ and $y'$.
Note that the straight forward approach to simply round up each component $x''_i$ leads to a distance between
$y''$ and $y'$ that depends on $C$ and hence (depending on the LP) is too high.
Designing the algorithms, there seems to be some trade off between the number of non-zero components and the
distance between the integer solutions $y'$ and $y''$. The first algorithms tries to minimize the distance 
between $y'$ and $y''$ while
the second guarantees better approximation of $\nor{y'}_1$ and $\nor{y''}_1$ while the distance between them increases.
The existence of an algorithm combining both good properties, low distance and good approximation guarantee of 
$y'$ and $y''$, is an interesting question.

In Algorithm \ref{alg3} we focus on how much components of $x'$ need to be reduced to achieve the improved approximation guarantee.
This defines the migration factor in robust bin packing. The actual worst case distance between $y''$ and $y'$ is 
larger and however can only
be bounded by $\mathcal{O}(m + 1/ \delta)$. Like in the previous algorithms, we assume that 
$\nor{x'}_1 \geq \alpha(1/ \delta +1)$.
We require $x'$ to be a solution with
approximation guarantee $\nor{x'}_1 \leq (1+\delta) \mathit{LIN}$ and
we require $y'$ to be an integer solution with
approximation guarantee $\nor{y'}_1 \leq (1+\delta) \mathit{LIN} +n$. For every $1\leq i \leq n$ we suppose that
$x'_i \leq y'_i$.
For a vector $z \in \mathbb{R}^{n}_{\geq 0}$, let $V(z)$ be the set of all integral vectors 
$v = (v_1, \ldots , v_n)^T$ such that $0 \leq v_i \leq z_i$.
Given LP solution $x'$ and integer solution $y'$ with the described properties above. 
The algorithm performs in the following
way.
  \begin{algo}\label{alg3}
\ 
  \begin{enumerate}
  	\item If possible choose vector $c \in V(y'-x')$ with $\nor{c}_1 = \alpha$ and return $y'' = y' -c$ and $x'' = x'$.
  	 Otherwise choose $c \in V(y'-x')$ such that $\nor{c}_1 < \alpha$ is maximal.\\
  Set $\bar{y} = y' -c$.
  \item Set $x^{var} := \frac{\alpha(1 / \delta +1)}{\nor{x'}}x'$, $x^{fix} := x' - x^{var}$
  and $b^{var} := b - A(x^{fix})$
  \item Compute an optimal solution $\hat{x}$ of the LP $\min \mengest{\nor{x}_1}{Ax \geq b^{var}, x\geq 0 }$
	\item Set $x'' = x^{fix} + \hat{x}$ 
	\item For each $1 \leq i \leq n$ set $\hat{y}_i = \max \{\lceil x''_i \rceil , \bar{y}_i \}$
	\item If possible choose $d \in V(\hat{y}-x'')$ such that $\nor{d}_1 = \alpha (1/ \delta +1)$ otherwise
  choose $d \in V(\hat{y}-x'')$ such that $\nor{d}_1 < \alpha (1/ \delta +1)$ is maximal.
  \item Return $y'' = \hat{y} -d$.
  \end{enumerate}
  \end{algo}
\begin{thm}\label{thm6}
	 Let $x'$ be a solution of the LP with $\nor{x'}_1 \leq (1+\delta) \mathit{LIN}$ and $\nor{x'}_1 \geq 
	 \alpha (1/ \delta +1)$. Let $y'$ be an integral
	 solution of the LP with  $\nor{y'}_1 \leq (1+\delta) \mathit{LIN} +n$ where $y'_i \geq x'_i$ for each 
	 $i=1, \ldots ,n$.
	Then Algorithm \ref{alg3} returns an integral solution
	$y''$ with $\nor{y''}_1 \leq (1+ \delta) \mathit{LIN} +n -\alpha$ such that
	$\sum_{y''_i<y'_i} (y'_i-y''_i) \leq \alpha (1/ \delta +2)$.
\end{thm}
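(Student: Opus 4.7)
My plan is to case-split on whether step 1 of the algorithm succeeds in finding $c\in V(y'-x')$ with $\nor{c}_1=\alpha$. If it does, the algorithm returns $y''=y'-c$ immediately, so both conclusions are trivial: $\nor{y''}_1=\nor{y'}_1-\alpha\leq(1+\delta)\mathit{LIN}+n-\alpha$ and $\sum_i(y'_i-y''_i)_+=\nor{c}_1=\alpha\leq\alpha(1/\delta+2)$. The main case is when step 1 fails; then $c$ is chosen maximal with $\nor{c}_1<\alpha$, which (because $c$ is integral and $c_i\leq y'_i-x'_i$) forces $\bar y_i=\lceil x'_i\rceil$ for every $i$ and $\nor{c}_1=\nor{y'}_1-\nor{\lceil x'\rceil}_1$. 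Steps 2--4 are then exactly Algorithm~\ref{alg1}, so the corresponding theorem gives $\nor{x''}_1\leq(1+\delta)\mathit{LIN}-\alpha$ and $\nor{x''-x'}_1\leq 2\alpha(1/\delta+1)$.

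The migration bound follows from a pointwise estimate. Since $\hat y_i=\max(\lceil x''_i\rceil,\bar y_i)\geq\bar y_i=y'_i-c_i$, we have $y'_i-\hat y_i\leq c_i$, whence $y'_i-y''_i=(y'_i-\hat y_i)+d_i\leq c_i+d_i$. Because $c_i,d_i\geq 0$, this yields $(y'_i-y''_i)_+\leq c_i+d_i$, and summing gives $\sum_i(y'_i-y''_i)_+\leq\nor{c}_1+\nor{d}_1\leq\alpha+\alpha(1/\delta+1)=\alpha(1/\delta+2)$.

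For the objective bound I rely on the identity $\nor{\hat y}_1=\nor{\lceil x''\rceil}_1+E$, where $E=\sum_i(\bar y_i-\lceil x''_i\rceil)_+$ is precisely the maximum $\nor{d}_1$ achievable in $V(\hat y-x'')$ (each $\lfloor\hat y_i-x''_i\rfloor$ equals $(\bar y_i-\lceil x''_i\rceil)_+$). If $E<\alpha(1/\delta+1)$, the algorithm sets $\nor{d}_1=E$, so $\nor{y''}_1=\nor{\lceil x''\rceil}_1\leq\nor{x''}_1+n\leq(1+\delta)\mathit{LIN}+n-\alpha$ directly. If $E\geq\alpha(1/\delta+1)$, the cap $\nor{d}_1=\alpha(1/\delta+1)$ applies; using instead the dual decomposition $\nor{\hat y}_1=\nor{\bar y}_1+F$ with $F=\sum_i(\lceil x''_i\rceil-\bar y_i)_+$ rewrites the target $\nor{y''}_1\leq(1+\delta)\mathit{LIN}+n-\alpha$ as the inequality $\nor{c}_1+\nor{d}_1-F\geq\alpha$. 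The crucial structural fact is that $F$ is supported on $\mathrm{supp}(\hat x)$: outside this support one has $x''_i=x^{fix}_i\leq x'_i$, hence $\lceil x''_i\rceil\leq\lceil x'_i\rceil=\bar y_i$, and the corresponding term vanishes.

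I expect this last subcase ($E\geq\alpha(1/\delta+1)$) to be the main obstacle. Verifying $\nor{c}_1+\alpha(1/\delta+1)-F\geq\alpha$ requires combining the support restriction just noted with the distance estimate $\nor{x''-x'}_1\leq 2\alpha(1/\delta+1)$ from Algorithm~\ref{alg1} and the identity $\nor{c}_1=\nor{y'}_1-\nor{\lceil x'\rceil}_1$; this is the only step in the argument that really uses the interplay between the fractional LP improvement and the integer rounding, and it is where the proof must be carried out carefully.
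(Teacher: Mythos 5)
Your treatment of the easy parts is fine: the immediate-return case, the migration bound $\sum_{y''_i<y'_i}(y'_i-y''_i)\leq \nor{c}_1+\nor{d}_1\leq \alpha(1/\delta+2)$, and the subcase where the cap on $\nor{d}_1$ does not bind (there $\nor{y''}_1=\nor{\lceil x''\rceil}_1\leq\nor{x''}_1+n$) all match the paper's Cases 1 and 2, and your observations that maximal $c$ forces $\bar y=\lceil x'\rceil$ and that $E$ is the maximal attainable $\nor{d}_1$ are correct. But the proposal stops exactly where the theorem's content lies: in the subcase $E\geq\alpha(1/\delta+1)$ you only list ingredients and state that the inequality $\nor{c}_1+\nor{d}_1-F\geq\alpha$ "must be carried out carefully" -- you never prove it. That is a genuine gap, and moreover the reduction you chose looks too lossy to close. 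By anchoring on the hypothesis $\nor{y'}_1\leq(1+\delta)\mathit{LIN}+n$ you have already spent the entire additive slack $n$, so you are forced to show $F\leq\nor{c}_1+\alpha/\delta$. The estimates you have in hand give only $F\leq\sum_{i\in\mathrm{supp}(\hat x)}\lceil\hat x_i\rceil\leq\nor{\hat x}_1+|\mathrm{supp}(\hat x)|\leq\alpha(1/\delta+1)+|\mathrm{supp}(\hat x)|$ (and the distance bound $\nor{x''-x'}_1\leq 2\alpha(1/\delta+1)$ is even weaker), while $\nor{c}_1$ can be $0$; this misses the target by an additive $\alpha+|\mathrm{supp}(\hat x)|$, and nothing in your outline supplies the missing amount. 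Note that your sufficient condition is genuinely stronger than the theorem whenever $\nor{y'}_1$ is strictly below $(1+\delta)\mathit{LIN}+n$, so its truth does not follow from the statement you are trying to prove.

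The paper's Case 3 avoids this trap by not using the bound on $\nor{y'}_1$ at all in this case. It compares $\hat y$ directly with $x''$: for components with $x''_i\leq\bar y_i$ one has $\hat y_i-x''_i\leq\bar y_i-x_i^{fix}=(\bar y_i-x'_i)+x_i^{var}<1+x_i^{var}$ (using $\bar y_i-x'_i<1$ from the maximality of $c$), and for the remaining components $\hat y_i-x''_i=\lceil x''_i\rceil-x''_i<1$; summing gives $\nor{\hat y-x''}_1\leq n+\nor{x^{var}}_1\leq n+\alpha(1/\delta+1)$. Subtracting $\nor{d}_1=\alpha(1/\delta+1)$ then yields $\nor{y''}_1\leq\nor{x''}_1+n\leq(1+\delta)\mathit{LIN}+n-\alpha$, where the last step uses only the guarantee on $\nor{x''}_1$ from Algorithm \ref{alg1}. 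In other words, the $+n$ slack must be harvested per component from the ceiling of $x''$ (together with the removed mass $\nor{x^{var}}_1$), not taken as a global worst-case assumption on $\nor{y'}_1$; if you rework your last subcase along these lines the proof closes.
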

\begin{proof}
	{\bf Feasibility:} Feasibility for $x''$ and approximation $\nor{x''}_1 \leq (1+ \delta)\mathit{LIN} - \alpha$ 
	follows from Theorem 3. Step 2,3 and 4 are identical to Algorithm \ref{alg1}.
	Feasibility for the integer solution $y''$ follows from the fact, that for every component $i$ we have 
	$y''_i = \hat{y}_i - d_i \geq x''_i$ and hence $Ay'' \geq Ax'' \geq b$.\\
  {\bf Size of reduction of $y'$:} The only steps where components of $y'$ are changed are in step 1, 5 and 6. 
  In step 1 we change 
	$y'$ to obtain $\bar{y}$, in step 5 we change $\bar{y}$ to obtain $\hat{y}$ and in step 6
	we change $\hat{y}$ to obtain $y''$. Summing up the change in each step leads therefore to
	the maximum possible size of reduction of $y'$ compared to $y''$.
	In step 1 there are $c \leq \alpha$ components of $y'$ which are being reduced. In step 5
	no components of $\bar{y}$ are being reduced and in step 6 there are $d \leq \alpha(1/ \delta +1)$ 
	components of $\hat{y}$ which are being reduced to obtain $y''$. Hence there are at most $\alpha(1/ \delta +2)$
	components of $y'$ which are being reduced to obtain $y''$.\\
  {\bf Approximation:} It remains to prove, that $y''$ has approximation ratio $(1+ \delta) \mathit{LIN} + n- \alpha$.\\
  Case 1, $\nor{c}_1 = \alpha$:
  In this case, the algorithm returns in step 1 solution $y'' = \bar{y}$ with $\nor{y''}_1 = \nor{y'}_1 - \alpha$
  and the algorithm terminates.
  Otherwise, if $\nor{c}_1 < \alpha$ we have for every component $i$, that $\bar{y}_i-x'_i < 1$ and $\nor{x''}_1 \leq
  (1+\delta) \mathit{LIN} - \alpha$. Note that steps 2-4 are equivalent to Algorithm \ref{alg1}.\\
  Case 2, $\nor{d}_1 < \alpha (1/ \delta +1)$:
  In this case $y''_i - x''_i = \hat{y}_i - d_i - x_i'' <1$ for $i=1, \ldots , n$, since $\nor{d}_1$
  is chosen maximally. Using $\nor{x''}_1 \leq (1+ \delta) \mathit{LIN} - \alpha$ and $y''_i < x''_i +1$ for
  $i=1, \ldots , n$ we have $\nor{y''}_1 \leq (1+ \delta) \mathit{LIN} +n - \alpha$.\\
  Case 3, $\nor{d}_1 = \alpha (1/ \delta +1)$:
  Let $\bar{m}$ be the number of components with $x_i'' > \bar{y}_i$. Next we compare the vector $\hat{y}$ with $x''$.
  Using $x'' \geq x^{fix}$ and the definition of $\hat{y}$ in step 5 we obtain 
  $\nor{\hat{y} - x''}_1 = \sum_{x_i'' \leq \bar{y}_i} (\bar{y}_i-x_i'') + \sum_{x_i'' > \bar{y}_i} 
  (\lceil x''_i \rceil - x_i'')
  \leq \sum_{x_i'' \leq \bar{y}_i}(\bar{y}_i-x_i^{fix}) + \bar{m}$.
  The fact that $\hat{y}_i-x_i'< 1$ for $i= 1, \ldots ,n$ and $\nor{x'}_1 - \nor{x^{fix}}_1  = \nor{x^{var}}_1 
  \leq \alpha(1/ \delta +1)$ and the fact there are at most $n-\bar{m}$ components with $x''_i < \bar{y}_i$ yield that
  $\sum_{x_i'' \leq \bar{y}_i}(\bar{y}_i-x_i^{fix}) = \sum_{x_i'' \leq \bar{y}_i}(\bar{y}_i- x'_i + x_i^{var}) \leq 
  n-\bar{m}+\sum_{x_i'' \leq y_i'}x_i^{var}
  \leq n-\bar{m}+\alpha(1/\delta +1)$. As a result we can bound 
  $\nor{\hat{y} - x''}_1 \leq \sum_{x_i'' \leq y_i'}(y_i'-x_i^{fix}) + \bar{m} \leq n + \alpha(1/\delta +1)$. 
  Since $y'' = \hat{y} - d$ and $\nor{d}_1= \alpha(1/\delta +1)$,
  our integer solution $y''$ has the aimed approximation guarantee of $\nor{y''}_1 =\nor{\hat{y}}_1 - \nor{d}_1 
  \leq \nor{x''} + \alpha(1/\delta +1) + n - \nor{d}_1 = \nor{x''}+n 
  \leq (1+ \delta) \mathit{LIN} +n -\alpha$.
\end{proof}
The running time of the above algorithm depends on the number of non-zero components and the time to compute 
an optimal solution of an LP. The algorithms computes an integral solution $y''$ with 
$\nor{y''}_1 \leq (1+ \delta) \mathit{LIN} +n -\alpha$ for given fractional and integral solution.
In many cases, like bin packing, the dimension $n$ is very large and provides thus a large additive
term in the approximation. The following algorithm describes how this large additive term can be avoided.
On the other hand the difference between $y'$ and $y''$ increases to $\mathcal{O}(\frac{m+\alpha}{\delta})$.
Let $x'$ be an approximate solution of the LP $\min \mengest{\nor{x}_1}{Ax \geq b, x \geq 0 }$ with
$\nor{x'}_1 \leq (1+ \delta) \mathit{LIN}$ and $\nor{x'}_1 \geq \alpha (1/ \delta +1)$. 
Furthermore let $y'$ be an approximate integer solution of the LP with $\nor{y'}_1 \leq (1+ 2\delta) \mathit{LIN}$
and $\nor{y'}_1 \geq (m+1)(1/ \delta +2)$
and $y'_i \geq x'_i$ for $i= 1, \ldots ,n$.
In addition we suppose that both $x'$ and $y'$ have exactly $K \leq \delta \mathit{LIN}$
non-zero components. 
Our goal is now to compute a fractional solution $x''$ and and integer solution $y''$ having improved approximation properties
and still $\leq \delta \mathit{LIN}$ non-zero components.
For a vector $z \in \mathbb{R}^{n}_{\geq 0}$, let $V(z)$ be the set of all integral vectors 
$v = (v_1, \ldots , v_n)^T$ such that $0 \leq v_i \leq z_i$.
Furthermore we denote with $a_1, \ldots ,a_K$ the indices of the non-zero components $y'_{a_j}$ such that
$y'_{a_1} \leq \ldots \leq y'_{a_K}$ are sorted in non-decreasing order.
\begin{algo}\label{alg4}
\ 
  \begin{enumerate}
  \item Choose $\ell$ maximally such that the sum of smallest $\ell$ components $1, \ldots , \ell$ is 
  $\sum_{1\leq i \leq \ell} y'_{a_i} \leq (m+1)(1/ \delta +2)$
	\item Set $x^{var}_i = \begin{cases} x'_i & \text{if }i= a_j \text{ for } j \leq \ell \\
	\frac{\alpha(1 / \delta +1)}{\nor{x'}}x'_i & \text{else}
	\end{cases}$ and $\bar{y}_i = \begin{cases} 0 & \text{if }i= a_j \text{ for } j \leq \ell \\
	y'_i & \text{else}
	\end{cases}$
	\item Set $x^{fix}= x' - x^{var}$, $b^{var} = b - A(x^{fix})$ and compute an 
	optimal solution $\hat{x}$ of the LP  $\min \mengest{\nor{x}_1}{Ax \geq b^{var}, x\geq 0 }$
  \item Set $x'' = x^{fix} + \hat{x}$
  \item For each $1 \leq i \leq n$ set $\hat{y}_i = \max \{\lceil x''_i \rceil , \bar{y}_i \}$
	\item If possible choose $d \in V(\hat{y}-x'')$ such that $\nor{d}_1 = \alpha (1/ \delta +1)$ otherwise
  choose $d \in V(\hat{y}-x'')$ such that $\nor{d}_1 < \alpha (1/ \delta +1)$ is maximal.
  \item Return $y'' = \hat{y} -d$
  \end{enumerate}
\end{algo}
\begin{thm}
	Let $x'$ be a solution of the LP with $\nor{x'}_1 \leq (1+\delta) \mathit{LIN}$ and 
	$\nor{x'}_1 \geq \alpha(1/ \delta +1)$. Let $y'$ be an integral
	 solution of the LP with  $\nor{y'}_1 \leq (1+2\delta) \mathit{LIN}$ and $\nor{y'}_1 \geq (m+1)(1/ \delta +2)$.
	 Solutions $x'$ and $y'$ have both exactly $K$ non-zero components and for each component we have 
	 $x'_i \leq y'_i$.
	Then Algorithm \ref{alg4} returns a fractional solution $x''$ with $\nor{x''}_1 \leq (1+ \delta) \mathit{LIN} -\alpha$
	 and an integral solution
	$y''$ with $\nor{y''}_1 \leq (1+ 2\delta) \mathit{LIN} - \alpha$. Both $x''$ and $y''$ have the same 
	number of non-zero components with $x''_i \leq y''_i$ and the number of non-zero components is bounded by
	 $\delta \mathit{LIN}$.
	The distance between $y''$ and $y'$ is bounded by $\nor{y''-y'}_1
	= \mathcal{O}(\frac{m + \alpha}{\delta})$.
\end{thm}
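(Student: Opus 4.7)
I plan to mimic the proof of Theorem~\ref{thm6} for Algorithm~\ref{alg3}, but now handle separately the ``small'' indices $a_1,\dots,a_\ell$ that the algorithm zeros out in $\bar{y}$ and the ``large'' indices $a_{\ell+1},\dots,a_K$ that are preserved. The strategy is, in order: feasibility, the fractional bound on $\nor{x''}_1$, the support bound of $\delta\mathit{LIN}$, the integer-objective bound on $\nor{y''}_1$, and finally the distance bound.

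\textbf{Feasibility and fractional bound.} Feasibility of $x''$ follows directly from $Ax'' = A x^{fix} + A\hat{x} \geq A x^{fix} + b^{var} = b$, and the construction in steps 5--6 forces $y''_i \geq x''_i$ coordinatewise, so $Ay'' \geq b$. For the fractional bound I apply Remark~2 to the ambient LP to obtain a reference solution $x^{**}$ with $\nor{x^{**}}_1 \leq (1+\delta)\mathit{LIN}-\alpha$. The fixed part $x^{fix}$ is coordinatewise at most the \ref{form:lpr} lower bound $x'_i(1-\alpha(1/\delta+1)/\nor{x'}_1)$: on the indices $a_j$ with $j\leq\ell$ we have $x^{fix}_{a_j}=0$, and on the other indices the two agree. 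Hence $x^{**}\geq x^{fix}$, so $x^{**}-x^{fix}\geq 0$ is feasible for the residual LP of step 3. Optimality of $\hat{x}$ then gives $\nor{x''}_1 = \nor{x^{fix}}_1 + \nor{\hat{x}}_1 \leq \nor{x^{**}}_1 \leq (1+\delta)\mathit{LIN}-\alpha$.

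\textbf{Support, integer value, and distance.} Because $x^{fix}$ is supported on $\{a_{\ell+1},\dots,a_K\}$ and a basic feasible $\hat{x}$ has at most $m$ non-zeros, $|\text{supp}(x'')|\leq K-\ell+m$; and $\text{supp}(y'')\subseteq\text{supp}(\bar{y})\cup\text{supp}(\hat{x})$ has the same bound. Reducing this to $\delta\mathit{LIN}$ amounts to showing $\ell\geq m$, which follows from $K\leq\delta\mathit{LIN}$ via the averaging estimate $\sum_{j\leq m}y'_{a_j}\leq (m/K)\nor{y'}_1\leq m(1+2\delta)\mathit{LIN}/K\leq (m+1)(1/\delta+2)$. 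For the objective I telescope $\nor{y''}_1\leq\nor{\bar{y}}_1+\nor{\hat{y}-\bar{y}}_1-\nor{d}_1$, using $\nor{\bar{y}}_1 = \nor{y'}_1 - \sum_{j\leq\ell}y'_{a_j}$; observing that $\hat{y}_i>\bar{y}_i$ forces $i\in\text{supp}(\hat{x})$ and contributes at most $\hat{x}_i+1$, so $\nor{\hat{y}-\bar{y}}_1\leq\nor{\hat{x}}_1+m$; and $\nor{d}_1 = \alpha(1/\delta+1)$ in the generic case. Combining with the improvement $\nor{\hat{x}}_1 \leq \nor{x^{var}}_1 - \alpha \leq \sum_{j\leq\ell}x'_{a_j}+\alpha(1/\delta+1)-\alpha$ and $x'_{a_j}\leq y'_{a_j}$, the small-component terms cancel, giving $\nor{y''}_1\leq(1+2\delta)\mathit{LIN}-\alpha$. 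Summing the three telescoping pieces with the crude bound $\nor{\hat{x}}_1 = \mathcal{O}((m+\alpha)/\delta)$ yields $\nor{y''-y'}_1 = \mathcal{O}((m+\alpha)/\delta)$.

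\textbf{Main obstacle.} The most delicate step is the final cancellation in the integer bound: the additive $+m$ slack introduced by rounding $x''$ up to $\hat{y}$ must be absorbed without eroding the target improvement $\alpha$. The hypotheses $\nor{y'}_1 \geq (m+1)(1/\delta+2)$ and $K\leq\delta\mathit{LIN}$ are exactly what make this possible --- the former guarantees that $\mathit{LIN}$ dominates $m$, and the latter enables $\ell \geq m$ in the support count. A brief edge-case check is also needed when $K<m$, where the averaging step for $\ell$ does not literally apply but $K+m$ itself is already harmless by the size hypothesis.
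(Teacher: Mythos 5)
Your outline follows the paper's skeleton (feasibility, fractional bound via Remark~2 plus optimality of $\hat{x}$, support count $K-\ell+m$, case analysis on $d$, distance by summing the changes of steps 2, 5 and 7), and the feasibility, fractional and distance parts are essentially the paper's argument. But two of your key steps do not hold up. First, the support bound: you claim $\ell\geq m$ via the averaging chain $\sum_{j\leq m}y'_{a_j}\leq (m/K)\nor{y'}_1\leq m(1+2\delta)\mathit{LIN}/K\leq(m+1)(1/\delta+2)$. The last inequality needs $K\geq\frac{m}{m+1}\,\delta\mathit{LIN}$, i.e.\ a \emph{lower} bound on $K$, whereas the standing assumption is the upper bound $K\leq\delta\mathit{LIN}$; indeed $K\leq\delta\mathit{LIN}$ makes $m(1+2\delta)\mathit{LIN}/K\geq m(1/\delta+2)$, so the chain points the wrong way. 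If $K$ is well below $\delta\mathit{LIN}$ (few but large components, e.g.\ all equal to about $2/\delta$), then $\ell<m$ even though $K\gg m$, so neither your main branch nor your ``$K<m$'' edge case covers this regime. The paper avoids claiming $\ell\geq m$ at all: by maximality of $\ell$ and the sorting, when $\ell<m$ every component with index above $\ell$ exceeds $1/\delta+2$, whence $(K-\ell+m)(1/\delta+2)\leq\nor{y'}_1\leq(1+2\delta)\mathit{LIN}$ gives $K-\ell+m\leq\delta\mathit{LIN}$ directly.

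Second, the integer-objective bound. Your cancellation uses $\nor{\hat{x}}_1\leq\nor{x^{var}}_1-\alpha$, but the improvement guaranteed by Theorem~\ref{thm1}/Remark~2 is relative to $(1+\delta)\mathit{LIN}$, not to $\nor{x'}_1$: one only gets $\nor{\hat{x}}_1\leq\nor{x^{var}}_1-\alpha+\bigl((1+\delta)\mathit{LIN}-\nor{x'}_1\bigr)$, and the extra slack does not vanish when $\nor{x'}_1<(1+\delta)\mathit{LIN}$. Moreover, even granting your inequality, with $\nor{d}_1=\alpha(1/\delta+1)$ the telescoping leaves an unabsorbed $+m$, so you land at $\nor{y'}_1+m-\alpha$ rather than $(1+2\delta)\mathit{LIN}-\alpha$; the hypothesis $\nor{y'}_1\geq(m+1)(1/\delta+2)$ gives $\delta\mathit{LIN}\geq m+1$ but provides no room of size $m$ below $(1+2\delta)\mathit{LIN}$. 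The paper closes this differently: in the full-$d$ case it takes $\nor{d}_1=\alpha(1/\delta+2)+m$ (so both the $+m$ and an extra $\alpha$ are subtracted, yielding $\nor{y''}_1\leq\nor{y'}_1-\alpha$), and in the case where $d$ cannot reach full size it uses that $y''_i-x''_i<1$ on every non-zero component together with the support bound, giving $\nor{y''}_1\leq\nor{x''}_1+\delta\mathit{LIN}\leq(1+2\delta)\mathit{LIN}-\alpha$. This second case, which is exactly where the support bound is needed, is missing from your sketch. So the overall plan is right, but the support-count argument and the integer-value case analysis need to be replaced by the paper's versions.
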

\begin{proof}
	{\bf Feasibility}: Feasibility and approximation for the fractional solution $x''$ follow easily from correctness 
	of Algorithm \ref{alg1} and the fact that removing additional components $x'_{a_1}, \ldots ,x'_{a_{\ell}}$ and reassigning them
	optimally does not worsen
	the approximation. Each integral component $\hat{y}_i$ is by definition (step 5) greater or equal than $x''_i$. 
	By choice of $d$ step 6 and 7 retain this property for $y''$ and imply thus feasibility for $y''$.
	
	{\bf Distance between $y''$ and $y'$:}
	The only steps where components of $y'$ are changed are step 2, 5 and 7. In step 2 we change 
	$y'$ to obtain $\bar{y}$, in step 5 we change $\bar{y}$ to obtain $\hat{y}$ and in step 7
	we change $\hat{y}$ to obtain $y''$. Summing up the change in each step leads therefore to
	the maximum possible distance between $y''$ and $y'$. In step 2 of the algorithm $\ell$ components
	of $y'$ are set to zero to obtain $\bar{y}$, which by the definition of $\ell$ results in a change of at most $(m+1)(1/ \delta +2)$.
	We define $L$ by  
	$L = \sum_{1\leq i \leq \ell} y'_{a_i}$ with $0 \leq L \leq (m+1)(1/ \delta +2)$.
	In step 5, the only components $\bar{y}_i$ being changed are the ones where $x_i''$ is larger than $\bar{y}_i$. 
	So the change in step 5 is bounded by $\sum_{x''_i> \bar{y}_i} (\lceil x''_i \rceil - \bar{y}_i) = 
	\sum_{x''_i> \bar{y}_i} (\lceil x^{fix}_i + \hat{x}_i \rceil - \bar{y}_i) 
	\leq \sum_{x''_i> \bar{y}_i}(\lceil x^{fix}_i \rceil - \bar{y}_i +\lceil \hat{x}_i \rceil) 
	\leq \sum_{x''_i> \bar{y}_i}\lceil \hat{x}_i \rceil$ by
	knowing that $\lceil x^{fix}_i \rceil - \bar{y}_i \leq 0$ since $x^{fix}_i = \bar{y}_i = 0$ if $i= a_j$ for a $j \leq \ell$
	or $\lceil x^{fix}_i \rceil < \lceil x'_i \rceil \leq y'_i$. Furthermore we can bound 
	$\sum_{x''_i> \bar{y}_i}\lceil \hat{x}_i \rceil \leq \nor{\hat{x}}_1 +m \leq \nor{x^{var}}_1 +m$ since $\hat{x}$
	is a basic feasible solution and $\nor{x^{var}}_1$
	can be bounded by $L + \alpha(1/ \delta +1)$ (i.e. we get $L$ for the size of components 
	$x'_{a_1} , \ldots, x'_{a_K}$ plus 
	$\sum_{i> \ell} \frac{\alpha(1/\delta +1)}{\nor{x'}_1} x'_{a_i} \leq \alpha(1/ \delta +1)$ for the
	remaining ones). Therefore we have $\nor{\hat{y}-\bar{y}}_1 \leq L + \alpha(1/ \delta +1) + m$.
 	In step 7, $\nor{y'' - \hat{y}}_1 = \nor{d}_1 \leq \alpha(1/ \delta +2) +m$.
	In sum this makes a total change of at most $(m+1)(1/ \delta + 2)+L + \alpha (1/ \delta + 1)+m +\alpha(1/\delta+2) +m
	\leq 2(m+1)(1/\delta +2)+2m + \alpha(2/\delta +3) = \mathcal{O}(\frac{m + \alpha}{\delta})$.

	{\bf Number of components}: 
	The property that $x'$ and $y'$ have the same number of non-zero components together with the property that 
	$y'_i \geq x'_i$ implies that $x'_i >0$ whenever $y'_i > 0$.
	This property holds also for $x^{fix}$ and $\bar{y}$ since a component $\bar{y}_i$ is set to zero if and only if
	$x^{fix}_i =0$. Notice that $y'' = \hat{y} -d \geq x''$.
	Suppose by contradiction that there is a component $i$ with $x''_i = 0$ and $y''_i >0$, then $\hat{y}_i = y''_i +d_i >0$
	and by definition
	of $\hat{y}$ we obtain $\bar{y}_i>0$. In this case we have $x^{fix}_i >0$, which gives a contradiction
	to $x''_i = 0 = x^{fix}_i + \hat{x}_i >0$.
	Using the property that $x''$ and $y''$ have the same number of non-zero components, it is sufficient to prove
	that the number of non-zero components of $x''$ is limited by $\delta \mathit{LIN}$.
	Our new solution $x''$ is composed of $x^{fix}$ and $\hat{x}$. Solution $x^{fix}$ has $K-\ell$ non-zero
	components, since in step 2 we set $\ell$ components of $x^{fix}$ to zero. Being a basic feasible solution,
	$\hat{x}$ has at most $m$ non-zero components and hence $x''$ has at most $K+m-\ell$ non-zero components.
	If $\ell \geq m$, then $x''$ has $\leq K \leq \delta \mathit{LIN}$ non-zero components.
	So let $\ell < m$: The total number of non-zero components after step 4 is $(K+m- \ell)$. We now prove that
	this number is bounded by $\delta \mathit{LIN}$. 
	Parameter $\ell$ is chosen to be maximal, therefore $\sum_{i \leq l+1} y'_{a_i} \geq (m+1)(1/ \delta +2)$. 
	Hence, the average size of
	components $y'_{a_1}, \ldots, y'_{a_{\ell +1}}$ is greater than $\frac{(m+1)(1/ \delta +2)}{\ell+1} 
	\stackrel{\ell +1 \leq m}{\geq} 
	\frac{(m+1)(1/ \delta +2)}{m}> 1/ \delta +2$. Since the components are sorted in non-decreasing order, 
	every component $y'_i$
	with $i \geq \ell +1$ has size $> 1/ \delta +2$.
	Summing over all non-zero components of $y'$ yields the following inequality:
	$\nor{y'}_1 = \sum_{i= \ell +2}^K y'_{a_i} + y'_{a_{\ell +1}} + L \geq (K-\ell -1)(1 / \delta +2)+ y'_{a_{\ell +1}} +L 
	\geq (K- \ell-1 )(1 / \delta +2)+ (m+1)(1 / \delta +2) = (K-\ell+m)(1 / \delta +2)$.
	Using that $\nor{y'}_1 \leq (1+2 \delta)\mathit{LIN}$ yields
	$(1+2 \delta)\mathit{LIN} \geq (K- \ell +m)(1 / \delta +2)$.
	Dividing both sides by $(1 / \delta +2)$ gives $(K- \ell +m) \leq \delta \mathit{LIN}$. This shows
	that the number of non-zero components of $x''$ and $y''$ is at most $\delta \mathit{LIN}$.

	{\bf Approximation:} Case1: $\nor{d}_1 = \alpha (1/ \delta +2)+ m$\\
	The following inequalities $\nor{\hat{y}}_1 \leq \nor{\bar{y}}_1 +L +\alpha(1/ \delta +2) +m = 
	\nor{y'}_1 + \alpha(1/\delta+1) +m$ and
	$\nor{y'}_1 \leq (1+2\delta) \mathit{LIN}$ together yield the aimed approximation
	$\nor{y''}_1 = \nor{\hat{y}}_1 - \nor{d}_1 = \nor{\hat{y}}_1 -\alpha(1/\delta+2) -m 
	\leq (1+2\delta) \mathit{LIN} - \alpha$.\\
	Case2: $\nor{d}_1 < \alpha (1/ \delta +2)+ m$\\
	Since $d$ is chosen maximally, $y''_i-x''_i < 1$ for every components $i= 1, \ldots ,n$. Since 
	$\nor{x''}_1 \leq (1+\delta) \mathit{LIN} - \alpha$	and $y''$ has at most $\delta \mathit{LIN}$ non-zero
	components $\nor{y''}_1$ is bounded by $(1+\delta) \mathit{LIN} - \alpha + \delta \mathit{LIN} = 
	(1+2\delta) \mathit{LIN} - \alpha$.
\end{proof}
Instead of using an optimal LP solution in Algorithm \ref{alg3} and \ref{alg4}, we can solve the LP approximately 
with a ratio of $(1+\delta /2)$. The following algorithm is basically a combination of Algorithm \ref{alg2} and Algorithm \ref{alg4}.
We could also combine Algorithm \ref{alg2} and Algorithm \ref{alg3} to obtain similar results.
We make the following assumption for the fractional solution $x'$ and the corresponding integer solution $y'$:
Let $x'$ be an approximate solution of the LP $\min \mengest{\nor{x}_1}{Ax \geq b, x \geq 0 }$ with
$\nor{x'}_1 \leq (1+ \delta) \mathit{LIN}$ and $\nor{x'}_1 \geq 2 \alpha (1/ \delta +1)$.
Let $y'$ be an approximate integer solution of the LP with $\nor{y'}_1 \leq \mathit{LIN} +2C$ for some value 
$C \geq \delta \mathit{LIN}$ and with $\nor{y'}_1 \geq (m+2)(1/\delta +2)$. 
Suppose that both $x'$ and $y'$ have only $K \leq C$
non-zero components. 
For every component $i$ we suppose that $y'_i \geq x'_i$.
Furthermore we are given indices $a_1, \ldots ,a_K$, such that the non-zero components $y'_{a_j}$ are
sorted in non-decreasing order i.e. $y'_{a_1} \leq \ldots \leq y'_{a_K}$.
\newpage
\begin{algo}\label{alg5}
\ 
  \begin{enumerate}
   \item Set $x^{var} := 2 \frac{ \alpha(1 / \delta +1)}{\nor{x'}}x'$, $x^{fix} := x' - x^{var}$ and 
   $b^{var} = b - A(x^{fix})$
	\item Compute an approximate solution $\hat{x}$ of the LP $\min \mengest{\nor{x}_1}{Ax \geq b^{var}, x\geq 0 }$
	with ratio $(1+ \delta/2)$
	\item If $\nor{x^{fix} + \hat{x}}_1 \geq \nor{x'}_1$ then set $x'' = x'$, 
	$\hat{y} = y'$ and goto step 9
  \item Choose the largest $\ell$ such that the sum of smallest components $y'_1, \ldots , y'_{\ell}$ is 
  $\sum_{1\leq i \leq \ell} y'_{a_i} \leq (m+2)(1/ \delta +2)$
	\item For all $i $ set $\bar{x}^{fix}_{i} = 
	\begin{cases} 0 & \text{if }i= a_j \text{ for } j \leq \ell \\
	x^{fix}_i & \text{else}
	\end{cases}$ 
	and $\bar{y}_i = \begin{cases} 0 & \text{if }i= a_j \text{ for } j \leq \ell \\
	y'_i & \text{else}
	\end{cases}$
	\item Set $\bar{x} = \hat{x} + x'_{\ell}$ where $x'_{\ell}$ is a vector consisting of components 
	$x_{a_1}, \ldots ,x_{a_{\ell}}$. Reduce the number of non-zero components to at most $m+1$.
  \item $x'' = \bar{x}^{fix} + \bar{x}$
  \item For all non-zero components $i$ set $\hat{y}_i = \max \{\lceil x''_i \rceil , \bar{y}_i \}$
	\item If possible choose $d \in V(\hat{y}-x'')$ such that $\nor{d}_1 = \alpha (1/ \delta +1)$ otherwise
  choose $d \in V(\hat{y}-x'')$ such that $\nor{d}_1 < \alpha (1/ \delta +1)$ is maximal.
  \item Return $y'' = \hat{y} -d$
  \end{enumerate}
\end{algo}
Step 6 of the algorithm can be performed using a standard technique presented for example in \cite{beling1998}. 
Arbitrary many components of 
$\bar{x}$ can be reduced to $m+1$ without making the approximation guarantee worse.
We formulate the following theorem and corollary such that we can directly use it in the next section.
\begin{thm}\label{thm8}
	Let $x'$ be a solution of the LP with $\nor{x'}_1 \leq (1+\delta) \mathit{LIN}$ and $\nor{x'}_1 \geq 
	2 \alpha (1/ \delta +1)$. Let $y'$ be an integral
	 solution of the LP with $\nor{y'}_1 \leq \mathit{LIN} +2C$ for some value $C \geq \delta \mathit{LIN}$
	 and with $\nor{y'}_1 \geq (m+2)(1/\delta +2)$.
	 Solutions $x'$ and $y'$ have the same number of non-zero components and for each component we have 
	 $x'_i \leq y'_i$. The number of non-zero components of $x'$ and $y'$ is $K$ with $K \leq C$.
	Then Algorithm \ref{alg5} returns a fractional solution $x''$ with $\nor{x''}_1 \leq (1+ \delta) \mathit{LIN} -\alpha$
	 and an integral solution
	$y''$ where one of the two properties hold:
	 $\nor{y''}_1 = \nor{y'}_1 - \alpha$ or $\nor{y''}_1 = \nor{x''}_1 + C$. 
	 Both, $x''$ and $y''$ have at most $C$
	non-zero components and the distance between $y''$ and $y'$ is bounded by $\nor{y''-y'}_1 
	= \mathcal{O}(\frac{m + \alpha}{\delta})$.
\end{thm}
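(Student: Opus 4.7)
The theorem combines, in essence, Algorithm~\ref{alg2} (the approximate LP solve that still yields $\alpha$-improvement) with Algorithm~\ref{alg4} (the sparsification trick that removes small integral coordinates and repacks them inside $\bar{x}$). I would therefore structure the proof around four ingredients: feasibility of $x''$ and $y''$, the fractional bound $\|x''\|_1 \leq (1+\delta)\mathit{LIN}-\alpha$, the two-way case analysis for the integral objective, and finally the component-count and migration bounds.

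For feasibility and the fractional bound, I first dispose of the early-exit branch in step~3: if $\|x^{fix}+\hat{x}\|_1 \geq \|x'\|_1$ then $x'' := x'$ trivially satisfies $\|x''\|_1 \leq (1+\delta)\mathit{LIN}$ and we set $\hat{y} := y'$, reducing the task to the purely integral steps~8--10. Otherwise I invoke Theorem~\ref{thm2} on the triple $(x^{fix},x^{var},\hat{x})$, with $\|x^{var}\|_1 = 2\alpha(1/\delta+1) \leq \|x'\|_1$, to conclude $\|x^{fix}+\hat{x}\|_1 \leq (1+\delta)\mathit{LIN}-\alpha$. The bookkeeping in steps~5--7 is the only twist: zeroing the $\ell$ smallest integral supports out of $x^{fix}$ to form $\bar{x}^{fix}$ and re-adding them via $\bar{x} := \hat{x}+x'_\ell$ cannot worsen $\|x''\|_1$, since those coordinates were already charged to $\|x'\|_1$ and $\hat{x}$ was computed against the reduced right-hand side $b^{var}$ that already accounts for them; the sparsification in step~6 is the standard one from~\cite{beling1998} and preserves both the objective and the inequality $A\bar{x} \geq b^{var}$. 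Feasibility of $y''$ then follows from the rounding-up in step~8 together with $y'' \geq x''$, which is enforced by $d \in V(\hat{y}-x'')$.

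For the integral objective I split on the outcome of step~9. If $\|d\|_1 < \alpha(1/\delta+1)$, the maximality of $d$ forces $y''_i-x''_i<1$ at every index, so summing over the support of $y''$ and combining with the component-count bound gives $\|y''\|_1 \leq \|x''\|_1+C$. If $\|d\|_1 = \alpha(1/\delta+1)$, I use $\hat{y}_i = \max\{\lceil x''_i\rceil,\bar{y}_i\}$ to bound $\|\hat{y}\|_1-\|\bar{y}\|_1$ by $\|\hat{x}\|_1+m+1$ in the style of the proof of Theorem~\ref{thm6}, then cancel the excess against the $\alpha(1/\delta+1)$ withdrawn by $d$ together with the $\alpha$-improvement already built into $\|x''\|_1$, to arrive at $\|y''\|_1 \leq \|y'\|_1-\alpha$. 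The component-count bound itself mirrors the sorting argument of Algorithm~\ref{alg4}: maximality of $\ell$ in step~4 forces $y'_{a_i} > 1/\delta+2$ for $i \geq \ell+1$, so $\|y'\|_1 \geq (K-\ell+m+1)(1/\delta+2)$, which together with $\|y'\|_1 \leq \mathit{LIN}+2C$ and $C \geq \delta\mathit{LIN}$ yields $K+m+1-\ell \leq C$, bounding the support of $x''$ and hence of $y''$.

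Finally, the migration estimate is obtained by adding up the changes stage by stage: step~5 zeroes at most $(m+2)(1/\delta+2)$ units of $y'$, step~8 raises $\bar{y}$ to $\hat{y}$ by at most $\|\hat{x}\|_1+m+1 = O(\alpha/\delta+m)$, and step~9 subtracts a further $\alpha(1/\delta+1)$, for a total of $O((m+\alpha)/\delta)$. I expect the main technical obstacle to be the precise cancellation in the case $\|d\|_1 = \alpha(1/\delta+1)$ when the step~2 solve is only $(1+\delta/2)$-approximate: the error terms must be absorbed through a chain of inequalities of the same flavour as the lengthy calculation inside the proof of Theorem~\ref{thm2}, and it is precisely the "$C$" slack in the second alternative of the theorem that buys the room to close these estimates cleanly.
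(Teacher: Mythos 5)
Your overall architecture (Theorem~\ref{thm2} for the fractional part, the sorting/averaging argument for the support bound, stage-by-stage accounting for the migration) matches the paper, but the decisive case of the integral bound does not close as you propose. If $\nor{d}_1$ reaches the cap, the only available conclusion is $\nor{y''}_1\leq\nor{y'}_1-\alpha$, and the arithmetic fails with the cap $\alpha(1/\delta+1)$: passing from $\bar{y}$ to $\hat{y}$ can increase the objective by up to $\nor{\lceil\bar{x}\rceil}_1\leq\nor{\hat{x}}_1+L+m+1$ where $L=\sum_{i\leq\ell}y'_{a_i}$ (your bound $\nor{\hat{x}}_1+m+1$ forgets the re-added vector $x'_{\ell}$, though that $L$ cancels against $\nor{\bar{y}}_1=\nor{y'}_1-L$), so $\nor{\hat{y}}_1\leq\nor{y'}_1+2\alpha(1/\delta+1)+m+1$ and subtracting only $\alpha(1/\delta+1)$ leaves $\nor{y''}_1\leq\nor{y'}_1+\alpha(1/\delta+1)+m+1$, nowhere near $\nor{y'}_1-\alpha$. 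You cannot fall back on the second alternative either: $\nor{y''}_1\leq\nor{x''}_1+C$ comes from $y''_i-x''_i<1$ on every component, which requires $d$ to be maximal strictly below the cap -- exactly what is unavailable in this case, and where $\bar{y}_i$ dominates $\lceil x''_i\rceil$ there is no per-component control of $\hat{y}_i-x''_i$ (it can be as large as $y'_i-x'_i+x^{var}_i$, totalling order $C$, so the ``$C$ slack'' you hope for does not buy the room). The missing quantitative idea, which the paper's proof uses, is that $d$ must be allowed to withdraw up to $2\alpha(1/\delta+2)+m+1$, i.e.\ the entire worst-case growth $2\alpha(1/\delta+1)+m+1$ plus an extra $\alpha$ (the paper's proof of Theorem~\ref{thm8} works with this larger value both in the migration bound and in the case distinction, even though step~9 of Algorithm~\ref{alg5} is stated with the smaller one); with that choice the cancellation $\nor{y''}_1=\nor{\hat{y}}_1-\nor{d}_1\leq\nor{y'}_1-\alpha$ is immediate, and the case of $d$ below the cap then yields $\nor{y''}_1\leq\nor{x''}_1+C$ via maximality exactly as you describe. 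The ``$\alpha$-improvement built into $\nor{x''}_1$'' cannot substitute for this, since in this branch one must compare $\nor{y''}_1$ with $\nor{y'}_1$, not with $\mathit{LIN}$.

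Two smaller points. In the early-exit branch of step~3 you only argue $\nor{x''}_1\leq(1+\delta)\mathit{LIN}$, but the theorem asserts $\nor{x''}_1\leq(1+\delta)\mathit{LIN}-\alpha$ also when $x''=x'$; this is precisely what Theorem~\ref{thm2} delivers (either $x'$ itself already meets the bound because $2\delta'\leq\delta$ and $2\alpha\leq\delta\mathit{LIN}$, or $\nor{x^{fix}+\hat{x}}_1$ does and the algorithm keeps the smaller of the two), so you should invoke it for both branches rather than call the exit trivial. Also, your justification of steps~5--7 misdescribes the mechanism: in Algorithm~\ref{alg5} the right-hand side $b^{var}=b-Ax^{fix}$ does \emph{not} account for the coordinates $x'_{a_1},\ldots,x'_{a_{\ell}}$ (they still sit inside $x^{fix}$); unlike Algorithm~\ref{alg4}, these components are not reassigned by the LP but re-added verbatim as $x'_{\ell}$ in step~6, and the objective and feasibility bookkeeping has to be carried out for that mechanism (zeroing them in $\bar{x}^{fix}$ and adding back $x'_{\ell}$, with the sparsification of \cite{beling1998} not increasing the objective).
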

\begin{proof}
	Note that the first 3 steps are equivalent to Algorithm \ref{alg2}. In steps 4-6 
	the number of non-zero components $x'_{a_1}, \ldots ,x'_{a_{\ell}}$ are reduced. As we apply
	a method that does not increase the objective value we obtain by
	Theorem 4 that $\nor{x''}_1 \leq (1+\delta) \mathit{LIN} - \alpha$. 
	Steps 4-9 are similar to Algorithm \ref{alg4}. The main
	difference is that components $x'_{a_1}, \ldots ,x'_{a_{\ell}}$ are not assigned by the LP but are added to the 
	LP solution	afterwards in step 7.\\
	{\bf Distance between $y''$ and $y'$:} As in Theorem 7, the steps where components of $y'$ are changed are steps 
	5,8 and 10. By definition of $\ell$ the change of $y'$ in step 5 is bounded by $(m+2)(1/ \delta +2)$.
	As shown, the change in step 8 is bounded by $\sum_{x''_i> \bar{y}_i} \lceil x^{fix}_i + \hat{x}_i \rceil - \bar{y}_i) 
	\leq \sum_{x''_i> \bar{y}_i}\lceil \bar{x}_i \rceil \leq \nor{\lceil \bar{x}_i \rceil}_1$ and 
	$\nor{\lceil \bar{x}_i \rceil}_1 \leq 2 \alpha (1/ \delta +1) + L+1$, where$L = \sum_{1\leq i \leq \ell} y'_{a_i}$.
	The change in step 10 is bounded by $\nor{d}_1 \leq 2 \alpha (1/ \delta +2)+ m +1$. Therefore the total change
	between $y'$ and $y''$ is bounded by $\mathcal{O}(\frac{m + \alpha}{\delta})$.\\
	{\bf Number of components:} According to Theorem 7, the number of nonzero components of $y''$ is equal to 
	the number of non-zero components of $x''$ which equals $K - \ell + m+1$ (the number of non-zero components of
	$\hat{x}$ is bounded by $m+1$). We distinguish between the
	two cases where $\ell \geq m+1$ and $\ell < m+1$. In the case where $\ell \geq m+1$ the number of components
	of $x''$ is smaller than $K$ and hence bounded by $C$. Consider the case where $\ell < m+1$.
	By definition of $\ell$ we know that $L + y'_{\ell +1}
	\geq (m+2)(1/ \delta +2)$.
	Using the argument in the proof of Theorem 7, we obtain the following inequality:
	$\nor{y'}_1 = \sum_{i= \ell +2}^k y'_i + y'_{\ell +1} + L = (K-\ell -1)(1 / \delta +2)+ y'_{\ell +1} +L 
	\geq (K- \ell-1 )(1 / \delta +2)+ (m+2)(1 / \delta +2) = (K-\ell+m +1)(1 / \delta +2)$
	Using that $\nor{y'}_1 \leq \mathit{LIN} + 2C$ yields
	$\mathit{LIN} + 2C \geq (K- \ell +m +1)(1 / \delta +2)$.
	As $\frac{\mathit{LIN} +2C}{(1 / \delta +2)} =\frac{ \delta\mathit{LIN} + 2 \delta C}{(1+2\delta)}
	\stackrel{C \geq \delta\mathit{LIN}}{\leq} \frac{C + 2 \delta C}{(1+ 2 \delta)} = C$ we obtain that 
	$(K- \ell +m +1) \leq \frac{\mathit{LIN} + 2C}{(1 / \delta +2)} \leq C$.\\
	{\bf Approximation:} According to Theorem 7 we distinguish between the two cases where 
	$\nor{d}_1 = 2 \alpha (1/ \delta +2) +m+1$ and $\nor{d}_1 < 2 \alpha (1/ \delta +2) +m+1$. In the second case where
	$\nor{d}_1 < 2 \alpha (1/ \delta +2) +m+1$ we know that $\nor{y''}_1$ is bounded by $\nor{x''}_1$ plus the number
	of non-zero components of $x''$ since $d$ is chosen maximally. Hence $\nor{y''}_1 \leq \nor{x''}_1 + C$.
	In the case where $\nor{d}_1 = 2 \alpha (1/ \delta +2) +m+1$, we know $\nor{y''}_1 \leq \nor{\hat{y}}_1 -
	2 \alpha (1/ \delta +2) -m-1$. As $\nor{\hat{y}}_1 \leq \nor{\bar{y}}_1 + \nor{\lceil \bar{x} \rceil}_1 \leq 
	\nor{y'}_1 + 2\alpha(1/\delta+1) +m+1$ we get $\nor{y''}_1 \leq \nor{y'}_1 - \alpha$.
	Note that we can also make the general claim for $y''$ that $\nor{y''}_1 \leq \mathit{LIN} +2C - \alpha$. 
\end{proof}
The following corollary is an analog to Corollary \ref{cor5} which states what Algorithm \ref{alg5} is doing if the approximation
ratio of $x'$ is worse than $(1+ \delta)$. We will need this corollary in the next section as we have no true control
about the approximation ratio of $x'$. During the bin packing algorithm new columns might appear in the LP, 
which might change
the optimal solution and therefore the approximation ratio of a solution $x'$.
\begin{cor}\label{cor9}
	Let $\nor{x'}_1 = (1+ \delta')\mathit{LIN}$ for some $\delta' \geq \delta$ 
	 and $\nor{x'}_1 \geq 2 \alpha (1/ \delta +1)$
	and let $\nor{y'}_1 \leq \mathit{LIN} + 2C$ for some
	$C \geq \delta'\mathit{LIN}$ and $\nor{y'}_1 \geq (m+2)(1/\delta +2)$. 
	Solutions $x'$ and $y'$ have the same number of non-zero components and for each component we have 
	 $x'_i \leq y'_i$. The number of non-zero components of $x'$ and $y'$ is $K$ with $K \leq C$. 
	 Then Algorithm \ref{alg5} returns a fractional solution
	$x''$ with $\nor{x''}_1 \leq \nor{x'}_1 - \alpha = (1+ \delta')\mathit{LIN} - \alpha$ and an integral solution $y''$ 
	where one of the two properties holds:
	 $\nor{y''}_1 = \nor{y'}_1 - \alpha$ or $\nor{y''}_1 = \nor{x'}_1 - \alpha + C$. 
	Both $x''$ and $y''$ have at most $C$
	non-zero components and the distance between $y''$ and $y'$ is bounded by $\nor{y''-y'}_1 
	= \mathcal{O}(\frac{m + \alpha}{\delta})$.
\end{cor}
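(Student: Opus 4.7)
The plan is to follow the structure of the proof of Theorem \ref{thm8} while substituting the approximation argument from Corollary \ref{cor5} for the one in Theorem \ref{thm2}. Concretely, steps 1--3 of Algorithm \ref{alg5} are identical to Algorithm \ref{alg2} applied to $x'$, so I would first invoke Corollary \ref{cor5} (with the same $\delta'\geq\delta$ hypothesis) to conclude that $\nor{x^{fix}+\hat{x}}_1\leq \nor{x'}_1-\alpha < \nor{x'}_1$. This certifies that the shortcut test in step 3 is not triggered, and hence the algorithm proceeds through the full sequence of steps 4--10.

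Next I would handle the fractional solution $x''$. Steps 4--6 of Algorithm \ref{alg5} only rearrange the components $x'_{a_1},\dots,x'_{a_\ell}$ that were previously fixed, and apply the column-reduction technique of \cite{beling1998}, which does not increase the $\ell_1$-norm. Therefore the bound $\nor{x^{fix}+\hat{x}}_1\leq(1+\delta')\mathit{LIN}-\alpha$ coming from Corollary \ref{cor5} transfers to $\nor{x''}_1\leq(1+\delta')\mathit{LIN}-\alpha$. For the number of non-zero components I would reuse, verbatim, the counting argument of Theorem \ref{thm8}: by maximality of $\ell$ we get $(K-\ell+m+1)(1/\delta+2)\leq\nor{y'}_1\leq\mathit{LIN}+2C$. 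The place where the hypothesis has changed is precisely here, so I would verify the needed inequality once more: since $\delta'\geq\delta$ and $C\geq\delta'\mathit{LIN}$, we have $\delta\mathit{LIN}\leq C$, whence $\mathit{LIN}+2C\leq C/\delta+2C = C(1/\delta+2)$, and so $K-\ell+m+1\leq C$.

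For the distance $\nor{y''-y'}_1$ I would copy the bookkeeping of Theorem \ref{thm8}: the change in step 5 is bounded by $(m+2)(1/\delta+2)$, the change in step 8 is bounded by $\nor{\lceil\bar{x}\rceil}_1 \leq 2\alpha(1/\delta+1)+L+1$, and the change in step 10 is at most $\nor{d}_1\leq 2\alpha(1/\delta+2)+m+1$; summing these gives $\mathcal{O}((m+\alpha)/\delta)$. Finally, for the two-case conclusion on $\nor{y''}_1$, I would run the same dichotomy on $\nor{d}_1$ as in Theorem \ref{thm8}: if $\nor{d}_1$ is maximal, then $\nor{\hat{y}}_1\leq\nor{y'}_1+2\alpha(1/\delta+1)+m+1$ yields $\nor{y''}_1\leq\nor{y'}_1-\alpha$; otherwise $y''_i-x''_i<1$ for all $i$ and, using that $y''$ has at most $C$ non-zero components, $\nor{y''}_1\leq\nor{x''}_1+C\leq\nor{x'}_1-\alpha+C$.

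The only genuinely new work compared to Theorem \ref{thm8} is the adaptation of the approximation step to the regime $\delta'\geq\delta$, and I expect the main subtlety to be ensuring that the non-zero-component count still collapses to $C$ under the relaxed hypothesis $C\geq\delta'\mathit{LIN}$ rather than $C\geq\delta\mathit{LIN}$; the short computation above shows this is automatic because $\delta'\geq\delta$ makes the hypothesis strictly stronger. Everything else is a line-by-line translation of Theorem \ref{thm8}'s proof with Corollary \ref{cor5} replacing Theorem \ref{thm2} as the source of the fractional improvement.
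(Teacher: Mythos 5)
Your proposal follows essentially the same route as the paper's own proof: it invokes Corollary \ref{cor5} for steps 1--3 (the fractional improvement $\nor{x''}_1\leq\nor{x'}_1-\alpha$) and then recycles the distance bookkeeping, the non-zero-component count via $\frac{\mathit{LIN}+2C}{1/\delta+2}\leq C$, and the dichotomy on $\nor{d}_1$ from Theorem \ref{thm8}, exactly as the paper does. Your explicit checks that the step-3 shortcut is not triggered and that $C\geq\delta'\mathit{LIN}\geq\delta\mathit{LIN}$ suffices for the component bound are details the paper leaves implicit, but they do not change the argument.
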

\begin{proof}
Note that steps 1-3 are basically identical to Algorithm \ref{alg2}. Hence Algorithm \ref{alg5} returns by Corollary 6 a fractional
solution $x''$ with $\nor{x''}_1 \leq \nor{x'}_1 - \alpha$. The distance between the integral solutions $y'$ 
and $y''$ are independent of the approximation ratio of $x'$. Hence the distance between $y'$ and $y''$ is
according to Theorem \ref{thm8} bounded by $\mathcal{O}(\frac{m + \alpha}{\delta})$. The number of non-zero components of
$x''$ and $y''$
is by the proof of Theorem \ref{thm8} bounded by the number $K \leq C$ of non-zero components of $y'$ or by 
$\frac{\mathit{LIN} +2C}{1/ \delta +2} \leq C$.
The approximation guarantee for $y''$, that $\nor{y''}_1 \leq \nor{y'}_1 - \alpha$ follows if 
$\nor{d}_1 = 2 \alpha (1/ \delta +2) +m+1$. If $\nor{d}_1 < 2 \alpha (1/ \delta +2) +m+1$ then $\nor{y''}_1 \leq 
\nor{x''}_1 + C \leq \nor{x'}_1 + C - \alpha$.
We can also make the general claim for $y''$ that $\nor{y''}_1 \leq \nor{y'}_1 - \alpha$.
\end{proof}

\section{AFPTAS for robust bin packing}
\label{sec:bin-packing}
The goal of this section is to give a fully robust AFPTAS for the bin packing
problem using the methods developed in the previous section. For that purpose we show at first the common way how one
can formulate a rounded instance of bin packing as an ILP. In Section 4.2 we present abstract properties
of a rounding that need to be fulfilled to obtain a suitable rounding and in Section 4.3 we present the used
dynamic rounding algorithm. The crucial part however is the analysis of the dynamic rounding in combination with ILP
techniques. Since the ILP and its optimal value are in constant change due to the dynamic rounding, it is difficult to 
to give a bound for the approximation. Based on the abstract properties we therefore develop techniques how
to view and analyze the problem as a whole. 

The \emph{online bin packing problem} is defined as follows:
Let $I_t = \{i_1, \ldots i_t\}$ be an instance with $t$ items at time step $t \in \mathbb{N}$
and let $s:I_t \to (0,1]$ be a mapping that defines the sizes of the items.
Our objective is to find a function $B_{t}:\menge{i_1,\ldots,i_t}\to
\mathbb{N}^+$, such that $\sum_{i:B_{t}(i)=j}s(i)\leq 1$ for all
$j$ and minimal $\max_{i}\menge{B_{t}(i)}$ (i.e. $B_t$ describes a packing of the items into a minimum number
of bins). We allow to move
few items when creating a new solution $B_{t+1}$ for instance $I_{t+1} = I_t \cup \{i_{t+1}\}$. 
Sanders et al. \cite{sanders2009} and also Epstein and Levin \cite{epstein2006robust}
defined the \emph{migration factor} to give a measure for the amount of repacking. 
The migration factor is defined as the total size of all items that are
moved between the solutions divided by the size of the arriving item.
Formally the migration factor of two packings $B_t$ and $B_{t+1}$ is defined by 
$\sum_{j \leq t: B_t(i_j) \neq B_{t+1}(i_j)} s(i_j)/s(i_{t+1})$.
\subsection{LP-Formulation}
Let $I$ be an instance of bin packing with $m$ different item sizes $s_1, \ldots, s_m$. Suppose
that for each item $i_k \in I$ there is a size $s_j$ with $s(i_k) = s_j$.
A configuration $C_i$ is a multiset of sizes $\{ a(C_i,1):s_1,  a(C_i,2):s_2, \ldots a(C_i,m):s_m \}$
with $\sum_{1\leq j\leq m} a(C_i,j)s_j \leq 1$, where $a(C_i,j)$
denotes how often size $s_j$ appears in configuration $C_i$. We denote by $C$ the set of all configurations. Let $|C|=n$.
We consider the following LP relaxation of the bin packing problem:
\begin{align*}
  &\min \nor{x}_1\\
  &\sum_{C_i \in C}x_{i} a(C_i,j) \geq b_j \qquad \forall 1 \leq j \leq m\\
  &x_{i}\geq 0 \qquad \forall 1 \leq j \leq n
\end{align*}
Component $b_j$ states the number of items $i$ in $I$ with $s(i) = s_j$ for $j = 1, \ldots , m$.
This LP-formulation was first described by Eisemann
\cite{eisemann1957trim}. Suppose that each size $s_j$ is larger or equal to
$\epsilon/2$ for some $\epsilon \in (0,1/2]$. Since the number of different item sizes is $m$,
the number of feasible packings for a bin is bounded by $|C|= n \leq
(\frac{2}{\epsilon} +1)^{m}$.
Obviously an optimal integral solution of the LP gives a solution to our bin packing problem. We denote
by $\mathit{OPT}(I)$ the value of an optimal solution. An optimal fractional solution is a lower bound 
for the optimal value. We denote the optimal fractional solution by $\mathit{LIN}(I)$.

% To solve Step 2 we use the Max-Min Resource Sharing first presented by
% Grigoriadis et. al. \cite{grigoriadis2001approximate} and improved by
% Jansen \cite{jansen2004approximation} having running time
% $O(\delta^{-7})$. By reducing the solution, we get a basic feasible

\subsection{Rounding}
We use a rounding technique based on the offline APTAS by Fernandez de La Vega \&
Lueker \cite{de1981bin}. As we plan to modify the rounding through the dynamic rounding algorithm we give a 
more abstract approach on how we can
round the items to obtain an approximate packing.
At first we divide the set of items into \emph{small} ones and
\emph{large} ones. An item $i$ is called \emph{small} if $s(i) <
\epsilon/2$, otherwise it is called \emph{large}. Instance $I$ is partitioned accordingly
into the large items $I_{L}$ and the small items $I_{S}$.
We treat small items and large items differently. Small items can be packed using a greedy algorithm
and large items need to be rounded using a rounding function.
We define a \emph{rounding function} as a function $R: I_L \mapsto \mathbb{N}$ which maps each large item $i$
to a \emph{group} $j$. By $R^j$ we denote the set of items being mapped to the same 
group $j$, i.e.
$R^j = \{i \in I_L \mid R(i)=j \}$. By $\lambda_{j}^R$ we denote an item $i$ with 
$s(i) = \max \{s(i_k) \mid i_k \in R^j\}$.
Given an instance $I$ and a rounding function $R$, we define the rounded instance $I^R$ by rounding the size of 
every large item $i \in R^j$ for $j \geq 1$ up to the size $s(\lambda_{j}^R)$ of the largest item in its group. 
Items in $R^0$ are
excluded from instance $I^R$. We write $s_R (i)$ for the rounded size of item $i$ in $I^R$.
Depending on constants $c$ and $d$, we define the following properties for a rounding function $R$.
\begin{itemize}
\item[(A)] $\max\menge{R(i) \mid i \in I_L} = c/\epsilon^{2}$ for a constant $c \in \mathbb{R}^+$
\item[(B)] $|R^i| = |R^j|$ for all $i,j \geq 2$
\item[(C)] $|R^0| = d |R^1|$ for a constant $d \in \mathbb{R}^+$ with $d \geq 1$
\item[(D)] $s(i)\leq s(j)\Leftrightarrow R(i)\geq R(j)$ 
\end{itemize}
Any rounding function fulfilling property (A) has at most $\Theta(1/ \epsilon^2)$ different item sizes and hence 
instance
$I^R$ can now be solved approximately using the LP relaxation. The resulting LP relaxation has $\Theta(1/ \epsilon^2)$ 
rows
and can be solved approximately with accuracy $(1+ \delta)$ using the max-min resource sharing \cite{grigoriadis2001approximate}
in polynomial time.
Based on the fractional solution we obtain an integral solution $y$ of the LP with $\nor{y}_1 \leq (1+\delta)\mathit{LIN}(I^R) + C$
for some additive term $C \geq 0$.
We say a packing $B$ \emph{corresponds} to a rounding $R$ and solution $y$ if items 
in $R^1, \ldots ,R^m$ are packed by $B$ according to the integral solution $y$ of the LP. The LP is defined by
instance $I^R$. Items in $R^0$ are each packed in separate bins.
\begin{lem}\label{lem10}
  Given instance $I$ with items greater than $\epsilon/2$ and a rounding function $R$ fulfilling properties
  (A) to (D), then $\mathit{OPT}(I^R) \leq \mathit{OPT}(I)$ and $|R^0| \leq \frac{2d}{c} \epsilon \mathit{OPT}(I)$.
  Let $y$ be an integral solution of the LP for instance $I^R$ with $\nor{y}_1 \leq (1+ \delta)\mathit{LIN(I^R)} +C$ 
  for some value $C \geq 0$, let $B$ be a packing of $I$ which corresponds to $R$ and $y$ and let  
  $\epsilon' =\frac{2d}{c} \epsilon$. Then
  \begin{align*}
    \max_{i}\menge{B_{t}(i)} = \nor{y}_1 + |R^0| \leq (1+\epsilon'+\delta)\mathit{OPT}(I)+C.
  \end{align*}
\end{lem}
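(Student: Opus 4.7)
The plan is to prove the three assertions of the lemma in turn: the bound $\mathit{OPT}(I^R)\leq\mathit{OPT}(I)$, the volume bound on $|R^0|$, and then the combined bound on the number of bins used by $B$.

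First I would establish $\mathit{OPT}(I^R)\leq\mathit{OPT}(I)$ by a standard linear-grouping shifting argument. Starting from an optimal packing of $I$ with $\mathit{OPT}(I)$ bins, I build a feasible packing of $I^R$ as follows: the items in $R^0$ are simply removed (they are not part of $I^R$), and for each $j\geq 1$ the rounded items of $R^j$ are placed into the slots that originally held items of $R^{j-1}$. Property $(D)$ guarantees $s(\lambda_j^R)\leq s(i)$ for every $i\in R^{j-1}$, so each substitution respects the bin-capacity constraint. The required inequality $|R^j|\leq|R^{j-1}|$ follows from $(C)$ in the case $j=1$ (since $|R^0|=d|R^1|\geq|R^1|$) and from $(B)$ for $j\geq 2$. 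No item outside the $R^j$ hierarchy is disturbed, so the resulting packing still uses at most $\mathit{OPT}(I)$ bins.

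Next, for the bound $|R^0|\leq\frac{2d}{c}\epsilon\,\mathit{OPT}(I)$ I would argue by item count. Since each item has size at least $\epsilon/2$ and the total volume of items in $I$ is at most $\mathit{OPT}(I)$, the number of large items satisfies $|I_L|\leq 2\mathit{OPT}(I)/\epsilon$. On the other hand, by property $(A)$ there are $c/\epsilon^2$ groups, and by $(B)$ each group $R^j$ with $j\geq 2$ has the same cardinality, which in particular is at least $|R^1|$. Hence $|I_L|\geq(c/\epsilon^2-1)|R^1|$, which combined with the previous bound gives $|R^1|\leq(2\epsilon/c)\mathit{OPT}(I)$ (absorbing the lower-order term). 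Applying $(C)$ yields $|R^0|=d|R^1|\leq(2d\epsilon/c)\mathit{OPT}(I)=\epsilon'\mathit{OPT}(I)$.

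Finally, for the bound on $\max_i\menge{B_t(i)}$: by definition of a packing corresponding to $R$ and $y$, the items of $R^1,R^2,\ldots$ are packed into $\nor{y}_1$ bins according to $y$, while the $|R^0|$ excluded items each receive their own separate bin. Therefore $\max_i\menge{B_t(i)}=\nor{y}_1+|R^0|$. Using the hypothesis $\nor{y}_1\leq(1+\delta)\mathit{LIN}(I^R)+C$ together with the trivial inequality $\mathit{LIN}(I^R)\leq\mathit{OPT}(I^R)$ and the first step $\mathit{OPT}(I^R)\leq\mathit{OPT}(I)$, I get $\nor{y}_1\leq(1+\delta)\mathit{OPT}(I)+C$. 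Adding $|R^0|\leq\epsilon'\mathit{OPT}(I)$ from the second step yields the claim.

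The main obstacle is the shifting step: one has to verify carefully that properties $(B)$ and $(C)$ really do cascade the group sizes correctly through the original bin slots, in particular at the boundary between $R^1$ and $R^2$ where $(B)$ only just begins to apply, and that the substitution of an item by a strictly smaller rounded item in the same slot never forces reorganising other, unrelated items sharing the same bin.
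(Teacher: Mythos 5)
Your proposal is correct and follows essentially the same route as the paper: an injective shifting argument (rounded items of $R^j$ replace items of $R^{j-1}$) gives $\mathit{OPT}(I^R)\leq\mathit{OPT}(I)$, a volume/counting argument using the size lower bound $\epsilon/2$ and the $\Theta(1/\epsilon^2)$ groups bounds $|R^0|$, and the final estimate combines these with $\mathit{LIN}(I^R)\leq\mathit{OPT}(I^R)$. The boundary issue you flag between $R^1$ and $R^2$ (property (B) only covering indices $\geq 2$) is glossed over in the paper's own proof as well and is harmless in the paper's application, where all groups $R^j$ with $j\geq 1$ have equal cardinality.
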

\begin{proof}
	Let $m = \max\menge{R(i) \mid i \in I_L}$. Let $R^i$ be the set of items in rounding group $i$, which corresponds
	to their rounded sizes and let $\mathcal{R}^i$ be the set of items in $R^i$, which corresponds to their actual
	size. Instance $I^R$ contains every item from $R^1$ to $R^m$, 
	while items from $R^0$ are excluded.
	By property (D) we know, that items in $R^i$ are larger or equal than items in $R^{i+1}$.
	By property (C) we find for every item in $\mathcal{R}^1$ an unique item in $R^0$ with larger or equal size, 
	since the largest item
	in $R^0$ to which all items are being rounded up is smaller than any item in $R^1$.
	Using property (B) for each item in $\mathcal{R}^{i+1}$ we find a unique larger item in $R^i$.
	Therefore we have for every item in the rounded instance $I^R$ an item with larger size in instance $I$ and hence
	\begin{align*}
	\mathit{OPT(I^R)} \leq \mathit{OPT(I)}.
   \end{align*}
   Since the packing $B$ corresponds to a solution $y$, $B$ gives a solution with 
   $\max_{i}\menge{B_{t}(i)} \leq (1+\delta)\mathit{LIN}(I^R)+C+|R^0|$ bins and since $\mathit{LIN}(I^R) \leq \mathit{OPT(I^R)} \leq \mathit{OPT(I)}$
   we obtain that $\max_{i}\menge{B_{t}(i)} \leq  (1+\delta)\mathit{OPT}(I)+C+|R^0|$. Further, we can bound $|R^0|$.
   Since every item in $I$ is of size at least $\epsilon/2$ there is a lower bound for the optimum:
   $\mathit{OPT(I)} \geq \epsilon/2 \sum_{0 \leq i \leq m} |R^i| \geq \epsilon/2 \sum_{0 \leq i \leq m} |R^0|/d
   = \frac{\epsilon (m+1) |R^0|}{2 d} \geq \frac{c |R^0|}{2 d \epsilon}$.
   Resolving this inequality, we get $|R^0| \leq \frac{2 \epsilon d }{c} \mathit{OPT}(I)$ and hence 
	$|R^0| \leq \epsilon' \mathit{OPT}(I)$. Since
   $c$ and $d$ are constant we know $|R^0| \leq \epsilon' \mathit{OPT}(I) = \mathcal{O}(\epsilon \mathit{OPT(I)})$ which implies together with
   the inequality $\max_{i}\menge{B_{t}(i)} \leq (1+\delta)\mathit{OPT}(I)+C+|R^0| \leq 
   (1+\epsilon'+\delta)\mathit{OPT}(I)+C$.
\end{proof}
How can we handle the small items? Actually, small items do not make problems at all. We can pack them via 
FirstFit \cite{coffman1984approximation}
on top of the existing large items and still obtain a good solution. FirstFit is a greedy algorithm which simply places
the current item into the first bin having enough space. FirstFit opens a new bin of the item does not fit into any
used bin.
\begin{lem}\label{lem11} \cite{de1981bin}
	Let $I$ be an instance with small and large items and given a packing $B$ of the large items with
	$\max_{i}\menge{B_{t}(i)} \leq K$ for some $K \geq 1$.
  Packing the small items via FirstFit on top of packing $B$ gives a new packing of instance $I$ which uses
  \begin{align*}
    \max \menge{K,(1+\epsilon)\mathit{OPT}(I)+1}
  \end{align*}
  bins.
\end{lem}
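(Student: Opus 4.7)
My plan is to follow the classical FirstFit analysis of Fernandez de la Vega and Lueker, splitting into the two cases suggested by the $\max$ on the right-hand side.

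First, if FirstFit never opens a new bin while placing the small items, then the final packing uses exactly the $K$ bins of $B$, so the bound $\max\{K, (1+\epsilon)\mathit{OPT}(I)+1\}$ is trivially met. So I may assume FirstFit opens at least one new bin; let $N>K$ be the final number of bins used. I then consider the moment when the $N$-th (last) bin was opened by FirstFit for some small item $i^*$ with $s(i^*)<\epsilon/2$. By definition of FirstFit, $i^*$ did not fit into any of the bins $1,\dots,N-1$ at that moment, so each of those bins had load strictly greater than $1-s(i^*)>1-\epsilon/2$. Since loads only increase as more items are placed, this inequality still holds for the bins $1,\dots,N-1$ at the end of the process.

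Summing the loads gives $\sum_{i\in I}s(i)\geq\sum_{j=1}^{N-1}\mathrm{load}(j)>(N-1)(1-\epsilon/2)$. Because each optimal bin has capacity at most $1$, we have the standard lower bound $\mathit{OPT}(I)\geq\sum_{i\in I}s(i)$, hence
\[
N-1<\frac{\mathit{OPT}(I)}{1-\epsilon/2}.
\]
The only calculation to double-check is that $\frac{1}{1-\epsilon/2}\leq 1+\epsilon$ for $\epsilon\in(0,1/2]$; this reduces to $\epsilon(1-\epsilon)\geq 0$, which indeed holds on that range. Combining yields $N\leq(1+\epsilon)\mathit{OPT}(I)+1$, and together with the first case this gives the claimed bound $\max\{K,(1+\epsilon)\mathit{OPT}(I)+1\}$.

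There is no real obstacle here; the argument is a two-line FirstFit analysis. The only subtle point worth writing carefully is the monotonicity argument: once bin $N$ has been opened, later small items may be placed in earlier bins under FirstFit, but this can only increase their loads, so the lower bound $1-\epsilon/2$ on the load of bins $1,\dots,N-1$ at the moment bin $N$ is opened propagates to the final packing. This ensures the lower bound on $\sum_i s(i)$ is valid for the final configuration, which is all that is needed.
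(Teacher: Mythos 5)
Your proof is correct: the case split, the observation that at the moment the last bin is opened every earlier bin has load above $1-\epsilon/2$, the volume bound $\mathit{OPT}(I)\geq\sum_i s(i)$, and the estimate $\frac{1}{1-\epsilon/2}\leq 1+\epsilon$ for $\epsilon\in(0,1/2]$ are all sound, and the monotonicity remark closes the only subtle gap. The paper itself gives no proof of this lemma --- it is cited from Fernandez de la Vega and Lueker --- and your argument is exactly the standard FirstFit analysis that the citation refers to, so there is nothing further to compare.
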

Given instance $I = \{i_1, \ldots ,i_t \}$, we define $m$ by $m = \lceil 1/\epsilon^2 \rceil$ if 
$\lceil 1/\epsilon^2 \rceil$ is even and otherwise $m = \lceil 1/\epsilon^2 \rceil +1$. By definition $m$ is
always even.
For every instance $I$ we find a rounding function $R$
with rounding groups $R^0, R^1, \ldots R^m$ which fulfills properties (A)-(D) such that $|R^0| < 2|R^1|$
and $|R^0| \geq |R^1|$.
\begin{algo}\label{alg6}
\   
\begin{enumerate}
\item Partition the large items according to the rounding function $R$ in groups $R^0, \ldots, R^m$
\item Round up the size of each large item $i \in R^1, \ldots ,R^m$ to $s(\lambda_{i}^R)$ to obtain instance $I^R$
\item Compute a fractional solution $x$ of the LP defined by $I^R$ approximately with ratio $(1+\bar{\delta})$
\item Round up each component of the fractional solution to obtain an integral solution $y$ for the LP for instance $I^R$
\item Pack items in $R^1, \ldots ,R^m$ according to the integral solution $y$
\item Open a bin for each item $i$ with $R(i)=0$
\item Pack the small items in $I_{S}$ via FirstFit
\end{enumerate}
\end{algo}
A solution $x$ of $I^R$ with ratio $(1+\bar{\delta})$ having $m+1$ non-zero components can be computed using
max-min resource sharing \cite{grigoriadis2001approximate}.
According to Lemma \ref{lem10} and 11, the algorithm described above produces a solution with approximation
$\leq (1+ \epsilon' + \bar{\delta})\mathit{OPT}+m+1$ with $\epsilon' \leq \frac{2d}{c} \epsilon
\leq 4 \epsilon$.
\subsection{Online Bin Packing}
Let us consider the case where items arrive online. As new items arrive we are allowed to repack several items but
we intend to keep the migration factor as small as possible.
We present operations that modify the current rounding $R_t$ and packing $B_t$ to give a solution for
the new instance. The given operations worsen the approximation but by applying the results from the previous section
we can maintain an approximation ratio that depends on $\epsilon$.
The presented rounding technique is similar to the one used in \cite{epstein2006robust}. 
In our algorithm we use approximate solutions of ILPs in contrast to the APTAS of Epstein \& Levin which solve
the ILPs optimally. Handling with approximate ILPs results in a different analysis of the algorithm because
many helpful properties of optimal solution are getting lost.

Note that in an online scenario of bin packing  where large and small items arrive online, small items do not
need to be considered. We use the same techniques as in \cite{epstein2006robust} to pack small items. As a small item 
arrives we place it via FirstFit \cite{coffman1984approximation}. 
In this case FirstFit 
increases the number of bins being used by at most $1$ (\cite{de1981bin}) and the migration factor is zero as we 
repack no item.
Whenever a new large item arrives several small items might also need to be replaced. Every small item in a bin
that is repacked by the algorithm, is replaced via FirstFit. Packing small items with this strategy 
does not increase the number of bins that need to
be repacked as a large item arrives. Later on the migration factor will solely be determined by the number of bins
that are being repacked. More precisely, we will prove that the number of bins, that need to be repacked is bounded
by $\mathcal{O}(1/\epsilon^3)$.
Therefore we assume without loss of generality that every arriving item is large, i.e. has a size $\geq \epsilon/2$ 
(see also \cite{epstein2006robust}).
Our rounding $R_{t}$ will be constructed by three different \emph{operations},
called the \emph{insertion, creation} and \emph{union} operation. The
insertion operation is performed whenever a large item arrives. This
operation is followed by a creation or an union operation depending on the phase the algorithm is in. 
Let $I = \{i_1, \ldots ,i_t \}$ be the existing instance as defined above, let $R$ be the corresponding rounding function, let $x$ 
be a fractional solution of the LP generated for the rounded instance $I^R$ and let $B$ be the current packing
of items in $I$. We define two subgroups of $R^0$ denoted by $R^{1.5}$ and $R^{2.5}$ in the creation phase, 
which are also
being modified by the operations.
Let $I' = I \cup \{ i_{t+1} \}$ be the new instance. We use the following operations that modify
the current rounding $R$, the packing $B$ and the fractional and integral LP solution $x$ and $y$.
We denote with $R'$, $B'$, $x'$ and $y'$ the new rounding, packing and fractional/integral LP solutions for instance $I'$
\subsubsection*{Insertion Step}
Find the largest $j$ with $s(\lambda_{j}^R) \geq s(i_{t+1})$. 
Set $R'(i_{t+1})=j$ and $B'(i_{t+1})= B(\lambda_{j}^R)$.
For every $k= 1, \ldots, j$ we define $R'(\lambda_{k}^R) = k-1$ and $B'(\lambda_{k}^R) = B(\lambda_{k-1}^R)$.
Set $x'=x$ and $y'=y$.
\begin{figure}
\includegraphics[width=\textwidth]{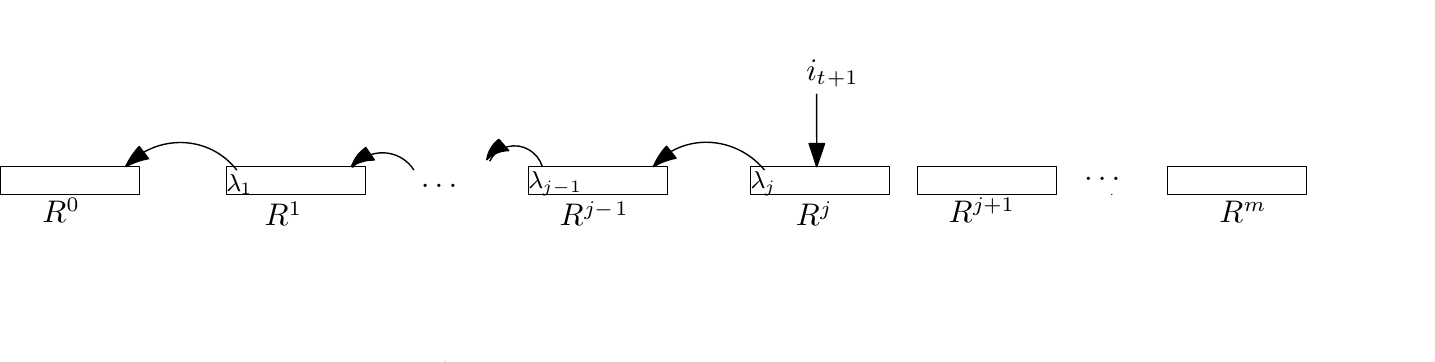}
\caption{Insert operation}
\end{figure}
\subsubsection*{Modified Insertion Step}
During the creation phase, the algorithm uses the modified insertion operation.
Find the largest $j$ ($j=1.5$ and $j= 2.5$ included) with $s(\lambda_{j}^R) \geq s(i_{t+1})$. 
Set $R'(i_{t+1})=j$ and $B'(i_{t+1})= B(\lambda_{j}^R)$.
For every $k= 1,4,5, \ldots, j$ we define $R'(\lambda_{k}^R) = k-1$ and $B'(\lambda_{k}^R) = B(\lambda_{k-1}^R)$.
For every $k = 1.5,2,2.5,3$ we define $R'(\lambda_{k}^R) = k-0.5$ and $B'(\lambda_{k}^R) = B(\lambda_{k-0.5}^R)$.
Set $x'=x$ and $y'=y$.

\subsubsection*{Creation Phase}
The creation phase consists of $k$ creation steps, where $k= |R^1|$. At the end of each creation phase we intend
to have new rounding groups $R^1$ and $R^2$ created from the subgroups of $R^0$ named $R^{1.5}$ and $R^{2.5}$.
At the beginning of the creation phase we always have $|R^0|= 2k$ and $R^{1.5}$ and $R^{2.5}$ are empty.
In the first step we change the rounding group
for all items $i$ with $R(i)=j \geq 1$ to $R'(i)=j+2$. 
Furthermore we say the $k$ largest items of $R^0$ belong to $R^{1.5}$ and the $k$ smallest items belong to $R^{2.5}$.
In each of the $k$ creation steps we change the rounding function for the largest items $\lambda_{1.5}^R$ 
and $\lambda_{1.5}^R$. Set $R'(\lambda_{1.5}^R) = 1$
and $R'(\lambda_{2.5}^R) = 2$. Since items $\lambda_{1.5}^R$ and $\lambda_{2.5}^R$ are moved from $R^0$ to $R^1$ 
and $R^2$ they have to be covered by
the LP. Therefore we increase the value of the LP solution by $x'_i = x_i + 1$, $x'_j = x_j+1$ and 
$y'_i = y_i + 1$, $y'_j = y_j+1$, where $i,j$ are defined such that 
$C_i = \{1:s_{R'}(\lambda_{1.5}^R) \}$ and $C_j = \{1:s_{R'}(\lambda_{2.5}^R) \}$. For $k \neq i,j$ set $x'_k = x_k$
and $y'_k = y_k$.
\begin{figure}
\includegraphics[width=\textwidth]{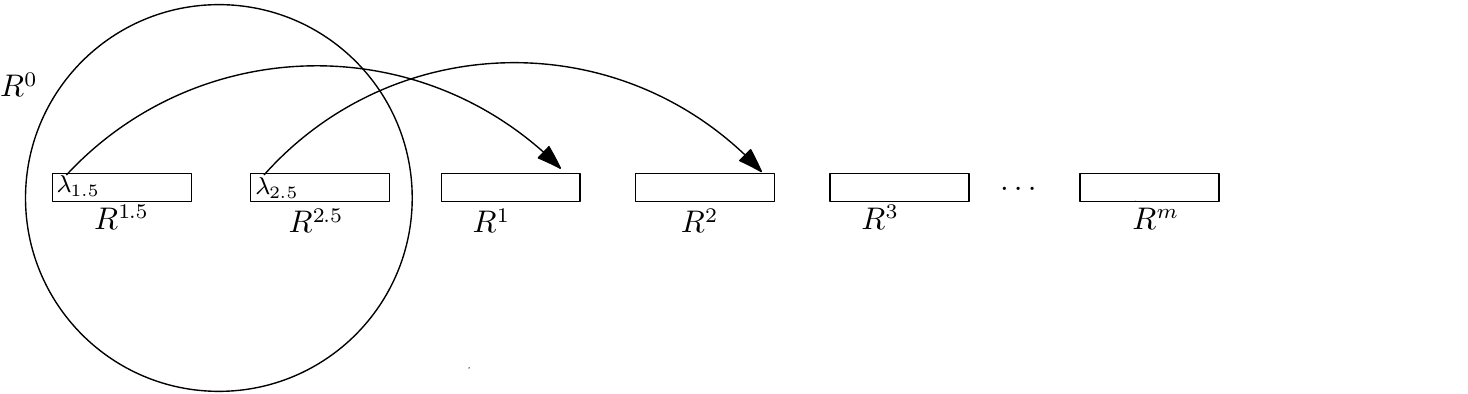}
\caption{Create operation}
\end{figure}
\subsubsection*{Union Phase}
The union phase consists of $k$ union steps, where $k= |R^1|$. At the end of each union phase we have made
out of $4$ roundings groups $2$ rounding groups with size doubled.
For the first union step we determine the largest index $j$ with $|R^j| < |R^{j+1}|$. If there is no such index then
set $j=m$. In each step now set $R'(\lambda_{j}^R) = j-1$ and $R'(\lambda_{j-2}^R) = j-3$ and for the other items $i$
we define $R'(i) = R(i)$. Modify the packing for $\lambda_{j}^R$ and $\lambda_{j-2}^R$ by $B'(\lambda_{j}^R) = 
B(\lambda_{j-2}^R)$ and
place $\lambda_{j-2}^R$ into a new bin. Modifying the packing this way implies that we have to change one configuration
of the fractional and integral LP solution $x$ and $y$ and add one configurations for the additional bin. 
Let $C_i$ be the
configuration used by $B'(\lambda_j)$. Configuration $C_i$ is replaced by a configuration $\hat{C_i}$ where an item
of size $s_{R'}(\lambda_{j-2}^R)$ is exchanged by an item of size $s_{R'}(\lambda_{j}^R)$. Furthermore we add another 
configuration $C_{\ell}$ with an item of size $s_{R'}(\lambda_{j-2}^R)$.
\begin{figure}
\includegraphics[width=\textwidth]{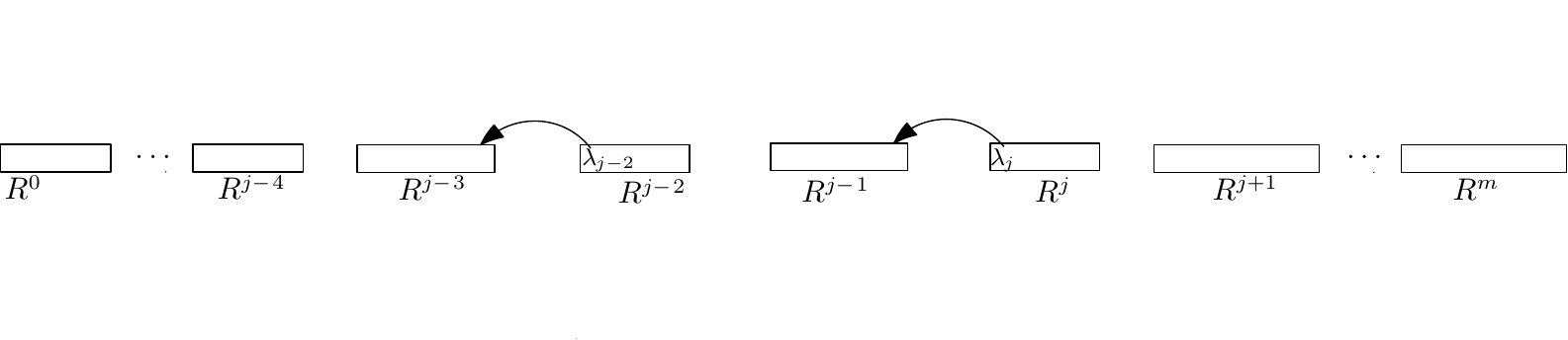}
\caption{Union operation}
\end{figure}

Note that each repacking that we perform in the operations is valid because we always replace items by smaller ones.
New packings $B'$ are created in a way that they correspond to new integer solution $y'$. We have to prove that
this solutions $y'$ is feasible. Note also, that in a creation operation and in a union operation two additional
non-zero components of size $1$ might be created.
\begin{lem}\label{lem12}
	Applying any operation above on a rounding $R$ and ILP solution $y$ with corresponding packing $B$ defines
	a new rounding $R'$ and a new integral solution $y'$. Solution $y'$ is a feasible solution of the LP
	for instance $I^{R'}$.
\end{lem}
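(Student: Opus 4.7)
The plan is to prove Lemma \ref{lem12} by case analysis on the four operations (insertion, modified insertion, creation, union). For each operation I will check two invariants: (i) every bin of $B'$ still respects the capacity constraint $\sum_{i \,:\, B'(i)=b} s(i) \leq 1$, and (ii) $y'$ satisfies $\sum_i y'_i\, a(C_i, k) \geq |R'^k|$ for every rounding group $k$ of $R'$. The guiding principle for (i) is that every reassignment performed inside a bin replaces an item by a weakly smaller one, so no bin's total load ever grows. The guiding principle for (ii) is that either group cardinalities are preserved and configurations remain feasible under the (weakly smaller) new rounded sizes, or the algorithm explicitly patches $y$ with new unit columns whose increments match the row changes in $b'$.

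For the insertion and modified insertion steps, the key observation is that the cascade $R'(\lambda_k^R) = k-1$ for $k = 1, \ldots, j$ acts as a pipeline: each group of index $\geq 1$ loses its top item but simultaneously receives the old top of the next group, so $|R'^k| = |R^k|$ and hence $b'_k = b_k$ throughout. One then shows $s(\lambda_k^{R'}) \leq s(\lambda_k^R)$ at every level, so every old configuration $C_i$ remains feasible under $R'$ because $\sum_\ell a(C_i, \ell)\, s(\lambda_\ell^{R'}) \leq 1$; consequently $y' = y$ satisfies (ii). For (i), each repacked bin loses a slot's content and gains something weakly smaller, including $i_{t+1}$ itself, which satisfies $s(i_{t+1}) \leq s(\lambda_j^R)$ by the choice of $j$.

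For the creation step, the initial relabeling $R^j \to R^{j+2}$ for $j \geq 1$ is a pure reindexing of groups that preserves feasibility of $y$; the subgroups $R^{1.5}$ and $R^{2.5}$ sit inside $R^0$ and are packed in separate bins, so they do not enter the LP. Each subsequent creation micro-step raises $|R^1|$ and $|R^2|$ by exactly one and is matched by the updates $y'_i = y_i + 1$, $y'_j = y_j + 1$, where $C_i$ and $C_j$ are the singleton columns $\{1 : s_{R'}(\lambda_{1.5}^R)\}$ and $\{1 : s_{R'}(\lambda_{2.5}^R)\}$; these columns are feasible because a single item always fits in a bin. For the union step, placing $\lambda_j^R$ into the bin formerly holding $\lambda_{j-2}^R$ is a smaller-for-larger swap, and $\lambda_{j-2}^R$ is given a fresh bin. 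The LP update mirrors the move exactly: $C_i$ is edited into $\hat{C_i}$ by exchanging a $\lambda_{j-2}^R$-slot for a $\lambda_j^R$-slot (still summing to $\leq 1$), and a fresh singleton column $C_\ell = \{1 : s_{R'}(\lambda_{j-2}^R)\}$ is added, so that together the column changes patch the four row updates in $b'$ at indices $j-3, j-2, j-1, j$. The main obstacle will be exactly this union-step bookkeeping: one must track how the rounded sizes reference four shifted groups simultaneously and verify that $\hat{C_i}$ and $C_\ell$ jointly absorb all four changes in $b'$; everything else reduces to the invariant that no reassignment enlarges a bin's total load.
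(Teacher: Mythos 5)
Your proposal is correct and follows essentially the same route as the paper: a case analysis over the operations, arguing that insertion leaves the right-hand side unchanged while rounded sizes only shrink (so the old configurations, hence $y'=y$, stay feasible), and that for creation and union the explicit unit-column additions and the edited configuration exactly cover the changed components of $b'$, with the smaller-for-larger exchanges keeping every configuration within capacity. The only difference is cosmetic — you also verify bin capacities of $B'$, which the paper dispatches in the remark preceding the lemma — so no further comparison is needed.
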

\begin{proof}
	We have to analyze how the LP for instance $I^{R'}$ changes in comparison to the LP for instance $I^R$.
	\\{\bf Insertion Operation:} The right hand side of the LP derived from $R'$ does not change at all since
	the right hand side is determined by the cardinalities $|R'^1|=|R^1|, \ldots, |R'^m|=|R^m|$. 
	For some $j \geq 1$ let $R^j$ be the the rounding group where the new item is inserted.
	By construction of the insertion operation for each rounding group $R^{\ell}$ with $\ell = 1, \ldots, j$, there is
	one item that is inserted into group $R'^{\ell}$ and one item that is shifted out.
	Let $\iota_{\ell}^R$ be the second largest item of rounding group $R^{\ell}$.
	Since the largest item $\lambda_{\ell}^R$ in group $R^{\ell}$ is shifted to the next group,
	the size $s_{R'}(i)$ of item $i$ in a group $R^{\ell}$ is defined by $s_{R'}(i) = \iota_{\ell}^R$.
	Therefore each item in $I^{R'}$ is rounded to the previous smaller value since $s(\iota_{\ell}^R) 
	\leq s(\lambda_{\ell}^R$).
	Hence configurations of the LP solution for $I^R$ can be transformed into feasible configurations for $I^{R'}$
	i.e. $\nor{y'}_1 = \nor{y}_1$.\\
	{\bf Creation Operation:}
	Note that the rounding groups $R^{\ell}$, for $\ell = 1, \ldots, m+2$ remain identical; i.e. $R'^{\ell} = R^{\ell}$.
	The groups $R'^1$ and $R'^2$ get both a new item, but of smaller size. Therefore the sizes $s_r(i)$ of all items
	$i \in I^R$ are not modified by a creation operation. We have $s_R(i) = s_{R'}(i)$ for items in groups 
	$R^1, \ldots , R^{m+2}$. Therefore the matrix $A = (a((i,j)))$ remains the same. Only the right hand side $b'$
	of the LP from instance $I^{R'}$ is modified (i.e. $\nor{b'-b}_1 =2$).
	As two new configurations are being added to $x$ and $y$ they cover exactly the enhanced right hand side
	and are therefore a feasible solution of the LP from instance $I^{R'}$.\\
	{\bf Union Operation:}
	In the union operation we basically change only 4 rounding groups. Suppose we merge rounding group $R^{j-3}$ with
	$R^{j-2}$ and rounding group $R^{j-1}$ with $R^{j}$. While the size of $|R'^{j-3}| = |R^{j-3}|+1$ and 
	$|R'^{j-1}| = |R^{j-1}|+1$ is incremented the size of $|R'^{j}| = |R^{j}|-1$ and $|R'^{j-2}| = |R^{j-2}|-1$
	is reduced. Similar to the creation operation, this leads to a change in the right
	hand side of the LP. Two components of the right hand side, which correspond to $s(\lambda_{j}^{R'})$
	and $s(\lambda_{j-2}^{R'})$ are reduced by $1$ and two other components, which correspond to the 
	$s(\lambda_{j-1}^{R'})$ and $s(\lambda_{j-3}^{R'})$ are increased by $1$.
	Furthermore the sizes of items in $R'^{j}$ and $R'^{j-2}$ are equal or smaller than the sizes of
	items in $R^j$ and $R^{j-2}$ since $s_{R'}(i) = \iota_{j}^R$
	for all items $i \in R'^{j}$ and $s_{R'}(i) = \iota_{j-2}^R$ for all items $i \in R'^{j-2}$.
	 $\lambda_{j}^R$
	and $\lambda_{j-2}^{R}$ are shifted to the next rounding groups. 
	Consider a feasible configuration $C$ of the LP for instance $I^R$. Then the modified configuration $\bar{C}$
	(with replaced item sizes) is also feasible in the LP for instance $I^{R'}$. The new solutions $x'$ and $y'$ use
	the modified configurations and cover the right hand side of the LP.
\end{proof}
The operations are used as described in Algorithm \ref{alg7} below. We apply the algorithm on a rounding function
$R_0$ and instance $I_0 = \{i_1, \ldots i_{T} \}$. We suppose that $|R_{0}^0| = |R_{0}^1| = \ldots = |R_{0}^m| = K$
for some $K >0$ and hence $T = K(m+1)$.
An improve($a,x,y,\bar{\delta}$) statement stands for a call of Algorithm \ref{alg5} 
with improvement $\alpha= a$, fractional solution $x$ and integral solution $y$. The variable part $x^{var}$ 
is defined by $x^{var} = 2 
\frac{\alpha(1/ \bar{\delta} +1)}{\nor{x}_1}x$. 
After an improve call the packing is changed according to the new integral solution.
Since during a creation operation and a union operation two additional non-zero components of size $1$ might appear,
we change the parameter $\ell$ of Algorithm \ref{alg5} slightly to $\ell'$. Parameter $\ell'$ is defined maximally such that
the sum of the smallest components $y'_1, \ldots , y'_{\ell}$ are 
$\sum_{1\leq i \leq \ell} y'_{a_i} \leq (m+2)(1/ \delta +2) +2$. The two additional non-zero components belong to
components $y_1 , \ldots y_{\ell +2}$ and are therefore reduced in step 6 along with the others.
\begin{algo}\label{alg7}
$\phantom{f}$\\
 \begin{algorithm}[H]
    \For{i := 1 to K}{
	get new item\;
	improve($1, x,y,\bar{\delta}$);
       insert\;}
	\For{i := 1 to m/2}{
		\tcc{Creation Phase}
		\For{j := 1 to K}{
			get new item\;
			improve($1, x,y,\bar{\delta}$)\;
       		 modified insert\;
			create\;
		}
		\tcc{Union Phase}
		\For{j := 1 to K}{
			get new item\;
			improve($2, x,y,\bar{\delta}$)\;
       		insert\;
			union\;
		}
	}
  \end{algorithm}
%%  \begin{aalg}{summe}{}
%%    \FORTO{i = 1}{K}{
%%    	\CMD{\keyword{get new item;}}
%% 	\CMD{\keyword{improve($1, x,y,\bar{\delta}$);}}
%% 	\CMD{\keyword{insert;}}
%% 	}
%% \FORTO{i = 1}{m/2}{
%% 	\FORTO{j = 1}{K}{
%% 		\COMMENT{creation phase}
%%    	\CMD{\keyword{get new item;}}
%% 		\CMD{\keyword{improve($1, x,y,\bar{\delta}$);}}
%% 		\CMD{\keyword{modified insert;}}
%% 		\CMD{\keyword{create;}}
%% 	}
%% 	\FORTO{j = 1}{K}{
%% 		\COMMENT{union phase}
%%    	\CMD{\keyword{get new item;}}
%% 		\CMD{\keyword{improve($2, x,y,\bar{\delta}$);}}
%% 		\CMD{\keyword{insert;}}
%% 		\CMD{\keyword{union;}}
%% 	}
%% 	}
%%  \end{aalg}
\end{algo}
In the following we present how the algorithm changes the rounding groups for $m = 6$. The table presents the state of each
rounding
groups after each phase.
\begin{table}\centering
\begin{tabular}{c|c|c|c|c|c|c|c|c|c}
phases & $|R^0|$ & $|R^1|$ & $|R^2|$ & $|R^3|$ & $|R^4|$ & $|R^5|$ & $|R^6|$ & $|R^7|$ & $|R^8|$\\
\hline
start & $K$ & $K$ & $K$ & $K$ & $K$ & $K$ & $K$ & $0$ & $0$ \\
insertion & $2K$ & $K$ & $K$ & $K$ & $K$ & $K$ & $K$ & $0$ & $0$ \\
creation & $K$ & $K$ & $K$ & $K$ & $K$ & $K$ & $K$ & $K$ & $K$\\
union & $2K$ & $K$ & $K$ & $K$ & $K$ & $2K$ & $2K$ & $0$ & $0$ \\
creation & $K$ & $K$ & $K$ & $K$ & $K$ & $K$ & $K$ & $2K$ & $2K$ \\
union & $2K$ & $K$ & $K$ & $2K$ & $2K$ & $2K$ & $2K$ & $0$ & $0$ \\
creation & $K$ & $K$ & $K$ & $K$ & $K$ & $2K$ & $2K$ & $2K$ & $2K$ \\
union & $2K$ & $2K$ & $2K$ & $2K$ & $2K$ & $2K$ & $2K$ & $0$ & $0$ \\
\end{tabular}
\end{table}
One can see that after the execution of Algorithm \ref{alg7} each rounding group has exactly $2K$ items. We prove the
general case for arbitrary $m$: Every rounding has exactly $2K$ items after the execution of Algorithm \ref{alg7}.
\begin{lem}\label{lem13}
	Let $R_0$ be the rounding function at the beginning of the algorithm. Suppose that every 
	rounding group	$R_{0}^0, \ldots, R_{0}^m$	has exactly $K$ items.
	Then after the execution of the algorithm above the computed rounding function $R^T$ after $T$ insertions
	has $m+1$ rounding groups $R_{T}^0, \ldots,R_{T}^m$ with $|R_{T}^{\ell}| = 2K$ for $\ell = 0, \ldots, m$. 
\end{lem}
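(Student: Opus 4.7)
The plan is to prove the lemma by induction on the outer for-loop iterations of Algorithm \ref{alg7}, tracking the cardinalities of the rounding groups through a single well-chosen invariant. Specifically, I would show that after the $i$-th iteration of the outer loop (for $i = 0, 1, \ldots, m/2$) the cardinalities satisfy $|R^0| = 2K$, $|R^j| = K$ for $1 \leq j \leq m - 2i$, $|R^j| = 2K$ for $m - 2i + 1 \leq j \leq m$, and $|R^j| = 0$ for $j > m$. Setting $i = m/2$ immediately yields $|R^0| = |R^1| = \ldots = |R^m| = 2K$, which is the claim. One can sanity-check the invariant against the displayed table for $m = 6$: it reproduces each of the four rows labelled \emph{insertion}, \emph{union}, \emph{union}, \emph{union}.

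The base case $i = 0$ corresponds to the initial insertion for-loop. By the definition of the insertion operation, each new item enters some group $R^j$ and triggers a cascade in which, for each $k = 1, \ldots, j$, the previously largest element of $R^k$ is reassigned to $R^{k-1}$. The only non-cancelling change is thus at $R^0$, whose cardinality grows by one. After $K$ insertions we obtain $|R^0| = 2K$ and $|R^j| = K$ for $j = 1, \ldots, m$, matching the invariant. For the inductive step, I would analyse the creation phase and the union phase separately. In the creation phase the initial relabelling $R^j \mapsto R^{j+2}$ (for $j \geq 1$), together with the splitting of $R^0$ into $R^{1.5}$ and $R^{2.5}$ of size $K$ each, yields the snapshot $|R^0| = 0$, $|R^1| = |R^2| = 0$, $|R^{1.5}| = |R^{2.5}| = K$, while the shifted groups $R^3, \ldots, R^{m-2i+2}$ keep their size $K$ and $R^{m-2i+3}, \ldots, R^{m+2}$ keep size $2K$. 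The $K$ modified-insertion/create pairs then transfer one item into each of $R^0, R^1, R^2$ per step while removing one item from each of $R^{1.5}, R^{2.5}$, leaving $|R^0| = |R^1| = |R^2| = K$ and the higher-index groups unchanged.

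For the union phase I would then show that the $K$ insertions restore $|R^0|$ to $2K$ and that the $K$ union operations redistribute exactly the $4K$ items currently held in the overflow groups $R^{m+1}, R^{m+2}$ (size $2K$ each) into precisely four target groups whose cardinalities each rise from $K$ to $2K$, while all remaining groups stay fixed. Combining the creation and union bookkeeping, the invariant at stage $i+1$ follows, and closing the induction at $i = m/2$ gives the lemma.

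The main obstacle is this last redistribution argument. The union operation's choice of $j$ is driven dynamically by the rule "largest index with $|R^j| < |R^{j+1}|$" (with $j = m$ as a fallback), so $j$ may change from step to step as the cardinalities evolve. The core technical lemma I would need is an inner induction on the union steps showing that, starting from the post-creation profile produced above, the rule selects indices in such a way that each union step preserves a secondary invariant $|R^j| \leq |R^{j+1}|$ outside a controlled window, and that after exactly $K$ steps the entire contents of $R^{m+1}$ and $R^{m+2}$ have been pushed into the intended lower groups. Once this bookkeeping is carried out, together with Lemma \ref{lem12} guaranteeing feasibility of the LP throughout, the outer induction closes and the stated uniform cardinality $2K$ follows.
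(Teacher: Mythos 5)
Your overall skeleton is exactly the paper's: an induction over the creation/union phases with the invariant ``$|R^0|=2K$, the bottom $m-2i$ groups have $K$ items, the top $2i$ groups have $2K$ items,'' a base case given by the initial $K$ insertions, and a per-phase bookkeeping argument. The base case and your creation-phase accounting (net change $+1$ to each of $R^0,R^1,R^2$ and $-1$ to each of $R^{1.5},R^{2.5}$ per modified-insert/create pair) are correct and match the paper.

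However, the inductive step has a genuine gap: your description of the union phase misreads the union operation, and it is precisely this part that you defer to an unproven ``core technical lemma.'' A single union step moves only two items, namely the largest item of $R^{j}$ into $R^{j-1}$ and the largest item of $R^{j-2}$ into $R^{j-3}$, where $j$ is fixed at the start of the phase as the largest index with $|R^j|<|R^{j+1}|$ (the boundary between the $K$-block and the $2K$-block; in the first phase, where all groups are equal, the fallback takes the topmost group). Hence $K$ union steps move $2K$ items and merge the four \emph{topmost $K$-sized} groups pairwise into two groups of size $2K$ sitting directly below the existing $2K$-block. Your claim that the phase empties the overflow groups $R^{m+1},R^{m+2}$ ``of size $2K$ each'' and pushes their $4K$ items into four target groups raising each from $K$ to $2K$ is wrong on several counts: after the first creation phase $R^{m+1},R^{m+2}$ have only $K$ items each (see the table row ``creation'' for $m=6$); in later phases those groups are part of the untouched $2K$-block and are not involved in the union at all; items flow downward one group at a time, not from the top into four groups; and your item count does not balance ($4K$ items allegedly leaving, $2K$ arriving per phase via insertions, versus the correct net picture in which four $K$-groups collapse into two $2K$-groups). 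Consequently the secondary invariant you propose for the inner induction is aimed at the wrong statement, and as written the inductive step does not close. The fix is simply to observe, as the paper does, that with the phase invariant in force the chosen index $j$ is the top of the $K$-block, so the $K$ union steps turn $R^{j},R^{j-1}$ and $R^{j-2},R^{j-3}$ into two groups of $2K$ items each, which (after discarding the emptied groups) extends the $2K$-block by two and yields the invariant for $i+1$.
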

\begin{proof}
	The algorithm starts with a rounding function that contains exactly $T$ items. After the first $K$ insertion steps
	rounding function $R_K$ is of the form:
	$|R_{K}^0|=2K, |R_{K}^1|=K, \ldots , |R_{K}^m|=K$ since $K$ items are shifted to $R^0$  while the cardinalities 
	of the other rounding groups remain the same.
	During the next $K$ arrivals, the algorithm is in the creation phase. We perform a creation operation after
	each insertion. For each item shifted to $R^0$, two items are shifted to the new created groups $R^1$ and $R^2$.
	At the end of the first creation
	phase, the rounding function $R_{2K}$ satisfies $|R_{2K}^0|=K, |R_{2K}^1|=K, \ldots , |R_{2K}^{m+2}|=K$.
	In the following union phase, the rounding groups $R^{m+2} $, $R^{m+1} $ and $R^{m}$, $R^{m-1}$
	are merged together. For each union operation, one item is shifted from $R^{m+2} $ to $R^{m+1}$
	and another from $R^{m} $ to $R^{m-1} $. Since there are $K$ insert operations in the union phase,
	rounding group $|R_{3K}^0|=2K, |R_{3K}^1|=K, \ldots , |R_{K}^{m-2}|=K, 
	|R_{K}^{m-1}|=2K, |R_{K}^{m}|=2K$. 
	After the next creation and union phase, the number of rounding groups is also $m+1$.
	On the other hand we have two additional groups $|R_{5K}^{m-3}| = |R_{5K}^{m-2}|	= 2K$.
	After $j<m/2$ creation and union phases the rounding
	function $R_{2jK+K}$ is by induction of the form $|R_{2jK+K}^0|=2K, |R_{2jK+K}^1|=K, \ldots , |R_{2jK+K}^{m-2j}|=K, 
	|R_{2jK+K}^{m-2j+1}|=2K, \ldots, |R_{2jK+K}^{m}|=2K$. 
	This can be proved by induction on $j$. For $j = m/2 -1$ we get $|R_{2mK-K}^0|=2K, |R_{2mK-K}^1|=K, 
	|R_{2mK-K}^2| = K, |R_{2mK-K}^3| = 2K, \ldots |R_{2mK-K}^m| = 2K$.
	Therefore, after one additional creation and union phase, we obtain $m+1$ groups $|R_{2mK+K}^{\ell}|=2K$ for
	$\ell = 0, \ldots , m+1$.
\end{proof}
Using Algorithm \ref{alg7} with a starting rounding function $R_T$ that has $m$ rounding groups and the property that 
$|R_{T}^0|= |R_{T}^1|= \ldots = |R_{T}^m|$
produces according to the lemma a rounding function $R_{2T}$ that has also $m$ rounding groups of equal size,
but with cardinality doubled.
Therefore we can use Algorithm \ref{alg7} repetitively to always get suitable rounding functions.
The following algorithm is our final online AFPTAS for the classical bin packing problem. Let $S_t$ be the sum
of all item sizes of items $i_1, \ldots i_t$.
\begin{algo}\label{alg8}
\
\begin{itemize}
\item While $S_t \leq (m+2)(1/ \bar{\delta} +4)$ and $(m+1)$ does not divide $t$ get the new item $i_{t+1}$ and
use the offline AFPTAS \ref{alg6} with an LP of approximation ratio $(1+\bar{\delta})$.
\item Afterwards use Algorithm \ref{alg7} repetitively to obtain a packing for each instance
\end{itemize}
\
\end{algo}
By using the offline AFPTAS for small instances we can make sure that Algorithm \ref{alg7} is started with
a suitable rounding function. Since Algorithm \ref{alg7} always produces a rounding function fulfilling properties (A) to (D)
and $m+1$ divides the current number of items $t$, every rounding group $R^0, \ldots , R^m$ has the same
number of item sizes as the algorithm leaves the while-loop in the first step.

In the following we give a bound for the rounding functions $R_t$ that we produce in every step of
the algorithm. 
It remains to prove that the approximation during the execution of Algorithm \ref{alg7} can be bounded.
Therefore we define a relation between rounding functions.
Let $R$ and $\bar{R}$ be two rounding functions, with $\bar{R}$ having $\bar{m}$ rounding groups for 
some $\bar{m} \in \mathcal{O}(1/\epsilon^2)$.
We can \emph{embed} $R$ into $\bar{R}$ in symbols $R \leq \bar{R}$, if $|R^0| \leq |\bar{R}^0|$ and for 
every item $i \in I \setminus \bar{R}^0$ we 
have $s_{R}(i) \leq s_{\bar{R}}(i)$. A relation $R \leq \bar{R}$ always implies that 
$|R^0| + \mathit{OPT}(I^{R}) \leq |\bar{R}^0| + \mathit{OPT}(I^{\bar{R}})$.
\begin{lem}\label{lem14}
  For each $t\in \mathbb{N^{+}}$, we can embed $R_{t}$ into a function
  $\bar{R}_t$, which fulfills properties (A) to (D). Rounding function $\bar{R}_t$ has parameter
  $c \geq 1/4$ for property (A) and $d \leq 2$ for property (C).
\end{lem}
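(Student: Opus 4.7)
Our plan is to construct $\bar{R}_t$ explicitly by coarsening the current rounding $R_t$. Throughout Algorithm \ref{alg7} (the initial phase of Algorithm \ref{alg8}, handled by the offline scheme of Algorithm \ref{alg6}, already produces a rounding with properties (A)--(D) directly), the non-zero rounding groups of $R_t$ have cardinality either $K$ or $2K$, with only small transient perturbations during a single creation or union step. The operations defined above all preserve this rigid combinatorial structure, and the exact pattern of which groups have which size is tabulated in the table preceding the lemma.

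The construction is as follows. We fix the target super-group size $q = 2K$, scan the non-zero groups $R_t^1, R_t^2, \ldots$ from smallest to largest item size (i.e., from largest index downward) and bundle them into super-groups of $\bar{R}_t$: every group already of size $2K$ becomes a super-group on its own, and every pair of consecutive groups of size $K$ is merged into one super-group of size $2K$ whose items are rounded up to the size of the largest item in the pair (the maximum of the smaller-index group). Property (D) is then automatic because the merging respects the size order and only rounds up, so $s_R(i)\leq s_{\bar R}(i)$ for every item that lies outside $\bar{R}_t^0$. Property (B) also holds by design: all super-groups of index $\geq 2$ have size exactly $2K$.

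For property (C) we set $\bar{R}_t^0$ to contain $R_t^0$ together with any single leftover $K$-group that could not be paired up, so that $|\bar{R}_t^0|\in[2K,4K]$ and hence $d=|\bar{R}_t^0|/|\bar{R}_t^1|\in[1,2]$. For property (A), the number of super-groups is at least half the number of non-zero groups of $R_t$, which is at least $(m+1)/2$; since $m\geq 1/\epsilon^2$ this yields $c\geq 1/2$ in the typical case. In the few intermediate configurations where we are forced to pair $\bar{R}_t^1$ with partially built groups (see below), the bound degrades to the claimed $c\geq 1/4$, which is the worst case absorbed by the margin in (A).

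The main obstacle is the interior of the creation sub-phase, where the subgroups $R_t^{1.5}, R_t^{2.5} \subseteq R_t^0$ interact with the partially populated groups $R_t^1, R_t^2$ and temporarily drive $|R_t^0|$ below $|R_t^1|$, violating (C) for $R_t$ itself. Our remedy is to absorb $R_t^{1.5}\cup R_t^{2.5}$ into $\bar{R}_t^0$ (which is allowed because items in $\bar{R}_t^0$ impose no size constraint on the embedding) and to pair $R_t^1$ with $R_t^2$ into the first super-group $\bar{R}_t^1$, so that property (C) is restored. This is precisely the configuration in which the constant $c$ is tightest. The remaining sub-phases (initial insertions, union steps, and phase boundaries) are then verified by direct inspection of the explicit size pattern, completing the case analysis.
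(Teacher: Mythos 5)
Your overall plan (coarsen $R_t$ by pairing its groups into super-groups of size $2K$ and dump everything too large or unpaired into $\bar R_t^0$) is the same idea as the paper's, but the execution breaks down exactly at the cases the lemma is about. The decisive flaw is your remedy for the creation phase: you keep $R_t^{1.5}\cup R_t^{2.5}$ inside $\bar R_t^0$ and make $\bar R_t^1 = R_t^1\cup R_t^2$. After $a$ creation steps one has $|R_t^1|=|R_t^2|=a$ and $|R_t^0|=2K-a$, so $|\bar R_t^1|=2a$ while $|\bar R_t^0|\geq 2K-a$; the ratio $d=|\bar R_t^0|/|\bar R_t^1|$ is about $K/a$ for small $a$ and drops below $1$ for $a$ close to $K$, so property (C) (which requires $1\le d\le 2$) is not ``restored'' but violated. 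Worse, the items of $R_t^{1.5}$ and $R_t^{2.5}$ are smaller than the items of $R_t^1$ and $R_t^2$ respectively, so your $\bar R_t^0$ contains items strictly smaller than items of $\bar R_t^1$, violating property (D) as well (and with it the matching argument of Lemma \ref{lem10}). The correct move is the opposite one: the $2a$ items of $R_t^1\cup R_t^2$ are among the largest $2K+(t\bmod 2K)$ items and must be absorbed into $\bar R_t^0$, which is exactly what the paper does by defining $\bar R_t$ directly on the sorted item list ($\bar R_t^0$ = largest $2K+(t\bmod 2K)$ items, all further groups exactly $2K$ items), after which (A)--(D) are immediate and only the embedding $R_t\le\bar R_t$ needs the phase-by-phase inspection.

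Two further points in your generic construction also fail as stated. First, your inventory ``all groups have size $K$ or $2K$ up to transient perturbations'' is wrong for the union phase, where four groups persistently have sizes $K+a$ and $K-a$; the pairing that works there is $(K+a)$ with the adjacent $(K-a)$, which your rule (``pair consecutive groups of size $K$'') does not produce and which you relegate to unspecified ``direct inspection.'' Second, during the plain insertion phase there are $m$ (an even number of) $K$-groups and $|R_t^0|=K+b<2K$; your leftover rule then adds nothing to $\bar R_t^0$, giving $d=|\bar R_t^0|/|\bar R_t^1|<1$ and again violating (C) -- one must absorb two whole $K$-groups ($R_t^1,R_t^2$) into $\bar R_t^0$ to keep both the parity and $d\in[1,2]$, as the paper's choice of $|\bar R_t^0|$ automatically does. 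Finally, the constant $c\ge 1/4$ does not come from the creation-phase pairing, as you suggest, but from the early part of a run of Algorithm \ref{alg7}, when almost all groups already have $2K$ items and the number of groups is only about $(m+1)/2\ge \frac{1}{2\epsilon^2}$, combined with $\epsilon\le 1/2$.
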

\begin{proof}
	Since we basically shift largest items to the following rounding group we designed operations 
	insertion, creation and union in a way that property (D) is never being
	violated by any $R_j$ for $j \leq t$. As shown in the proof of Lemma \ref{lem12} the number of rounding groups remains
	constant between $m+1$ at the end of a union phase and $m+3$ during the creation and union phase.
	Suppose the algorithm above has started with a number of items $T<t \leq 2T$ and rounding groups 
	$R_{T}^0, \ldots R_{T}^m$,which are being modified	by the algorithm.
	We define the rounding function $R = \bar{R}_t$ in which $R_t$ can be embeded in the following way: 
	Function $\bar{R}$ has rounding groups $\bar{R}^0, \ldots , \bar{R}^{\lfloor \frac{t}{2K}\rfloor -1}$ with 
	$|\bar{R}^1|= \ldots = |\bar{R}^{\lfloor \frac{t}{2K}\rfloor-1}| = 2K$ and $|\bar{R}^0| = 2K + (t \mod 2K)$.
	Since every rounding group of $\bar{R}$ except $\bar{R}^0$ has the same number of items, the rounding function
	$\bar{R}$ fulfills property (B). Rounding function $\bar{R}$ fulfills property (C) because $2K \leq |\bar{R}^0| 
	\leq 4K$ and $|\bar{R}^1| = 2K$. This implies constant $d \leq 2$.
	We prove property (A) by giving an upper and a lower bound for $\max_{i}\menge{\bar{R}(i)}$ that are both
	in $\Theta(\frac{1}{\epsilon^2})$. Recall that $T/K = m+1$.
	On the one hand we get $\max_{i}\menge{\bar{R}(i)} =\lfloor \frac{t}{2K}\rfloor -1 \leq 
	\frac{t}{2K} -1 \leq \frac{2T}{2K} -1 \leq (m+1)-1 = m \leq \lceil \frac{1}{\epsilon^2} \rceil+1
	\leq \frac{1}{\epsilon^2} +2$.
	On the other hand $\max_{i}\menge{\bar{R}(i)} =\lfloor \frac{t}{2K}\rfloor -1 \geq \lfloor \frac{T}{2K}\rfloor -1
	\geq \lfloor \frac{m+1}{2} \rfloor -1 \geq \frac{m}{2} -1 = \frac{1}{2 \epsilon ^2} -1$. Since $\epsilon \leq 1/2$
	weg get $c \geq 1/4$.
	It remains to prove that we can embed $R_t$ in $\bar{R}$ i.e. $R_t \leq \bar{R}$. 
	Since $R^0$ never exceeds $2K$ items and $2K \leq 4K$ we get $|R_{t}^0| \leq |\bar{R}^0|$.
	According to the proof of Lemma \ref{lem12} and the construction of the creation operation, $R_t$ is during 
	the creation phase
	of the following form: $|R_{t}^0| = 2K-a, |R_{t}^1| = a, |R_{t}^2| = a, |R_{t}^3| = \ldots = |R_{t}^j|= K,
	|R_{t}^{j+1}| = \ldots |R_{t}^{m+2}| = 2K$ for some $a \leq K$ and $j \leq m+2$. Rounding function $\bar{R}$
	has in every rounding group $\bar{R}^j$ for $j\geq 1$ exactly $2K$ items.
	Since property (D) 
	holds for both rounding function $R_t$ and $\bar{R}$, the rounding groups $R_{t}^{j+1}, \ldots ,R_{t}^{m+2}$ contain
	the same items as the last $m+2 - j$ rounding groups of $\bar{R}$.
	Items in these groups are therefore rounded identically.
	For some $\bar{m}$ let $\bar{R}^{\bar{m}}$ be the rounding group which contains the same items as $R_{t}^{j+1}$.
	Since rounding groups $R_{t}^3, \ldots ,R_{t}^j$ each contain exactly $K$ items, the rounding groups 
	$\bar{R}^1, \ldots ,\bar{R}^{\bar{m}- 1}$ contain the items of exactly two rounding groups. Therefore the items
	in $\bar{R}^1, \ldots ,\bar{R}^{\bar{m}- 1}$ are rounded to a smaller size compared to using
	$\bar{R}$. Items that belong to $R_{t}^1$ and $R_{t}^2$
	are contained in $\bar{R}^0$ and by definition do not need to be considered. Hence, any rounding function $R_t$
	which is in an creation phase can be embedded into an $\bar{R}$.
	By construction of the union operation and the proof of Lemma \ref{lem12}, $R_t$ is during the union phase
	of the form $|R_{t}^0| = K+a, |R_{t}^1| = \ldots = |R_{t}^{j-4}|= K,|R_{t}^{j-3}|= K+a, |R_{t}^{j-2}|= K-a,
	|R_{t}^{j-1}|= K+a, |R_{t}^{j}|= K-a, |R_{t}^{j+1}| = \ldots |R_{t}^{m+2}| = 2K$ for some $a \leq K$ and 
	$j \leq m+2$. As shown in the union phase, items in $R_{t}^{j+1},\ldots ,R_{t}^{m+2}$ are rounded equally
	in $\bar{R}$. As the sum of $R_{t}^{j-1}$ and $R_{t}^{j}$ is $2K$ and the sum of $R_{t}^{j-3}$ and $R_{t}^{j-2}$
	is $2K$ the items of $R_{t}^{j-1}$ and $R_{t}^{j}$ and the items in $R_{t}^{j-3}$ and $R_{t}^{j-2}$ 
	belong in $\bar{R}$ to the same rounding group and are hence rounded equally or to a smaller size compared to using
	$\bar{R}$.
	Items in $R_{t}^1,\ldots ,R_{t}^{j-4}$ are each of size $K$ and are rounded equally or to a smaller size 
	than using $\bar{R}$ since the same argument as in the creation phase holds.
\end{proof}
Define $\bar{\epsilon}$ by $\bar{\epsilon} = \frac{1}{16} \epsilon$.
As $\bar{R}_t$ fulfills property (A) to (D), we obtain by Lemma \ref{lem10} and $R_t \leq \bar{R}_t$ the following 
two equations for every $t$:
\begin{enumerate}
\item $\mathit{OPT}(I^{\bar{R}_t}) \leq \mathit{OPT}(I_t)$
\item $|R_{t}^0| \leq |\bar{R}_{t}^0| \leq \frac{2d}{c} \epsilon \mathit{OPT}(I_t) \leq 
\bar{\epsilon}\mathit{OPT}(I_t)$
\end{enumerate}
Note that since $\epsilon \leq 1/2$ we have $\bar{\epsilon} \leq \frac{1}{32}$. Recall that
$R_t \leq \bar{R}_t$ implies that 
$|R^{0}_t| + \mathit{OPT}(I^{R_t}) \leq |\bar{R}^{0}_t| + \mathit{OPT}(I^{\bar{R_t}})$ and that 
$\mathit{LIN}(I^{R_t}) +m \geq \mathit{OPT}(I^{R_t})$ (rounding up a basic feasible solution).
Let us discuss how the methods from the previous section apply to the presented online algorithm.
The procedure improve is implemented by using Algorithm \ref{alg5} in order 
to get an improved solution for instance $I^{R_t}$. 
Algorithm \ref{alg5} is applied using $\bar{\delta}$ as the approximation parameter. 
In the following lemma we prove that
applying Algorithm \ref{alg5} to improve a solution for $I^{R_t}$ impacts the overall approximation 
$\Delta = \bar{\epsilon} + \bar{\delta} + \bar{\epsilon} \bar{\delta}$
in the same way. We define $C = \Delta\mathit{OPT}(I_t)+m$.
\begin{thm}\label{thm15}
	Given a rounding function $R_t$ and an LP defined for $I^{R_t}$. Let $x$ be a fractional solution of the LP with
	$\nor{x}_1 + |R_{t}^0|  \leq (1+ \Delta) \mathit{OPT}(I_t)$ and $\nor{x}_1 \geq 2\alpha(1/\bar{\delta} +1)$ 
	and $\nor{x}_1 = (1+\delta')\mathit{LIN}(I^{R_t})$ for some 
	$\delta'>0$. Let $y$ be	an integral solution of the LP with $\nor{y}_1 \geq (m+2)(1/\bar{\delta} +2)$ and
	corresponding packing $B$ such that $\max_i B_t (i) = \nor{y}_1 + |R_{t}^0| \leq (1+ 2\Delta) \mathit{OPT}(I_t)+m$.
	Suppose $x$ and $y$ have the same number $\leq C$ of non-zero components and for all components $i$ we have
	$y_i \geq x_i$.
	Then using 
	Algorithm \ref{alg5} on $x$ and $y$ returns new solutions $x'$ with $\nor{x'}_1 + |R_{t}^0| \leq (1+ \Delta) 
	\mathit{OPT}(I_t) -
	\alpha$ and integral solution $y'$ with corresponding packing $B'_t$ such that
	$$\max_i B'_t (i) \leq (1+ 2 \Delta) \mathit{OPT}(I_t) +m- \alpha.$$
	Further, both solutions $x'$ and $y'$ have the same number $\leq C$ of non-zero components and for each component we have
	$x'_i \leq y'_i$.
\end{thm}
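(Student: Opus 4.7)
The plan is to reduce the theorem to the analysis of Algorithm~\ref{alg5}, i.e.\ to Theorem~\ref{thm8} and Corollary~\ref{cor9}, instantiated with approximation parameter $\bar{\delta}$ and additive parameter $C = \Delta\mathit{OPT}(I_t)+m$, and then to translate the resulting conclusions about $\nor{x'}_1$ and $\nor{y'}_1$ into conclusions about $\max_i B'_t(i) = \nor{y'}_1 + |R_t^0|$ using the inequalities $\mathit{OPT}(I_t) \leq \mathit{OPT}(I^{R_t}) + |R_t^0|$ and $|R_t^0| + \mathit{OPT}(I^{R_t}) \leq (1+\bar{\epsilon})\mathit{OPT}(I_t)$ that come from the embedding $R_t \leq \bar{R}_t$ in Lemma~\ref{lem14}, together with $\mathit{LIN}(I^{R_t}) + m \geq \mathit{OPT}(I^{R_t})$.

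First I would check the hypotheses required by Algorithm~\ref{alg5}. The conditions $\nor{x}_1 \geq 2\alpha(1/\bar{\delta}+1)$, $\nor{y}_1 \geq (m+2)(1/\bar{\delta}+2)$, the shared support of size $K \leq C$, and componentwise $y_i \geq x_i$ are given directly. The bound $C \geq \bar{\delta}\,\mathit{LIN}(I^{R_t})$ needed for Theorem~\ref{thm8} follows from $\Delta \geq \bar{\delta}$ and $\mathit{LIN}(I^{R_t}) \leq \mathit{OPT}(I_t)$, while the stronger $C \geq \delta'\,\mathit{LIN}(I^{R_t})$ needed for Corollary~\ref{cor9} follows by rearranging $(1+\delta')\mathit{LIN}(I^{R_t}) + |R_t^0| \leq (1+\Delta)\mathit{OPT}(I_t)$ together with $\mathit{LIN}(I^{R_t}) \geq \mathit{OPT}(I_t) - |R_t^0| - m$. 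For the crucial bound $\nor{y}_1 \leq \mathit{LIN}(I^{R_t}) + 2C$, I would start from $\nor{y}_1 + |R_t^0| \leq (1+2\Delta)\mathit{OPT}(I_t) + m$, absorb $|R_t^0|$ via $\mathit{OPT}(I_t) \leq \mathit{OPT}(I^{R_t}) + |R_t^0|$ and then apply $\mathit{OPT}(I^{R_t}) \leq \mathit{LIN}(I^{R_t}) + m$, arriving at $\nor{y}_1 \leq \mathit{LIN}(I^{R_t}) + 2(\Delta\mathit{OPT}(I_t)+m) = \mathit{LIN}(I^{R_t}) + 2C$.

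Next I would split on the size of $\delta'$. If $\delta' \geq \bar{\delta}$, Corollary~\ref{cor9} gives $\nor{x'}_1 \leq \nor{x}_1 - \alpha$, so $\nor{x'}_1 + |R_t^0| \leq (1+\Delta)\mathit{OPT}(I_t) - \alpha$ follows from the hypothesis. If $\delta' < \bar{\delta}$, Theorem~\ref{thm8} gives $\nor{x'}_1 \leq (1+\bar{\delta})\mathit{LIN}(I^{R_t}) - \alpha$, and the definition $\Delta = \bar{\epsilon}+\bar{\delta}+\bar{\epsilon}\bar{\delta}$ was made exactly so that $(1+\bar{\delta})\mathit{LIN}(I^{R_t}) + |R_t^0| \leq (1+\bar{\delta})(\mathit{OPT}(I^{R_t}) + |R_t^0|) \leq (1+\bar{\delta})(1+\bar{\epsilon})\mathit{OPT}(I_t) = (1+\Delta)\mathit{OPT}(I_t)$, again yielding the desired fractional bound. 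In both branches the integer dichotomy of Theorem~\ref{thm8}/Corollary~\ref{cor9} applies: either $\nor{y'}_1 = \nor{y}_1 - \alpha$, in which case $\nor{y'}_1 + |R_t^0|$ drops by $\alpha$ from the given bound on $\max_i B_t(i)$, or $\nor{y'}_1 \leq \nor{x'}_1 + C$, in which case $\nor{y'}_1 + |R_t^0| \leq \nor{x'}_1 + |R_t^0| + C \leq (1+\Delta)\mathit{OPT}(I_t) - \alpha + \Delta\mathit{OPT}(I_t) + m = (1+2\Delta)\mathit{OPT}(I_t) + m - \alpha$. Preservation of at most $C$ non-zero components common to $x'$ and $y'$ with $y'_i \geq x'_i$ is part of the conclusion of the invoked result.

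The main obstacle is purely the bookkeeping: the single value $C = \Delta\mathit{OPT}(I_t)+m$ has to simultaneously (i)~upper bound $\delta'\,\mathit{LIN}(I^{R_t})$, (ii)~witness $\nor{y}_1 \leq \mathit{LIN}(I^{R_t}) + 2C$, and (iii)~absorb the integral ``$+C$'' term from the $y'$ dichotomy so that the final bound comes out to $(1+2\Delta)\mathit{OPT}(I_t) + m - \alpha$. All three hinge on the interplay between the additive $m$ from rounding up a basic feasible LP solution and the multiplicative loss $\Delta = \bar{\epsilon}+\bar{\delta}+\bar{\epsilon}\bar{\delta}$ from the embedding $R_t \leq \bar{R}_t$; verifying that these exactly close up is the only non-routine step.
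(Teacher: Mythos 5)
Your proposal is correct and follows essentially the same route as the paper's proof: you verify the hypotheses of Theorem \ref{thm8} and Corollary \ref{cor9} with $C=\Delta\,\mathit{OPT}(I_t)+m$ (including the key bound $\nor{y}_1\leq \mathit{LIN}(I^{R_t})+2C$), split on $\delta'$ versus $\bar{\delta}$, and translate the LP bounds into packing bounds via $\mathit{OPT}(I_t)\leq \mathit{OPT}(I^{R_t})+|R_t^0|$, $\mathit{OPT}(I^{R_t})+|R_t^0|\leq(1+\bar{\epsilon})\mathit{OPT}(I_t)$ and $\mathit{OPT}(I^{R_t})\leq\mathit{LIN}(I^{R_t})+m$, handling the two-branch dichotomy for $y'$ exactly as the paper does. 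The only slip is the shortcut $\mathit{LIN}(I^{R_t})\leq\mathit{OPT}(I_t)$ when checking $C\geq\bar{\delta}\,\mathit{LIN}(I^{R_t})$ (the intermediate rounding $R_t$ need not satisfy (A)--(D), so one only gets $\mathit{LIN}(I^{R_t})\leq(1+\bar{\epsilon})\mathit{OPT}(I_t)$ through the embedding), but since $\Delta\geq\bar{\delta}(1+\bar{\epsilon})$ the required inequality still follows, just as in the paper.
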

\begin{proof}
	As shown in the following, Algorithm \ref{alg5} maintains the property that $x$ and $y$ have
	the same number of non-zero components and that $x_i \leq y_i$ since we can use Theorem \ref{thm8} and Corollary \ref{cor9}.
	By condition we have $\max_i B_t (i) = \nor{y}_1 + |R_{t}^0| \leq (1+ 2\Delta) \mathit{OPT}(I_t)+m$. Since
	$ \mathit{OPT}(I_t) \leq  \mathit{OPT}(I^{R}_t) + |R_{t}^0|$ we obtain for the integral solution $y$ that
	$\nor{y}_1 \leq 2\Delta \mathit{OPT}(I_t)+m + \mathit{OPT}(I^{R}_t) \leq  2\Delta \mathit{OPT}(I_t)+m + \mathit{LIN}(I^{R}_t) +m$.
	Hence by definition of $C$ we get $\nor{y}_1 \leq  \mathit{LIN}(I^{R}_t) + 2C$. This is one requirement to use Theorem \ref{thm8} 
	or Corollary \ref{cor9}.
	We look at the cases separately where on the one hand $\delta' \leq \bar{\delta}$ and on the other hand 
	$\delta' > \bar{\delta}$. 
	
	Case 1, $\delta' \leq \bar{\delta}$:
	At first we give an upper bound for $\mathit{LIN(I^{R_t})}$: We get $\mathit{LIN(I^{R_t})} \leq \mathit{OPT}(I^{R_t}) \leq 
	\mathit{OPT}(I^{R_t}) +  |R_{t}^0| \leq
	\mathit{OPT}(I^{\bar{R_t}}) + |\bar{R_{t}}^0| \leq (1+ \bar{\epsilon}) \mathit{OPT}(I_t)$ using
	that $R_t \leq \bar{R_{t}}$. This implies
	that $\bar{\delta} \mathit{LIN(I^{R_t})} \leq \bar{\delta} \mathit{OPT}(I^{R_t}) \leq
	 (\bar{\delta} + \bar{\delta}\bar{\epsilon}) \mathit{OPT}(I_t) < C$.
	Algorithm \ref{alg5} returns by Theorem \ref{thm8} a solution $x'$ with $\nor{x'}_1 \leq 
	(1+\bar{\delta})\mathit{LIN(I^{R_t})}-\alpha$ and an integral solution $y'$ with 
	$\nor{y'}_1 \leq \nor{x'}_1 + C$ or $\nor{y'}_1 \leq \nor{y}_1 - \alpha$.
	For the term $\nor{x'}_1 + |R_{t}^0|$ we get  $\nor{x'}_1 + |R_{t}^0| \leq 
	(1+ \bar{\delta}) \mathit{OPT}(I^{R_t}) - \alpha + |R_{t}^0|$.
	Using that $R_t$ can be embedded in $\bar{R_t}$ we get $|R_{t}^0| + \mathit{OPT}(I^{R_t}) \leq |\bar{R_t}^0| + 
	\mathit{OPT}(I^{\bar{R_t}}) \leq \mathit{OPT}(I_t) + \bar{\epsilon} \mathit{OPT}(I_t)$. 
	Therefore $\nor{x'}_1 + |R_{t}^0| \leq \bar{\delta} \mathit{OPT}(I^{R_t})
	- \alpha + \mathit{OPT}(I_t) + \bar{\epsilon} \mathit{OPT}(I_t) \leq 
	(\bar{\delta} + \bar{\delta}\bar{\epsilon}) \mathit{OPT}(I_t) - \alpha + (1+ \bar{\epsilon}) \mathit{OPT}(I_t)
	\leq (1 + \Delta) \mathit{OPT}(I_t) - \alpha$.
	In the case where $\nor{y'}_1 \leq \nor{x'}_1 + C$ we can bound the number of bins of the new packing $B'$ 
	by	$\max_i B'_t (i) = \nor{y'}_1 + |R_{t}^0| \leq \nor{x'}_1+ |R_{t}^0| + C  
	\leq (1 + \Delta) \mathit{OPT}(I_t) - \alpha + C = (1 + 2 \Delta) \mathit{OPT}(I_t) +m - \alpha$.
	In the case that $\nor{y'}_1 \leq \nor{y}_1 - \alpha$ we obtain
	$\max_i B'_t (i) = \nor{y'}_1 + |R_{t}^0| \leq \nor{y}_1 - \alpha + |R_{t}^0| =  \max_i B_t (i) - \alpha
	\leq (1+ 2\Delta) \mathit{OPT}(I_t) +m- \alpha$.
	
	Case 2, $\delta' > \bar{\delta}$: 
	By condition we have $\nor{x}_1 + |R_{t}^0| \leq (1+ \Delta) \mathit{OPT}(I_t)$. Since
	$ \mathit{OPT}(I_t) \leq  \mathit{OPT}(I^{R_t}) + |R_{t}^0|$ we obtain for the solution $x$ that
	$\nor{x}_1 \leq \Delta \mathit{OPT}(I_t) + \mathit{OPT}(I^{R_t}) \leq  \Delta \mathit{OPT}(I_t) + \mathit{LIN}(I^{R_t}) +m$.
	Hence by definition of $C$ this implies $\nor{x}_1 \leq  \mathit{LIN}(I^{R_t}) + C$ and therefore $\delta' \mathit{LIN}(I^{R_t}) < C$,
	which fulfills the requirements of Corollary \ref{cor9}.
	Using Algorithm \ref{alg5} on solutions $x$ with $\nor{x}_1 = (1+\delta')\mathit{LIN}(I^{R_t})$ and $y$ with 
	$\nor{y}_1 \leq  \mathit{LIN}(I^{R_t}) + 2C$
	we obtain by Corollary \ref{cor9} a fractional solution $x'$ with
	$\nor{x'}_1 \leq \nor{x}_1 - \alpha$ and an integral solution $y'$ with either
	$\nor{y'}_1 \leq \nor{y}_1 - \alpha$ or $\nor{y'}_1 \leq \nor{x}_1 + C - \alpha$.
	So for the new packing $B'$ we can guarantee, that
	$\max_i B'_t (i) = \nor{y'}_1 + |R_{t}^0| \leq \nor{y}_1 - \alpha + |R_{t}^0| =  \max_i B_t (i) - \alpha
	\leq (1+ 2\Delta) \mathit{OPT}(I_t) +m - \alpha$
	if $\nor{y'}_1 \leq \nor{y}_1 - \alpha$. If $\nor{y'}_1 \leq \nor{x}_1 + C - \alpha$, we can guarantee
	that $\max_i B'_t (i) = \nor{y'}_1 + |R_{t}^0| \leq \nor{x}_1 + |R_{t}^0| + C - \alpha \leq 
	(1+ \Delta) \mathit{OPT}(I_t)+ C - \alpha \leq (1+ 2\Delta) \mathit{OPT}(I_t) +m 
	- \alpha$.
	Furthermore we know by Corollary \ref{cor9} that $x'$ and $y'$ have at most $C$ 
	non-zero components.
\end{proof}
Set $\bar{\delta} = \bar{\epsilon}$. Then $\Delta = 2 \bar{\epsilon} + \bar{\epsilon}^2 = \mathcal{O}(\epsilon)$.
We get the central theorem:
\begin{thm}\label{thm16}
  Algorithm \ref{alg8} is a fully robust AFPTAS for the bin packing problem.
\end{thm}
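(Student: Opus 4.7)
My plan is to prove the theorem by establishing three separate guarantees for Algorithm~\ref{alg8}: (i) after every arrival $t$ the current packing $B_t$ uses at most $(1+\mathcal{O}(\epsilon))\mathit{OPT}(I_t)+\mathcal{O}(1/\epsilon^2)$ bins, (ii) the migration factor per arrival is bounded by a polynomial in $1/\epsilon$, and (iii) the per-step running time is polynomial in $1/\epsilon$ and $t$. Property (i) together with (iii) gives an AFPTAS, and property (ii) gives the ``fully robust'' qualifier.

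The heart of the argument is (i), which I would prove by induction on $t$. The base case is covered by the while-loop of Algorithm~\ref{alg8}: as long as $S_t=\mathcal{O}(1/\epsilon^3)$ we invoke Algorithm~\ref{alg6}, which by Lemma~\ref{lem10} and Lemma~\ref{lem11} gives a packing of size $(1+\epsilon'+\bar\delta)\mathit{OPT}(I_t)+\mathcal{O}(1/\epsilon^2)$; this also guarantees that when the loop is first exited, the rounding has equal-sized groups as required for Algorithm~\ref{alg7}, and the LP and integer solutions $x,y$ satisfy all size preconditions of Theorem~\ref{thm15}. For the inductive step I would maintain the two invariants of Theorem~\ref{thm15}: $\nor{x}_1 + |R^0_t|\leq (1+\Delta)\mathit{OPT}(I_t)$ for the fractional solution and $\max_i B_t(i)\leq(1+2\Delta)\mathit{OPT}(I_t)+m$ for the integer solution, with $\Delta=2\bar\epsilon+\bar\epsilon^2=\mathcal{O}(\epsilon)$ and $\bar\delta=\bar\epsilon$. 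The inductive step splits into three cases according to where Algorithm~\ref{alg7} currently is. In a pure insertion step we call \textsf{improve}$(1,\cdot)$, which by Theorem~\ref{thm15} drops $\nor{x}_1$ and $\nor{y}_1$ by $1$; the subsequent insertion leaves $\nor{x}_1,\nor{y}_1$ unchanged by Lemma~\ref{lem12}, while $\mathit{OPT}$ may increase by one, so the invariants are restored. In a creation step we call \textsf{improve}$(1,\cdot)$, apply the modified insertion (which is size-preserving in the LP), and then the creation operation, which moves two items out of $R^0$ but adds two unit-weight configurations, so $\nor{x}_1+|R^0_t|$ is unchanged while $\mathit{OPT}$ grows by one. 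In a union step we call \textsf{improve}$(2,\cdot)$ followed by insertion and a union, which adds one configuration but does not change $|R^0|$; again the invariants are restored. Throughout I would also verify the size preconditions of Theorem~\ref{thm15} (notably $\nor{x}_1\geq 2\alpha(1/\bar\delta+1)$ and $\nor{y}_1\geq (m+2)(1/\bar\delta+2)$), which are exactly enforced by the while-loop condition $S_t\leq(m+2)(1/\bar\delta+4)$ of Algorithm~\ref{alg8}.

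For (ii) I would add up the sizes of items moved by each sub-routine invoked in one arrival. Insertion, modified insertion, creation and union each move $\mathcal{O}(m)=\mathcal{O}(1/\epsilon^2)$ items, contributing total size at most $\mathcal{O}(1/\epsilon^2)$ since each large item has size at most $1$. The call \textsf{improve}$(\alpha,\cdot)$ with $\alpha\in\{1,2\}$, $\bar\delta=\mathcal{O}(\epsilon)$, and $m=\mathcal{O}(1/\epsilon^2)$ moves $\mathcal{O}((m+\alpha)/\bar\delta)=\mathcal{O}(1/\epsilon^3)$ integer units by Theorem~\ref{thm8}/Corollary~\ref{cor9}, which again is $\mathcal{O}(1/\epsilon^3)$ total size. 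Dividing by the size of the incoming item, which is $\geq\epsilon/2$, gives a migration factor of $\mathcal{O}(1/\epsilon^4)$ (or $\mathcal{O}(1/\epsilon^3)$ when the arriving item has size $\Omega(1)$). For (iii) the per-step cost is dominated by solving the max-min resource sharing LP to accuracy $1+\bar\delta/2$ with $m=\mathcal{O}(1/\epsilon^2)$ rows and at most $C=\mathcal{O}(\epsilon\mathit{OPT}(I_t))+\mathcal{O}(1/\epsilon^2)$ active columns, which by \cite{grigoriadis2001approximate} is polynomial in $1/\epsilon$ and $t$; the remaining bookkeeping in the operations is linear in $m$.

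The main obstacle I expect is the careful verification of the inductive step under the case distinctions, because the approximation ratio must be tracked simultaneously against two moving quantities: the LP lower bound $\mathit{LIN}(I^{R_t})$, which jumps whenever the rounding function changes via creation or union, and the true optimum $\mathit{OPT}(I_t)$, which grows with the instance. The embedding $R_t\leq\bar{R}_t$ from Lemma~\ref{lem14} is the key tool that decouples these two, because it yields the uniform bound $|R^0_t|+\mathit{LIN}(I^{R_t})\leq(1+\bar\epsilon)\mathit{OPT}(I_t)$ regardless of the operation being performed. A secondary subtlety is that the approximation ratio $\delta'$ of the current $x$ against $\mathit{LIN}(I^{R_t})$ may exceed $\bar\delta$ after an operation modifies the LP (a new column reduces $\mathit{LIN}$), but Corollary~\ref{cor9} is designed exactly for this regime and still guarantees that \textsf{improve} decreases $\nor{x}_1$ by $\alpha$. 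Once these two points are combined, case~2 of Theorem~\ref{thm15} closes the induction, and the three guarantees together yield the claim.
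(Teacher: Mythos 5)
Your plan follows essentially the same route as the paper's proof: the base case via the offline AFPTAS while $S_t$ is small, an induction maintaining the invariants of Theorem~\ref{thm15} with the case split insertion / creation / union matched to improve$(1,\cdot)$ and improve$(2,\cdot)$, the embedding of Lemma~\ref{lem14} and Corollary~\ref{cor9} to handle $\delta'>\bar\delta$, and the same migration ($\mathcal{O}((m+\alpha)/\bar\delta)$ repacked bins divided by item size $\geq\epsilon/2$) and running-time accounting. Two small points to tighten when writing it out: the invariants are restored because the improve call's decrement ($-\alpha$) exactly offsets the $+1$ or $+2$ added by the operations to $\nor{y}_1+|R^0_t|$ (monotonicity of $\mathit{OPT}$ suffices; it need not grow), and the induction must also carry the paper's properties that $x_i\leq y_i$ and that $x,y$ share at most $\Delta\,\mathit{OPT}(I_t)+m$ non-zero components, including the two extra components created by a creation/union step, which Algorithm~\ref{alg7} absorbs by enlarging the reduction threshold in step~6 of Algorithm~\ref{alg5}.
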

\begin{proof}
	While instances are small Algorithm \ref{alg8} uses the offline AFPTAS (see Algorithm \ref{alg6}). Using Algorithm \ref{alg6}, we get a packing
	$B_t$ for instance $I_t$ that uses at most
	$\max_i B_t(i) \leq (1+ \epsilon' + \bar{\delta})\mathit{OPT}(I_t)+ \frac{1}{\epsilon^2}+1$ bins, where 
	$\epsilon' \leq 4 \epsilon < \bar{\epsilon}$. 
	Since the instance is small
	the migration factor is bounded although we might repack every single item. Let $\tau$ be the first index where
	the algorithm leaves the while-loop. By condition we are in the while loop
	while $S_t \leq (m+2)(1/ \bar{\delta} +4)$ and $t$ does not divide $m+1$. Hence $S_{\tau} \leq 
	(m+2)(1/ \bar{\delta} +4) +m = \mathcal{O}(1/\epsilon^3)$. The migration factor for instances $I_t$ with 
	$t \leq \tau$ is therefore bounded by $\frac{2}{\epsilon} S_t = \mathcal{O}(1/\epsilon^4)$ since every arriving item
	has size at least $\epsilon/2$. The approximation guarantee for small instances is bounded by 
	$\max_i B_t(i) \leq (1+\bar{\delta} + \bar{\epsilon}) \mathit{OPT}(I_t) +m +1$.
	In the following we consider large instances $I_t$ with $t \geq \tau$.
	
	{\bf Full robustness:} The migration factor for some consecutive packings $B_t$ and $B_{t+1}$ is bounded by
	the migration of the improve-call plus the migration of an insertion and an union operation. 
	The operations create requires no shifting of items at all.
	As proven in the previous section, an improve-call changes at most $\mathcal{O}(m/ \bar{\delta})$ components of a solution $y$.
	Since the arriving item is large with size $\geq \epsilon/2$, changing a complete configuration 
	requires migration of at most $\mathcal{O}(1 /\epsilon)$.
	Combined this results in a migration factor for the improve-call $\mathcal{O}(m/\Delta^2) = \mathcal{O}(1/\epsilon^4)$ 
	if we use Algorithm \ref{alg5}.
	By construction of the insertion operation it shifts in worst case one item per 
	rounding group. Having $\mathcal{O}(1/ \epsilon^2)$ rounding groups this gives a migration factor of at most 
	$\mathcal{O}(1/ \epsilon^3)$.
	Therefore the complete migration is bounded by $\mathcal{O}(1/\epsilon^4)$.
	
	{\bf Running time:} The running time is dominated by the max-min resource sharing (see Algorithm \ref{alg5}) and 
	the number of non-zero components. The number of non-zero components is bounded by $\Delta \mathit{OPT}(I_t) +m
	\leq \Delta t + \frac{1}{\epsilon^2} +1$
	and is therefore polynomial in $\frac{1}{\epsilon}$ and $t$. As the running time for the max-min resource sharing
	is also polynomial in $\frac{1}{\epsilon}$ (see \cite{grigoriadis2001approximate}), 
	the running time is clearly polynomial in $t$ and $\frac{1}{\epsilon}$.
	
	{\bf Approximation:} We prove by induction that four properties hold for any packing $B_t$
	and corresponding LP solutions. Given fractional solutions
	$x$ and integral solution $y$ of the LP defined by instance $I^{R_t}$. 
	Properties (2)-(4) are
	necessary to apply Theorem \ref{thm16} and property (1) provides the wished approximation ratio for the bin packing problem.
	\begin{enumerate}
		\item packing $B_t$ uses at most $(1+ 2\Delta)\mathit{OPT}(I_t) +m$ bins
		\item $\nor{x}_1 + |R_{t}^0|  \leq (1+ \Delta) \mathit{OPT}(I_t)$
		\item for every configuration $i$ we have $x_i \leq y_i$
		\item $x$ and $y$ have the same number of non-zero components and that number is bounded by 
		$\Delta \mathit{OPT}(I_t) +m$
	\end{enumerate}
	To apply Theorem \ref{thm15} we furthermore need a guaranteed minimal size for $\nor{x}_1$ and $\nor{y}_1$.
	According to Theorem \ref{thm15} integral solution $y$ needs $\nor{y}_1 \geq (m+2)(1/ \bar{\delta} +2)$ and 
	$\nor{x}_1 \geq 4 (1/ \bar{\delta} +1)$ as we set at most $\alpha = 2$.
	By condition of the while-loop we know that any instance $S_t \geq (m+2)(1/ \bar{\delta} +6)$. 
	Since $\mathit{OPT}(I_t) \leq \nor{y}_1 + |R_{t}^0| \leq \nor{y}_1 + \bar{\epsilon} \mathit{OPT}(I_t)$ we get
	$\nor{y}_1 \geq (1-\bar{\epsilon})\mathit{OPT}(I_t) = (1-\bar{\delta})\mathit{OPT}(I_t)$. 
	By $\mathit{OPT}(I_t) \geq (m+2)(1/ \bar{\delta} +4)$ we
	finally get that $\nor{y}_1 \geq (1-\bar{\delta}) (m+2)(1/ \bar{\delta} +6) \geq 
	(m+2)(1/ \bar{\delta} +6) - (m+2)(1 + 6\bar{\delta}) \geq (m+2)(1/ \bar{\delta} +6) - 4(m+2) = 
	(m+2)(1/ \bar{\delta} +2)$. Since $\mathit{OPT}(I_t) \leq \nor{x}_1 + m + |R_{t}^0|$ we obtain by the same argument
	that $\nor{x}_1 \geq (m+2)(1/ \bar{\delta} +2) -m \geq (m+2)(1/ \bar{\delta} +1)$ and since $m = 1/\epsilon 
	\stackrel{\epsilon \leq \bar{\delta}}{\geq} 1/ \bar{\delta}
	\geq 2$ we get that $\nor{x}_1 \geq 4 (1/ \bar{\delta} +1)$.
	
	In the case that $t = \tau$ we have by the offline algorithm that the number of non-zero components $= m+1
	\leq \Delta \mathit{OPT}(I_t)+m$ since $\mathit{OPT}(I_t) \geq S_t > 1/ \Delta$. The number of used bins is bounded
	by $\max_i B_t (i) < (1+ \bar{\delta} + \bar{\epsilon}) \mathit{OPT}(I_t) +m +1 < (1+ 2\Delta)\mathit{OPT}(I_t) +m$
	(note $\epsilon' < \bar{\epsilon}$) and property (2) is fulfilled for the same reason. Furthermore in the 
	offline algorithm every component $x_i$ is rounded up to obtain the integral component $y_i$. 
	Therefore all properties (1)-(4) are fulfilled for $t \leq \tau$ and the induction basis holds.
	Now let $B_t$ be a packing for $t > \tau$ for instance $I_t$ with solutions $x$ and $y$ of the LP defined by 
	$I^{R_t}$. Suppose
	by induction that property (1)-(4) hold. We have to prove that these properties also hold for $B_{t+1}$ and the
	corresponding solutions of the LP defined by $I^{R_{t+1}}$. Packing $B_{t+1}$ is created by using an
	improve call for $x$ and $y$ followed by an insertion operation and optional, an union or a creation operation.
	\\{\bf improve:} Let $x'$ be the resulting fractional solution of Algorithm \ref{alg5}, let $y'$ be the resulting integral solution
	of Algorithm \ref{alg5} and let $B'_t$ be the corresponding packing. Properties (1)-(4) are fulfilled 
	for $x$, $y$ and $B_t$ by induction hypothesis. Hence we can use Theorem \ref{thm15}. 
	By Theorem \ref{thm15} properties (1)-(4) are then still fulfilled for $x'$, $y'$ and $B'_t$ and moreover we get
	$\nor{x'}_1 + |R_{t}^0| \leq (1+ \Delta) \mathit{OPT}(I_t)- \alpha$ and 
	$\max_i B'_t (i) \leq (1+ 2 \Delta) \mathit{OPT}(I_t) + m - \alpha$ for $\alpha = 2$ or $\alpha = 2$.
	\\{\bf operations:} First we take a look at how the operations modify $\nor{x'}_1$, $\nor{y'}_1$ and $|R_{t}^0|$.
	By construction of the insertion operation, the LP solutions $x'$ and $y'$ are not modified while $|R_{t}^0|$
	increases by $1$. By construction of the creation operation $\nor{x'}_1$ and $\nor{y'}_1$ are increased by $2$
	and $|R_{t}^0|$ decreases by $2$. By construction of the union operation, $\nor{x'}_1$ and $\nor{y'}_1$ 
	are increased by $1$ and $|R_{t}^0|$ remains constant.
	Property (1): Let $x''$ be the fractional solution and $y''$ be the integral solution after using operations 
	on $x'$ and $y'$. Packing $B_{t+1}$ equals $\max_i B_{t+1} = \nor{y''}_1 + |R^{0}_{t+1}|$. According to the operations
	an insertion operation yields $\max_i B_{t+1} = \nor{y'}_1 + |R^{0}_{t}| +1 = \max_i B'_t +1$.
	An insertion operation followed by an union operation yields $\max_i B_{t+1} = \nor{y'}_1 + 1+ |R^{0}_{t}| +1= 
	\max_i B'_t +2$ and an insertion operation followed by a creation operation
	yields $\max_i B_{t+1} = \nor{y'}_1 + 2 +|R^{0}_{t}| -1 = \max_i B'_t +1$. Algorithm \ref{alg7} is designed that in 
	the union phase $\max_i B'_t \leq (1+ 2 \Delta) \mathit{OPT}(I_t) + m - 2$ since there is an improve call with 
	$\alpha =2$ and otherwise
	$\max_i B'_t \leq (1+ 2 \Delta) \mathit{OPT}(I_t) + m - 1$ since there is an improve call with $\alpha = 1$.
	Therefore we have in any case that $B_{t+1}$ uses at most $(1+ 2\Delta)\mathit{OPT}(I_{t}) +m
	\leq (1+ 2\Delta)\mathit{OPT}(I_{t+1}) +m $ bins.
	The proof that property (2) holds is symmetric since $\nor{x'}_1$ increases in the same way as $\nor{y'}_1$ 
	and $\nor{x'}_1 + |R_{t}^0|  \leq (1+ \Delta) \mathit{OPT}(I_t) - \alpha$ for $\alpha=1$ or $\alpha=2$.
	For property (3) note that in the operations a configuration $x_i$ of the fractional solution is increased by $1$ 
	if and only
	if a configuration $y_i$ is increased by $1$. Therefore the property that for all configurations $x''_i \leq y''_i$ 
	retains from $x'$ and $y'$. By Theorem \ref{thm15} the number of non-zero components of $x'$ and $y'$ is bounded 
	by $\Delta \mathit{OPT}(I_t) +m \leq \Delta \mathit{OPT}(I_{t+1}) +m$. 
	By construction of the creation operation and union operation $x''$ and $y''$
	might have two additional non-zero components. But since these are being reduced by Algorithm \ref{alg5} (note that 
	we increased the number of components being reduced in step 6 by $2$), the LP solutions $x''$ and $y''$ 
	have at most $\Delta \mathit{OPT}(I_{t+1}) +m$ non-zero components which proves property (4).
	\end{proof}
\subsection{Running Time}
Storing items that are in the same rounding group in a heap structure, we can perform each operation 
(insertion, creation and union) in time $\mathcal{O}(\frac{1}{\epsilon^2}\log(\epsilon^2 t))$. 
Furthermore Algorithm \ref{alg5} needs to look through all non-zero components. The number of non-zero components is bounded
by $\mathcal{O}(\epsilon \mathit{OPT}) = \mathcal{O}(\epsilon t)$.
Main part of the complexity lies in finding an approximate LP solution.
Let $M(n)$ be the time to solve a system of $n$ linear equations. The running time of
max-min resource sharing is then in our case $\mathcal{O}(M(\frac{1}{\epsilon^2}) \frac{1}{\epsilon^4} + 
\frac{1}{\epsilon^7})$
(see \cite{jansen2004approximation}).
Therefore the running time of the Algorithm is $\mathcal{O}(M(\frac{1}{\epsilon^2}) \frac{1}{\epsilon^4} +
\epsilon t + \frac{1}{\epsilon^2}\log(\epsilon^2 t))$.
\section{Conclusion}
Based on approximate solutions, we developed an analogon to a theorem of Cook et al. \cite{cook1986sensitivity}. 
Our improvement
helps to develop online algorithms with a migration factor that is bounded by a polynomial
in $1/ \epsilon$, while algorithms based on Cook's theorem usually have exponential migration factors. 
We therefore applied our techniques to the famous online bin packing problem.
This led to the creation of the first fully robust AFPTAS for an NP-hard online optimization problem.
The migration factor of our algorithm is of size $\mathcal{O}(\frac{1}{\epsilon^4})$, which is a notable 
reduction compared to previous robust algorithms.
When a new item arrives at time $t$ the algorithm needs
running time of $\mathcal{O}(M(\frac{1}{\epsilon^2}) \frac{1}{\epsilon^4} +
\epsilon t + \frac{1}{\epsilon^2}\log(\epsilon^2 t))$, where $M(n)$ is the time to solve a system 
of $n$ linear equations.
Any improvement to the max-min resource sharing algorithm based on the special structure of bin packing would 
immediately speed up our online algorithm. We believe that there is room to reduce
the running time and the migration factor. Note for example that we give only a very rough bound for the
migration factor as the algorithm repacks $\mathcal{O}(\frac{1}{\epsilon^3})$ bins. Repacking these bins in a more
carefully way might lead to a smaller migration factor.
An open question is the existence of an 
AFPTAS with a constant migration factor that is 
independent of $\epsilon$.
We mention in closing that the LP/ILP-techniques presented are very general and hence can possibly be used
to obtain
fully robust algorithms for several other online optimization problems as well (i.e. multi-commodity flow, 
strip packing, scheduling with malleable/moldable task or scheduling with resource constraints).

\bibliographystyle{plain}
\bibliography{library}

\end{document}